\newcommand{\highlighted}{1}
    \newcommand{\remove}[1]{{\color{red}\stkout{#1}}}
    \newcommand{\remove}[1]{}
\newcommand{\U}{\mathrm{U}}
\newcommand{\Ort}{\mathrm{O}}
\newcommand{\T}{\mathcal{T}}
\theoremstyle{plain}
\newtheorem{theorem}{Theorem}%
\newtheorem{proposition}{Proposition}%
\newtheorem{lemma}{Lemma}%
\theoremstyle{definition}
\newtheorem{definition}{Definition}%
\newtheorem{example}{Example}%
\newtheorem{remark}{Remark}%
\begin{document}

\title{Generalized group designs: constructing novel unitary 2-, 3- and 4-designs}

\author[1,2]{\'Agoston Kaposi}
\email{kaposiagoston@inf.elte.hu}

\author[1,2]{Zolt\'an Kolarovszki}
\email{kolarovszki.zoltan@wigner.hun-ren.hu}

\author[1,3]{Adri\'an Solymos}
\email{solymos.adrian@wigner.hun-ren.hu}

\author[1,2,4,5]{Zolt\'an Zimbor\'as}
\email{zimboras.zoltan@wigner.hun-ren.hu}

\affil[1]{Quantum Computing and Quantum Information Research Group, HUN-REN Wigner Research Centre for Physics, Konkoly–Thege Mikl\'os \'ut 29-33, Budapest, H-1525, Hungary}

\affil[2]{Department of Programming Languages and Compilers, E\"otv\"os  Lor\'and  University, P\'azmány P\'eter s\'et\'any 1/C, Budapest, H-1117, Hungary}

\affil[3]{Department of Physics of Complex Systems, E\"otv\"os  Lor\'and  University, P\'azmány P\'eter s\'et\'any 1/A, Budapest, H-1117, Hungary}
\affil[4]{Algorithmiq Ltd, Kanavakatu 3C, Helsinki, 00160, Finland}
\affil[5]{University of Helsinki, Yliopistonkatu 4, Helsinki, 00100, Finland}

\maketitle

\begin{abstract}
    Unitary designs are essential tools in several quantum information protocols. Similarly to other design concepts, unitary designs are mainly used to facilitate averaging over a relevant space, in this case, the unitary group $\U(d)$.  While it is known that exact unitary $t$-designs exist for any degree $t$ and dimension $d$, the most appealing type of designs, group designs (in which the elements of the design form a group), can provide at most $3$-designs. Moreover, even group $2$-designs can exist only in limited dimensions.
    In this paper, we present novel construction methods for creating \textit{exact generalized group designs} based on the representation theory of the unitary group and its finite subgroups that overcome the $4$-design-barrier of unitary group designs. Furthermore, a construction is presented for creating generalized group $2$-designs in arbitrary dimensions.
\end{abstract}

\section{Introduction}
    Unitary $t$-designs were first introduced in Refs.~\cite{dankert2005efficient, Dankert_2009},
    and subsequently have become an essential tool within the field of quantum information science.
    These finite collections of $d \times d$ unitary matrices possess a unique property: By twirling (i.e., averaging) a certain matrix with the $t$-fold tensor products of these unitary matrices, the result is equal to the twirling over the \textit{entire} unitary group $\U(d)$ with respect to the Haar measure.
    This property makes them advantageous to use in protocols requiring unitaries sampled randomly from the Haar measure, thus potentially decreasing the resources needed, as generic Haar random unitaries are hard to implement.
    Hence, unitary designs have wide-ranging applications in quantum information science, in particular for quantum process tomography and channel fidelity estimation \cite{Scott_2008,Dankert_2009}, the construction of unitary codes~\cite{unitary_designs_and_codes}, derandomization of probabilistic constructions~\cite{Gross_2014}, entanglement detection~\cite{Bae_2019}, the study of quantum chaos~\cite{roberts2017chaos,leone2021quantum}, randomized benchmarking~\cite{wallman2014randomized,jafarzadeh2020randomized}, and shadow estimation~\cite{huang2020predicting,chen2021robust,arienzo2022closed,helsen2022thrifty}.

    One of the most elegant methods for constructing unitary designs involves delving into the representation theory of finite groups. The collection of unitaries derived from any unitary irreducible representation of a finite group constitutes a $1$-design, and given additional properties, these elements can potentially form a higher-degree unitary design~\cite{evenly_distributed_unitaries}.
    We call such collections of unitaries that form a group (such as those coming from representations) \textit{unitary group designs}\footnote{We note that the term group design is also used in the literature to denote a unitary ensemble that approximates the Haar measure restricted to a subgroup of the unitary group, see, e.g., Refs.~\cite{west2025nogotheoremssublineardepthgroup,grevink2025glueshortdepthdesignsunitary}.}.
    Notably, some of the most renowned families of unitary $t$-designs fall under the category of group designs: The Clifford groups over finite fields, both in their multipartite and single-particle variants, provide elegant and practical $2$-designs in prime power dimensions~\cite{Chau}. Consequently, they find applications in randomized benchmarking for both qubit and qudit systems~\cite{wallman2014randomized, jafarzadeh2020randomized, heinrich2023randomizedbenchmarkingrandomquantum}. The multipartite Clifford group for qubit systems even constitutes a unitary $3$-design~\cite{Zak,zhu2017multiqubit.96.062336,kueng2015qubit}, proving to be advantageous, for instance, in single-shot shadow estimation~\cite{chen2021robust,arienzo2022closed,helsen2022thrifty}.

    The rotten apple that spoils the barrel is the fact that, in dimensions greater than $2$, there do not exist group $4$-designs (and thus higher-degree designs)~\cite{bannai}. Moreover, one can construct group $2$-designs only for a few non-prime-power dimensions (in particular, one can define modular versions of the Clifford group for non-prime-power dimensions, however, these fail to be 2-designs~\cite{graydon2021clifford}). This severely limits the use of group designs for tasks such as randomized benchmarking in these dimensions.

    Constructing unitary $t$-designs using finite group representations will eventually fail because the $t$-fold tensor product of the chosen representation will decompose nontrivially on some of the $\U(d)$-irreducible subspaces of the $t$-fold tensor product space.
    Thus, our research was mainly focused on the behavior of the group representations on those $\U(d)$-irreducible subspaces.
    In this paper, we present novel construction methods for creating \textit{generalized group designs} based on such representation theoretic considerations. This allows us to build higher-degree designs from lower ones and also enables us to define a procedure for constructing $2$-designs in arbitrary dimensions.

    The structure of the paper is as follows: Section~\ref{sec:background} contains necessary basic definitions regarding unitary designs;
    Section~\ref{sec:higher_degree} provides a construction of $t$-designs from finite unitary subgroups and presents some examples for the construction; Section~\ref{sec:2design_arbitrary} describes an explicit construction for generalized group $2$-designs in arbitrary dimensions; finally, Section~\ref{sec:constructing_unitary_design_from_orthogonal} outlines a method for constructing unitary $2$- and $3$-designs from orthogonal $2$- and $3$-designs, respectively.

\section{Preliminaries} \label{sec:background}
    This section provides the essential definitions and statements employed in the proposed constructions. A basic familiarity with representation theory is assumed; a brief recap on the basic concepts is provided in Appendix~\ref{app:representation_theory}.
    It is important to note, that this paper is mainly concerned with the construction of exact designs, but we provide a statement for the construction of unitary $2$- or $3$-designs from orthogonal ones, which holds in the approximate case as well. For concrete constructions regarding approximate designs, see Refs.~\cite{Dankert_2009,Brand_o_2016,oszmaniec2022logic,haferkamp2022random}.

    Several different but equivalent definitions can be found for unitary designs and group designs in the literature~\cite{seymour1984averaging,Bengtsson_Życzkowski_2017,evenly_distributed_unitaries}. However, the notion of unitary $t$-designs, where the summation over the finite set is weighted with a certain weight function, is crucial for this paper since the constructions presented are generally of this type.
    \begin{definition}[Unitary $t$-design] \label{def:t-design}
        A finite set $\mathcal{V} \subset \U(d)$ with weight function $w: \mathcal{V} \to [0, 1]$ is called a unitary $t$-design if the following equation holds for any linear transformation $M$ on $(\C^{d})^{\otimes t}$:
        \begin{equation}\label{eq:t_design_with_weights}
             \sum_{V \in \mathcal{V} } w(V) \,
             V^{\otimes t}M\left(V^{\otimes t}\right)^{\dagger}
             =
             \integral{U\in \U(d)}{} U^{\otimes t}M\left(U^{\otimes t}\right)^{\dagger} \mathrm{d}U,
        \end{equation}
        where the integral on the right-hand side is taken over all elements in $\U(d)$ with respect to the Haar measure. The number $t$ is called the degree of the design. 
    \end{definition}
    In this definition and the rest of the paper $\U(d)$ can be naturally identified with its defining representation.
    More concretely, $U^{\otimes t} = \Pi_{\Yt{1}}(U)^{\otimes t}$ for $U \in \U(d)$, where $\Pi_{\Yt{1}}$ is the defining representation of $\U(d)$.
    Moreover, the weight function of a $t$-design $\mathcal{V}$ is the constant function $w \equiv 1 / \abs{\mathcal{V}}$ unless otherwise stated. Finally, a unitary $t$-design is called a \textit{unitary group $t$-design} when it possesses a group structure. In our proposals, however, we will construct $t$-designs which are generalizations of group $t$-designs in the sense that they are constructed from products of finite groups:
    \begin{definition}[Generalized group $t$-design]\label{def:generalized_group_design}
        Let $\mathcal{V}_1,\dots,\mathcal{V}_\ell < \U(d)$ be finite subgroups. We call $\mathcal{V} \coloneqq \mathcal{V}_1\cdot\ldots\cdot \mathcal{V}_\ell$ a generalized group $t$-design if $\mathcal{V}$ forms a unitary $t$-design with weight function given by
        \begin{equation}\label{eq:weights_for_multiple_sets}
            w (U) \coloneqq 
            \frac
              {\left| \left\{ 
                (h_1, \dots, h_\ell) \in \mathcal{V}_1\times\dots\times \mathcal{V}_\ell : \prod_{j=1}^{\ell}h_j  = U
              \right\} \right|}
              {\abs{\mathcal{V}_1}\cdots \abs{\mathcal{V}_\ell}}.
        \end{equation}
    \end{definition}

    \begin{remark}
        The above careful definition of the weight function is needed as, in general, the product of finite subgroups of a group does not form a set whose cardinality is equal to the product of the cardinalities of the constituent subgroups.
    \end{remark}

\section{Constructing higher-degree generalized group designs} \label{sec:higher_degree}

    Construction of unitary designs for general $t$ and $d$ is a mathematically challenging task \cite{BANNAI2022108457}.
    A well-known example of a unitary design is the Clifford group, which constitutes a group $3$-design for qubit systems~\cite{zhu2017multiqubit.96.062336,Zak,kueng2015qubit}, but this specific example does not generalize for higher-degree designs. Fortunately, using the representation theory of finite groups, one may discover group $t$-designs with $t=1,2,3$ for specific dimensions.
    However, the construction of higher-degree group designs based on representation theory is impossible because of the nonexistence of group $4$-designs in dimensions greater than $2$~\cite{bannai}. The resulting $4$-design-barrier for group designs compelled us to search for higher-degree designs that are not group designs.
    In this section, we present a construction that pushes the $4$-design-barrier further for unitary designs by introducing an extended notion of group designs, which we will refer to as generalized group designs (see Def.~\ref{def:generalized_group_design}).

    First, another characterization of unitary $t$-designs (equivalent to Def.~\ref{def:t-design}) is presented based on representation theory (see also, Appendix of \cite{evenly_distributed_unitaries}). Consider the $t$-fold tensor product of the defining representation of $\U(d)$:  the underlying vector space  $(\C^{d})^{\otimes t}$ splits up under the action of $\U(d)$ into the different isotypic components labeled by Young diagrams as
    \begin{equation}
        (\C^{d})^{\otimes t} = \bigoplus_{\gamma \in \Gamma } \; \mcK_{\gamma} \otimes \mcH_{\gamma},
    \end{equation}
    where $\Gamma$ is the set of Young diagrams with $t$ number of boxes and at most $d$ rows, $\mcH_{\gamma}$ carries the $\U(d)$-irrep labeled by the Young diagram $\gamma$ and $\mcK_{\gamma}$ is the multiplicity space where $\U(d)$ acts trivially \cite{FH}. Let us denote by $P_{\gamma}$
    the projection corresponding to the $\mathcal{W}_\gamma \coloneqq \mcK_{\gamma} \otimes \mcH_{\gamma}$ subspace. As previously mentioned, we present another equivalent characterization of unitary $t$-designs, which may be easily justified using Schur's lemma:   
    \begin{proposition}\label{the_proposition}
        A finite set $\mathcal{V}\subset \U(d)$ and a weight function $w: \mathcal{V} \to [0, 1]$ forms a unitary $t$-design if and only if the following equation is true for all linear transformations $M$ on $(\mathbb{C}^d)^{\otimes t}$:
        \begin{equation} \label{eq:the_proposition_eq}
            \sum_{V \in \mathcal{V} } w\left(V\right) V^{\otimes t} M\left(V^{\otimes t}\right)^{\dagger} = 
            \bigoplus_{\gamma\in\Gamma}
            \frac{\Tr_{\mcH_{\gamma}} (P_\gamma M P_\gamma)}{\dim \mcH_\gamma}  \otimes \mathbbm{1}_{\mcH_\gamma},
        \end{equation}
        where we used the notation as before, and $\Tr_{\mcH_{\gamma}}$ is the partial trace over $\mcH_{\gamma}$ of operators supported on the subspace  $\mathcal{W}_\gamma$.
    \end{proposition}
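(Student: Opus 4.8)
The plan is to derive the proposition from the standard evaluation of the Haar twirl. By Definition~\ref{def:t-design}, the pair $(\mathcal{V}, w)$ is a unitary $t$-design exactly when the left-hand side of Eq.~\eqref{eq:the_proposition_eq} equals the Haar average $T(M) \coloneqq \int_{\U(d)} U^{\otimes t} M (U^{\otimes t})^\dagger\,\mathrm{d}U$ for every $M$. It therefore suffices to prove that $T(M)$ itself equals the right-hand side of Eq.~\eqref{eq:the_proposition_eq}; the claimed equivalence is then immediate, and in fact it identifies the common value with the Haar integral appearing in Definition~\ref{def:t-design}.

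First I would show that $T(M)$ lies in the commutant of the $t$-fold tensor representation. For any fixed $W \in \U(d)$, the left-invariance of the Haar measure gives $W^{\otimes t} T(M) (W^{\otimes t})^\dagger = T(M)$, so $T(M)$ commutes with $U^{\otimes t}$ for all $U$. Feeding this into the isotypic decomposition $(\C^d)^{\otimes t} = \bigoplus_{\gamma \in \Gamma} \mcK_\gamma \otimes \mcH_\gamma$ and invoking Schur's lemma — the irreps $\mcH_\gamma$ for distinct $\gamma$ being pairwise inequivalent, so that no intertwiner connects different blocks — forces $T(M) = \bigoplus_{\gamma} A_\gamma \otimes \mathbbm{1}_{\mcH_\gamma}$ for some operators $A_\gamma$ on the multiplicity spaces $\mcK_\gamma$.

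It remains to identify $A_\gamma$. Taking the partial trace over $\mcH_\gamma$ of the $\gamma$-block gives $\Tr_{\mcH_\gamma}(P_\gamma T(M) P_\gamma) = (\dim \mcH_\gamma)\, A_\gamma$, so I must show that $\Tr_{\mcH_\gamma}(P_\gamma T(M) P_\gamma) = \Tr_{\mcH_\gamma}(P_\gamma M P_\gamma)$. Since $P_\gamma$ projects onto the invariant subspace $\mathcal{W}_\gamma$ it commutes with every $U^{\otimes t}$, so the projectors may be pulled inside the integral defining $T$; restricted to $\mathcal{W}_\gamma$ the representation acts as $\mathbbm{1}_{\mcK_\gamma} \otimes \Pi_\gamma(U)$. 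Applying cyclicity of the partial trace on the $\mcH_\gamma$-factor then collapses the integrand to the $U$-independent operator $\Tr_{\mcH_\gamma}(P_\gamma M P_\gamma)$, which the normalized Haar integral leaves untouched. Substituting $A_\gamma = \Tr_{\mcH_\gamma}(P_\gamma M P_\gamma)/\dim \mcH_\gamma$ yields Eq.~\eqref{eq:the_proposition_eq}.

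I expect this final partial-trace identity to be the only delicate point: one has to use simultaneously that $P_\gamma$ is an intertwiner (to move it through the Haar average) and that tracing out the irrep factor annihilates the $U$-dependence via $\Pi_\gamma(U)^\dagger \Pi_\gamma(U) = \mathbbm{1}_{\mcH_\gamma}$. By contrast, the reduction to the commutant and the Schur's lemma step are routine once the isotypic decomposition is in place.
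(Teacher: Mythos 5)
Your proposal is correct and follows exactly the route the paper itself indicates: the paper offers no written-out proof, merely noting that the proposition ``may be easily justified using Schur's lemma,'' and your argument (Haar twirl lies in the commutant, Schur's lemma forces the block form $\bigoplus_\gamma A_\gamma \otimes \mathbbm{1}_{\mcH_\gamma}$, and the cyclicity of the partial trace over $\mcH_\gamma$ identifies $A_\gamma$) is precisely that justification, carried out carefully. The reduction of both directions of the equivalence to the single identity $T(M) = \text{RHS}$ is also the intended reading of Definition~\ref{def:t-design}, so nothing is missing.
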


    Based on Proposition~\ref{the_proposition} the condition for a finite subgroup of the unitary group to be a unitary $t$-design can be reformulated to require that the appropriate irreducible subspaces of the unitary group remain irreducible when restricted to this finite subgroup.
    In light of this, the 4-design-barrier can be formulated in the following way: there is no finite subgroup of $\U(d)$ such that all irreps of $\U(d)$ labeled by Young diagrams with 4 boxes, restricted to this subgroup, remain irreducible if $d>2$.
    The idea behind our construction is that for each Young diagram with $t$ boxes and at most $d$ rows we find a finite subgroup of $\U(d)$, such that the irrep of $\U(d)$ corresponding to this Young diagram remains irreducible when restricted to this subgroup.
    From these irreps, one can attempt to create a generalized unitary group $t$-design for $t\geq 4$.
    The fly in the ointment is that the irreducible subspace corresponding to such irreps might appear elsewhere in the irreducible decomposition of the $t$-fold tensor product of $\U(d)$ upon restriction. This might lead to a non-trivial intertwiner between these irreducible subspaces when averaging, ruining the construction. To rescue this idea, we aim to exclude these situations, in a carefully crafted theorem.
    \begin{theorem} \label{thm:creating_t_design_from_groups}
        Let $\varPi_{\Yt{1}}$ denote the defining representation of the unitary group $\U(d)$. Let $\Gamma$ denote the set of all Young diagrams with t boxes and at most d rows, corresponding to the irreducible representations appearing in the irreducible decomposition of $\varPi_{\Yt{1}}^{\otimes t}$.
        Let $G_1, \dots, G_\ell < \U(d)$ be finite unitary subgroups such that:
        \begin{enumerate}
            \item For each Young diagram $\gamma\in\Gamma$, there is a $j \in \left\{1, \dots, \ell\right\}$ such that the representation $\varPi_{\gamma}|_{G_j}$ remains an irreducible representation;
            \item For any $\gamma, \eta \in \Gamma$, there is a $j \in \left\{1, \dots, \ell\right\}$ such that there is no non-trivial intertwiner between the representations $\varPi_\gamma|_{G_j}$ and $\varPi_\eta|_{G_j}$.
        \end{enumerate}
        Then the product $G_1 \cdot \ldots \cdot G_\ell$ forms a generalized group $t$-design.
    \end{theorem}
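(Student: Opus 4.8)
The plan is to reduce everything to the representation-theoretic criterion of Proposition~\ref{the_proposition}: it suffices to show that the weighted average on the left-hand side of \eqref{eq:the_proposition_eq} reproduces the Haar-twirl operator on the right-hand side for every linear $M$ on $(\C^d)^{\otimes t}$. First I would unfold the weight function \eqref{eq:weights_for_multiple_sets}: since $w(U)$ counts the tuples $(g_1,\dots,g_\ell)\in G_1\times\dots\times G_\ell$ whose product equals $U$, the design average collapses into an unweighted average over all such tuples, and using $(g_1\cdots g_\ell)^{\otimes t}=g_1^{\otimes t}\cdots g_\ell^{\otimes t}$ it factorizes as a composition of group-twirling channels $\mathcal{T}_1\circ\dots\circ\mathcal{T}_\ell$, where $\mathcal{T}_j(M)=\tfrac{1}{\abs{G_j}}\sum_{g\in G_j} g^{\otimes t} M (g^{\otimes t})^{\dagger}$. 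Each $\mathcal{T}_j$ is the Hilbert--Schmidt-orthogonal projection onto the commutant of $\varPi_{\Yt{1}}^{\otimes t}|_{G_j}$, and a short computation shows it is trace preserving and fixes the identity.

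Next I would exploit the isotypic block structure. Because every $g\in G_j\subset\U(d)$ satisfies $g^{\otimes t}P_\gamma=P_\gamma g^{\otimes t}$, each $\mathcal{T}_j$, and hence the whole composition, preserves the decomposition of $M$ into the blocks $P_\gamma M P_\eta$. On a single block I would use the identification $\mathrm{Hom}(\mathcal{W}_\eta,\mathcal{W}_\gamma)\cong \mathrm{Hom}(\mcK_\eta,\mcK_\gamma)\otimes\mathrm{Hom}(\mcH_\eta,\mcH_\gamma)$, with $G_j$ acting trivially on the multiplicity factor and as $A\mapsto \varPi_\gamma(g)\,A\,\varPi_\eta(g)^{\dagger}$ on the second factor; thus $\mathcal{T}_j$ restricts to $\mathrm{id}\otimes E_j^{\gamma\eta}$, where $E_j^{\gamma\eta}$ is the orthogonal projection onto the intertwiner space $\mathrm{Hom}_{G_j}(\varPi_\eta|_{G_j},\varPi_\gamma|_{G_j})$.

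It then remains to evaluate the composition block by block. For the off-diagonal blocks ($\gamma\neq\eta$), condition~2 supplies a $j$ with vanishing intertwiner space, so $E_j^{\gamma\eta}=0$ annihilates the block — matching the Haar twirl, which retains only diagonal blocks. For a diagonal block ($\gamma=\eta$), condition~1 supplies a $j^{\ast}$ for which $\varPi_\gamma|_{G_{j^{\ast}}}$ is irreducible, so by Schur's lemma $E_{j^{\ast}}^{\gamma\gamma}$ sends every $B$ to $\tfrac{\Tr B}{\dim\mcH_\gamma}\mathbbm{1}_{\mcH_\gamma}$. The step I expect to be the main obstacle is precisely here: a composition of orthogonal projections need not be the orthogonal projection onto their common range, so the collapse onto $\mathbb{C}\mathbbm{1}_{\mcH_\gamma}$ is not automatic from Schur's lemma alone. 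I would resolve this using two elementary properties shared by all $E_j^{\gamma\gamma}$: each is trace preserving and each fixes $\mathbbm{1}_{\mcH_\gamma}$, since the identity lies in every intertwiner algebra. Reading $E_1^{\gamma\gamma}\circ\dots\circ E_\ell^{\gamma\gamma}$ from the right, the projections applied before $E_{j^{\ast}}^{\gamma\gamma}$ leave the trace unchanged, so when $E_{j^{\ast}}^{\gamma\gamma}$ acts it returns $\tfrac{\Tr B}{\dim\mcH_\gamma}\mathbbm{1}_{\mcH_\gamma}$, and the projections applied afterwards each fix this multiple of the identity. Hence the composition equals $B\mapsto \tfrac{\Tr B}{\dim\mcH_\gamma}\mathbbm{1}_{\mcH_\gamma}$ on the diagonal block, which is exactly the diagonal action of the right-hand side of \eqref{eq:the_proposition_eq}.

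Putting the blocks together shows that $\mathcal{T}_1\circ\dots\circ\mathcal{T}_\ell$ coincides with the Haar twirl for all $M$. By Proposition~\ref{the_proposition}, the product $G_1\cdot\ldots\cdot G_\ell$ equipped with the weights \eqref{eq:weights_for_multiple_sets} is a unitary $t$-design, and therefore a generalized group $t$-design in the sense of Definition~\ref{def:generalized_group_design}.
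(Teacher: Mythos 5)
Your proposal is correct and takes essentially the same approach as the paper's proof: both unfold the weights into a composition of group twirls, reduce to the blocks $P_\gamma M P_\eta$, annihilate the off-diagonal blocks via condition~2, and collapse the diagonal blocks via condition~1 together with the two absorption facts you identify (the twirls applied before the irreducible one preserve the trace, and those applied after fix the identity). Your more abstract packaging via orthogonal projections $E_j^{\gamma\eta}$ onto intertwiner spaces is a rephrasing of the paper's explicit partial-trace cyclicity and invariance computations.
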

    \begin{proof}
        Let us denote twirling with a subgroup $G < \U(d)$ as
        \begin{equation}\label{eq:twirl_by_G}
            \T_{G} (X) \coloneqq \int_G U^{\otimes t} X (U^{\otimes t})^\dagger \, \dd U,
        \end{equation}
        where the integration is done over the Haar measure of $G$.
        Moreover, let us denote the consecutive twirls with $G_1, \dots, G_\ell$ by
        \begin{equation}
            \T \coloneqq \T_{G_1} \dots \T_{G_\ell},
        \end{equation}
        and let us have the following notations
        \begin{equation}
            \mathcal{P}_{\gamma,\eta}(X) \coloneqq P_{\gamma} X P_{\eta}, \hspace{2cm}
            \mathcal{P}_\gamma \coloneqq \mathcal{P}_{\gamma,\gamma},
        \end{equation}
        where $X$ is a linear operator on $(\C^d)^{\otimes t}$.
        Considering a fixed linear operator $M$ on $(\C^d)^{\otimes t}$,
        we want to show that
        \begin{equation}
            \T(M) =
            \sum_{\gamma \in \Gamma}
                \frac{\Tr_{\mathcal{H}{\gamma}} (\mathcal{P}_{\gamma} (M))}{\dim \mathcal{H}_{\gamma}}
                \otimes
                \mathbbm{1}_{\mathcal{H}_{\gamma}}.
        \end{equation}
        For this, let us decompose $\T(M)$ as
        \begin{equation}
            \T(M) = \sum_{\gamma, \eta \in \Gamma} \mathcal{P}_{\gamma,\eta} \T(M)
            = 
            \sum_{\gamma \in \Gamma} \mathcal{P}_{\gamma} \T(M) + \sum_{\substack{\gamma, \eta \in \Gamma \\  \gamma \neq \eta}}
            \mathcal{P}_{\gamma,\eta} \T (M)\text{.}
        \end{equation}
        Let $\gamma, \eta \in \Gamma$.
        Firstly, we prove that $\mathcal{P}_{\gamma,\eta} \T (M) = 0$ if $\gamma \neq \eta$.
        We can use the fact that for all $\gamma, \eta \in \Gamma$ and any subgroup $G < \U(d)$
        \begin{equation}
            \mathcal{P}_{\gamma, \eta} \T_{G} = \mathcal{P}_{\gamma, \eta} \T_{G} \mathcal{P}_{\gamma, \eta},
        \end{equation}
        which is trivial by using Schur's lemma.
        Hence, the following is true:
        \begin{equation}\label{eq:ptp}
            \mathcal{P}_{\gamma,\eta} \T (M) = \mathcal{P}_{\gamma, \eta} \T_{G_1} \mathcal{P}_{\gamma, \eta} \dots \mathcal{P}_{\gamma, \eta} \T_{G_\ell} \mathcal{P}_{\gamma, \eta}(M)\text{.}
        \end{equation}

        Let $j$ be the index for which there is no non-trivial intertwiner between $\varPi_{\gamma}|_{G_j}$ and $\varPi_{\eta}|_{G_j}$ as required by the second condition.
        It is easy to see that $\T_{G_j}(X)$ is an intertwiner of the $t$-fold diagonal action of $G_j$ for any linear operator $X$ on $(\C^d)^{\otimes t}$. Furthermore, given that $G_j$ is a subgroup of the unitary group, $P_{\gamma}\T_{G_j}(X) P_{\eta}$ is also an intertwiner for the same representation as the product of intertwiners is also an intertwiner. Given that $P_{\gamma}\T_{G_j}(X) P_{\eta}$ would be an intertwiner exactly between the subrepresentations $\varPi_{\gamma}|_{G_j}$ and $\varPi_{\eta}|_{G_j}$ (with multiplicities), it has to be zero by the second condition. Moreover, because of the linearity of the above equation, if it transforms into the zero map at any point it will stay zero after subsequent twirls.

        Secondly, considering the case when $\gamma = \eta$, we can write Eq.~\eqref{eq:ptp} as
        \begin{align}
            P_{\gamma} \T (M) P_{\gamma} &=
            \mathcal{P}_{\gamma} \T_{G_1} \mathcal{P}_{\gamma}  \dots \T_{G_\ell} \mathcal{P}_{\gamma}(M).
        \end{align}
        By the first assumption in the Theorem, for each $\gamma \in \Gamma$, there exists a group $G_j$ such that $\varPi_{\gamma} |_{G_j}$ is irreducible. Note that from Proposition \ref{the_proposition} it follows that:
        \begin{equation}
            \mathcal{P}_\gamma \T_{G_j}(X)=\mathbb{0} \oplus \frac{\Tr_{\mcH_{\gamma}} (      \mathcal{P}_{\gamma} (X)
                )
            }{
                \dim \mcH_{\gamma}
            } \otimes
                \mathbbm{1}_{\mcH_{\gamma}}\text{,}
        \end{equation}
        where $X$ is any linear transformation on $(\mathbb{C}^d)^{\otimes t}$ and where we denoted the zero operator on the orthogonal complement of $P_\gamma$ by $\mathbb{0}$.
        
        As a consequence, it is clear that $\mathcal{P}_\gamma \T_{G_j} \T_{G_k} = \mathcal{P}_\gamma \T_{G_j}$ for $k = j+1, \dots, \ell$ since
        \begin{align}
        \begin{split}
            \mathcal{P}_\gamma \T_{G_j} \T_{G_k} (X)
            = \mathbb{0} \oplus
                \frac{
                \int_{G_k}\Tr_{\mcH_{\gamma}} \left( 
                    [\mathbbm{1}_{\mcK_{\gamma}}\otimes \varPi_{\gamma}(U)]
                        \mathcal{P}_{\gamma} (X)
                    [\mathbbm{1}_{\mcK_{\gamma}}\otimes \varPi_{\gamma}(U)^{\dagger}]
                \right)\,\dd U
            }{
                \dim \mcH_{\gamma}
            }
            \otimes
                \mathbbm{1}_{\mcH_{\gamma}},
        \end{split}
        \end{align}
        where the integration is done over the Haar measure of $G_k$. The partial trace is cyclic on the vector space it is tracing over, i.e., $\mcH_\gamma$, and the operators on $\mcK_\gamma$ commute. Therefore this partial trace can be treated as cyclic to get
        \begin{align} \label{eq:previous_fact}
        \begin{split}
            \mathcal{P}_\gamma \T_{G_j} \T_{G_k} (X) =
            \mathbb{0} \oplus
            \frac{
                 \Tr_{\mcH_{\gamma}} (
                        \mathcal{P}_{\gamma} (X)
                )
            }{
                \dim \mcH_{\gamma}
            } \otimes
                \mathbbm{1}_{\mcH_{\gamma}} = \mathcal{P}_\gamma \T_{G_j} (X).
        \end{split}
        \end{align}

        Moreover, the twirlings $\T_{G_k}$ for $k = 1,\,\dots, \,j-1$ leave $\T_{G_j}(X)$ invariant, since the above operator is invariant to the adjoint action of any subgroup.
        More concretely, let $k \in \left\{1,\dots,j-1\right\}$. Then we can write for any linear transformation $X$ on $(\mathbb{C}^d)^{\otimes t}$
        \begin{align}
        \begin{split}
            \mathcal{P}_{\gamma} & \T_{G_k} \mathcal{P}_\gamma \T_{G_j} \mathcal{P}_\gamma ( X )
            =
            \mathcal{P}_{\gamma} \T_{G_k} \left(
                \mathbb{0} \oplus \frac{\Tr_{\mcH_{\gamma}} (\mathcal{P}_{\gamma} (X) )}{\dim \mcH_{\gamma}}
                    \otimes \mathbbm{1}_{\mcH_{\gamma}}
            \right)
            \\&=
            \mathbb{0} \oplus \int_{G_k} (\mathbbm{1}_{\mcK_{\gamma}} \otimes \varPi_{\gamma}(U))  \left(
                \frac{\Tr_{\mcH_{\gamma}} (\mathcal{P}_{\gamma} (X))}{\dim \mcH_{\gamma}}
                    \otimes \mathbbm{1}_{\mcH_{\gamma}}
            \right)
            (\mathbbm{1}_{\mcK_{\gamma}} \otimes \varPi_{\gamma}(U))^\dagger \dd U
            \\&=
            \mathbb{0} \oplus
                \frac{\Tr_{\mcH_{\gamma}} (\mathcal{P}_{\gamma} (X))}{\dim \mcH_{\gamma}}
                    \otimes \mathbbm{1}_{\mcH_{\gamma}},
        \end{split}
        \end{align}
        hence, $\mathcal{P}_\gamma \T_{G_k} \T_{G_j} = \mathcal{P}_\gamma \T_{G_j}$. By combining this with the fact that $$
        \mathcal{P}_\gamma \T_{G_j} \T_{G_k} = \mathcal{P}_\gamma \T_{G_j} \quad (k=j+1, \ldots, \ell)
        $$
        by Eq.~\eqref{eq:previous_fact}, 
        we get that
        \begin{equation}
            \mathcal{P}_{\gamma} \T = \mathcal{P}_{\gamma} \T_{G_j}.
        \end{equation}
        As a result, one gets
        \begin{equation}
            \T(M) =
            \sum_{\gamma \in \Gamma}
                \frac{\Tr_{\mathcal{H}{\gamma}} (\mathcal{P}_{\gamma} (M))}{\dim \mathcal{H}_{\gamma}}
                \otimes
                \mathbbm{1}_{\mathcal{H}_{\gamma}}.
        \end{equation}
        which proves the Theorem by Proposition~\ref{the_proposition}.
    \end{proof}
    
    Theorem~\ref{thm:creating_t_design_from_groups} enables us to search for examples of unitary $t$-designs using character theory. In particular, for some low-dimensional $2$- and $3$-design constructions a similar technique was used in Ref.~\cite{kaposi2023constructing}.
    To be explicit, given a finite unitary subgroup $G < \U(d)$, one can examine if the representation labeled by the Young diagram $\gamma$ with $t$ boxes and at most $d$ rows remains irreducible when restricted to $G$. Hence, one can search for a set of finite subgroups of $\U(d)$ such that the first condition is fulfilled.
    For the second condition, consider two Young diagrams $\gamma, \eta$.
    When $\dim \varPi_\gamma \neq \dim \varPi_\eta$, consider $\dim \varPi_\gamma > \dim \varPi_\eta$ without loss of generality. By the first condition, there is a finite subgroup $G$ for which the representation $\varPi_\gamma|_{G}$ remains irreducible, hence, the second condition is automatically fulfilled.
    When $\dim \varPi_\gamma = \dim \varPi_\eta$, further investigation is needed to verify that the representations are inequivalent when restricted to some finite subgroup $G$.
    To verify the irreducibility and inequivalence of the representations,
    the method described in Appendix~\ref{app:character_theory} is used.
    Using these methods, we have found the following examples of generalized group $4$-designs in the GAP system (version 4.14.0)~\cite{GAP4}:
    
    \begin{example}[A $6$-dimensional $4$-design]
        In this example, not all the groups are commonly referenced in the literature, hence, we employ the notation utilized within the GAP system to avoid ambiguity.
        Any of the $6$-dimensional irreducible representations of the group \texttt{"S3x6\_1.U4(3).2\_2"} or the group \texttt{"6.U6(2)M6"} remain irreducible on the subspaces corresponding to $\ytableausetup{aligntableaux=center}\Yt{3,1}, \Yt{2,2}, \Yt{2,1,1}, \Yt{1,1,1,1}$. Meanwhile, the $22$nd irreducible representation of the group \texttt{"2.J2"} is irreducible when restricted to the subspace corresponding to $\Yt{4}$.
        Finally, the irreps corresponding to $\Yt{2,2}$ and $\Yt{2,1,1}$ have the same dimension, but are inequivalent when restricted to the corresponding group, and thus, they also satisfy the second condition in Theorem~\ref{thm:creating_t_design_from_groups}.
    \end{example}

    \begin{example}[A $12$-dimensional $4$-design]\label{example:4design_12dim}
        The Suzuki sporadic simple group $\operatorname{Suz}$ has a central extension of order $6$ (denoted by \texttt{"6.Suz"} in GAP) which has an irreducible representation (the $153$rd in GAP) which remains irreducible on the subspaces corresponding to $\Yt{4}, \Yt{3,1}, \Yt{2,2}$ and $\Yt{2,1,1}$.
        Meanwhile, the standard representation of the alternating group on $13$ elements is irreducible on the subspace corresponding to $\Yt{1,1,1,1}$.
        Therefore, from these two unitary representations, a unitary $4$-design can be constructed in $12$ dimensions by Theorem~\ref{thm:creating_t_design_from_groups}.
    \end{example}

    Apart from the two examples presented so far, Theorem~\ref{thm:creating_t_design_from_groups} can also be used to find other examples, e.g. for $3$-designs. Without completeness, we present a few such examples here:

    \begin{example}[A $10$-dimensional $3$-design]\label{example:3design_10dim}
        The double cover of the Mathieu group $M_{22}$ (denoted by \texttt{2.M22} in GAP) has a $10$-dimensional representation which remains irreducible on subspaces corresponding to $\Yt{2,1}$ and $\Yt{1,1,1}$.
        Together with a $10$-dimensional representation of the double cover of the Mathieu group $M_{12}$ together with an outer automorphism of order $2$ (denoted by \texttt{2.M12.2} in GAP) which remains irreducible on the subspace corresponding to $\Yt{3}$. With these representations, a $10$-dimensional $3$-design can be constructed.
    \end{example}
    \begin{example}[A $13$-dimensional $3$-design]\label{example:3design_13dim}
        The non-split double cover of the projective symplectic group $\operatorname{PSp}_6(3)$ (denoted by \texttt{2.S6(3)} in GAP) has a $13$-dimensional representation which remains irreducible on subspaces corresponding to $\Yt{3}$ and $\Yt{2,1}$. Together with the standard representation of the alternating group on $14$ elements, the concatenated product of these groups constitute a $13$-dimensional $3$-design.
    \end{example}
    \begin{example}[A $18$-dimensional $3$-design]\label{example:3design_18dim}
        There is a central extension of the Janko group $\operatorname{J}_3$ by a cyclic group of order $3$ (denoted by \texttt{3.J3} in GAP) which has an $18$-dimensional representation which is already a $3$-design.
    \end{example}
    Note that in \Cref{example:4design_12dim,example:3design_10dim,example:3design_13dim,example:3design_18dim}, the dimensions of the subspaces are different, producing inequivalent representations on the subspaces.

\section{Constructing \texorpdfstring{$2$}{2}-designs in arbitrary dimension}\label{sec:2design_arbitrary}
    
    In prime power dimensions, the Clifford groups provide examples of unitary $2$-designs~\cite{wallman2014randomized,jafarzadeh2020randomized}.
    For arbitrary dimensions,
    an inductive construction for $t$-designs is provided in Ref.~\cite{BANNAI2022108457}.
    However, an explicit non-inductive construction for $2$-designs is not known for every dimensions.
    In this section, we provide an explicit construction of generalized group $2$-designs in arbitrary dimensions.

    \begin{lemma}\label{lemma:2design_symmetric_subspace}
        Consider a finite subgroup $G < \U(d)$, and consider the defining representation of $\U(d)$ restricted to this subgroup, denoted as $\varPi_{\Yt{1}} |_G$. Suppose that the restriction of the representation
        \(\varPi_{\Yt{1,1}}\) to \(G\) remains irreducible, and that
        \begin{equation}
            \varPi_{\Yt{2}}\big|_{G}
            \;\cong\;
            \varPi_{1} \oplus \varPi_{2},
        \end{equation}
        where \(\varPi_{1}\) and \(\varPi_{2}\) are two inequivalent irreducible
        representations of \(G\), each also inequivalent to
        \(\varPi_{\Yt{1,1}}\big|_{G}\).
        Let \(P_{1}\) and \(P_{2}\) denote the corresponding projectors onto the
        subrepresentations \(\varPi_{1}\) and \(\varPi_{2}\), respectively. Moreover, suppose that there exists $U \in \U(d)$ such that
        \begin{equation} \label{eq:desired_general}
            \Tr \left( P_1^{U^{\otimes 2}} P_1 \right)
            = \frac{(\Tr P_1)^2}{\Tr (P_1 + P_2)},
        \end{equation}
        then the product $G^{U} \cdot G$ forms a unitary $2$-design,
        where $G^{U} = \left\{ U g U^\dagger \;|\; g \in G \right\}$.
    \end{lemma}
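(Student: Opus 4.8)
The plan is to show that twirling over the product $G^{U}\cdot G$ with the weight function of Definition~\ref{def:generalized_group_design} reproduces the Haar twirl, and then invoke Proposition~\ref{the_proposition}. Since $(h_1 h_2)^{\otimes 2}=h_1^{\otimes 2}h_2^{\otimes 2}$, the weighted sum factorizes as the composition $\T_{G^{U}}\T_{G}$, where $\T_{H}(X)\coloneqq\int_H g^{\otimes 2}X(g^{\otimes 2})^{\dagger}\,\dd g$. For $t=2$ we have $\Gamma=\{\Yt{2},\Yt{1,1}\}$, and both isotypic components are multiplicity-free ($\dim\mcK_\gamma=1$), so Proposition~\ref{the_proposition} identifies the target as
\begin{equation}
    \T_{\U(d)}(M)=\frac{\Tr(P_{\Yt{2}}M)}{\dim\mcH_{\Yt{2}}}\,P_{\Yt{2}}+\frac{\Tr(P_{\Yt{1,1}}M)}{\dim\mcH_{\Yt{1,1}}}\,P_{\Yt{1,1}},
\end{equation}
where $P_{\Yt{2}}=P_1+P_2$. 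First I would compute $\T_G(M)$: by hypothesis $\varPi_{\Yt{1}}^{\otimes 2}|_{G}$ splits into the three pairwise inequivalent irreducibles $\varPi_1,\varPi_2,\varPi_{\Yt{1,1}}|_{G}$ with projectors $P_1,P_2,P_{\Yt{1,1}}$, so Schur's lemma gives $\T_G(M)=a\,P_1+b\,P_2+c\,P_{\Yt{1,1}}$ with $a=\Tr(P_1M)/\Tr P_1$, $b=\Tr(P_2M)/\Tr P_2$ and $c=\Tr(P_{\Yt{1,1}}M)/\Tr P_{\Yt{1,1}}$.

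The second, geometric, ingredient is that $U^{\otimes 2}$ commutes with the swap operator, hence with the symmetric and antisymmetric projectors, so conjugation by $U^{\otimes 2}$ preserves the $\Yt{2}/\Yt{1,1}$ decomposition. Consequently $P_{\Yt{1,1}}^{U^{\otimes 2}}=P_{\Yt{1,1}}$ and $P_1^{U^{\otimes 2}}+P_2^{U^{\otimes 2}}=P_{\Yt{2}}$, even though the individual rotated projectors $P_i^{U^{\otimes 2}}$ differ from $P_i$. Because $\T_{G^{U}}(X)=U^{\otimes 2}\,\T_G\!\big((U^{\otimes 2})^{\dagger}X\,U^{\otimes 2}\big)(U^{\otimes 2})^{\dagger}$ is just the conjugated twirl, its fixed-point projectors are $P_1^{U^{\otimes 2}},P_2^{U^{\otimes 2}},P_{\Yt{1,1}}$. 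Applying it to $N\coloneqq\T_G(M)$ and using $P_{\Yt{1,1}}N=c\,P_{\Yt{1,1}}$, the antisymmetric block of $\T_{G^{U}}(N)$ comes out as $c\,P_{\Yt{1,1}}$, which already matches the Haar value. Thus all the content lives in the symmetric block.

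For the symmetric block I would expand
\begin{equation}
    P_{\Yt{2}}\,\T_{G^{U}}(N)\,P_{\Yt{2}}=\frac{\Tr\!\big(P_1^{U^{\otimes 2}}N\big)}{\Tr P_1}\,P_1^{U^{\otimes 2}}+\frac{\Tr\!\big(P_2^{U^{\otimes 2}}N\big)}{\Tr P_2}\,P_2^{U^{\otimes 2}},
\end{equation}
and evaluate the traces via the overlaps $\Tr(P_i^{U^{\otimes 2}}P_j)$, the $P_{\Yt{1,1}}$ term of $N$ dropping out by orthogonality of the sym/antisym subspaces. The completeness relations $P_1^{U^{\otimes 2}}+P_2^{U^{\otimes 2}}=P_1+P_2$ together with $\Tr P_i^{U^{\otimes 2}}=\Tr P_i$ force all overlaps to be fixed by the single parameter $x\coloneqq\Tr(P_1^{U^{\otimes 2}}P_1)$: namely $\Tr(P_1^{U^{\otimes 2}}P_2)=\Tr(P_2^{U^{\otimes 2}}P_1)=\Tr P_1-x$ and $\Tr(P_2^{U^{\otimes 2}}P_2)=\Tr P_2-\Tr P_1+x$. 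Substituting these and imposing the hypothesis $x=(\Tr P_1)^2/\Tr(P_1+P_2)$ of Eq.~\eqref{eq:desired_general} collapses both coefficients to the common value $\big(\Tr(P_1M)+\Tr(P_2M)\big)/\Tr(P_1+P_2)=\Tr(P_{\Yt{2}}M)/\dim\mcH_{\Yt{2}}$, so the symmetric block equals $\tfrac{\Tr(P_{\Yt{2}}M)}{\dim\mcH_{\Yt{2}}}P_{\Yt{2}}$, exactly the Haar value. Combining the two blocks and appealing to Proposition~\ref{the_proposition} completes the proof.

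The step I expect to be the crux is the last one: recognizing that the generic conjugate $G^{U}$ re-mixes the $G$-invariant splitting $P_1\oplus P_2$ of the symmetric subspace, and that equalizing the two resulting eigenvalues of the composed twirl is governed by exactly one scalar overlap, whose required value is precisely Eq.~\eqref{eq:desired_general}. I would not attempt to prove existence of such a $U$, which the lemma assumes. A convenient consistency check on the final value is that each twirl is trace-preserving, which independently forces the common symmetric coefficient to be $\Tr(P_{\Yt{2}}M)/\dim\mcH_{\Yt{2}}$.
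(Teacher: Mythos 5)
Your proof is correct and follows essentially the same route as the paper's: twirl first over $G$, then over $G^{U}$, use the invariance of $P_{\Yt{2}}$ and $P_{\Yt{1,1}}$ under conjugation by $U^{\otimes 2}$ to reduce all rotated overlaps $\Tr\bigl(P_i^{U^{\otimes 2}}P_j\bigr)$ to the single scalar fixed by Eq.~\eqref{eq:desired_general}, and show the two symmetric-block coefficients collapse to the common Haar value before invoking Proposition~\ref{the_proposition}. If anything, you are slightly more explicit than the paper in verifying that the common coefficient equals $\Tr(P_{\Yt{2}}M)/\dim\mcH_{\Yt{2}}$, a step the paper leaves implicit.
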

    \begin{proof}
        Let $M\in(\C^{d\times d})^{\otimes 2}$ be an arbitrary matrix.
        The irreducible decomposition of $(\varPi_{\Yt{1}} |_G)^{\otimes 2}$ is given by $\left.\varPi_1\oplus\varPi_2\oplus \varPi_{\Yt{1,1}}\right|_G$ and the projections to their supporting subspaces are denoted by $P_1, P_2, P_{\Yt{1,1}}$, respectively.
        Twirling over $G$ gives
        \begin{equation}
            M^\prime \coloneqq 
            \mathcal{T}_{G}(M)
            =
            a_1 P_1 + a_2 P_2 + a_{\Yt{1,1}} P_{\Yt{1,1}},
        \end{equation}
        where $a_j \coloneqq \frac{\Tr(P_j M)}{\Tr(P_j)}$ for $j\in\left\{1,2,\Yt{1,1}\right\}$.
        The second twirling
        yields
        \begin{equation}
            \mathcal{T}_{G^{U}}
            \left(
                M^\prime
            \right)
            =
            b_1 P_1^{U^{\otimes 2}} + b_2 P_2^{U^{\otimes 2}} + b_{\Yt{1,1}} P_{\Yt{1,1}}^{U^{\otimes 2}},
        \end{equation}
        where $b_j \coloneqq \frac{\Tr(P_j^{U^{\otimes 2}} M^\prime)}{\Tr P_j}$ for $j\in\left\{1,2,\Yt{1,1}\right\}$.
        Since $P_{\Yt{1,1}}^{U^{\otimes 2}} = P_{\Yt{1,1}}$, their coefficients are the same $b_{\Yt{1,1}} = a_{\Yt{1,1}}$.
        Let $P_{\Yt{2}} = P_1 + P_2$ denote the projection to the symmetric subspace, for which we also know that $P_{\Yt{2}}^{U^{\otimes 2}} = P_{\Yt{2}}$.
        Substituting $P_2 = P_{\Yt{2}}-P_1$ into the expressions for the coefficients $b_1$ and $b_2$, we find that $b_1, b_2$ are completely determined by the coefficients $a_1, a_2$ and the value of $\Tr(P_1^{U^{\otimes 2}} P_1)$. 
        The condition from Eq.~\eqref{eq:desired_general} implies similar relations in the form of
        \begin{equation}\label{eq:overlap_symmetry}
            \Tr \left( P_i^{U^{\otimes 2}} P_j \right) = \frac{\Tr P_i \Tr P_{j}}{\Tr P_{\Yt{2}}} \qquad (i, j = 1, 2).
        \end{equation}
        This can be verified by
        \begin{align} \begin{split}
            \Tr \left( P_1^{U^{\otimes 2}} P_2 \right)
            = 
            \Tr \left( P_1^{U^{\otimes 2}} P_{\Yt{2}} \right)- \Tr \left( P_1^{U^{\otimes 2}} P_{1} \right)
            =
            \Tr P_1 - \frac{(\Tr P_1)^2}{\Tr P_{\Yt{2}}}
            =
            \frac{\Tr P_1 \Tr P_{2}}{\Tr P_{\Yt{2}}},
        \end{split}
        \end{align}
        where we used the fact that $\Tr(
            P_1^{U^{\otimes 2}} P_{\Yt{2}}
        ) = 
        \Tr(P_1) = d$.
        In a similar vein, we can show that
        \begin{equation}
            \Tr \left( P_2^{U^{\otimes 2}} P_2 \right)
            =
            \frac{(\Tr P_2)^2}{\Tr P_{\Yt{2}}}.
        \end{equation}
        The relation from Eq.~\eqref{eq:overlap_symmetry} enables us to write the final coefficients as
        \begin{subequations}
        \begin{align}
            b_1 &= \frac{\Tr(
                P_1^{U^{\otimes 2}} (a_1 P_1 + a_2 P_2)
            )}{\Tr P_1}
            =
            a_1\frac{\Tr P_1}{\Tr P_{\Yt{2}}}
            + a_2 \frac{\Tr P_2}{\Tr P_{\Yt{2}}},
            \\
            b_2 &= \frac{\Tr(
                P_2^{U^{\otimes 2}} (a_1 P_1 + a_2 P_2)
            )}{\Tr P_2}
            =
            a_1\frac{\Tr P_1}{\Tr P_{\Yt{2}}}
            + a_2 \frac{\Tr P_2}{\Tr P_{\Yt{2}}} \stackrel{!}{=} b_1.
        \end{align}
        \end{subequations}
        As a result, we know that $b_1 = b_2 \eqqcolon b_{\Yt{2}}$, meaning that the second twirling results in a quantity that can be equivalently obtained using a unitary $2$-design, i.e.,
        \begin{equation}
            \mathcal{T}_{G^{U} \cdot G} (M) = b_1 P_1 + b_2 P_2 + b_{\Yt{1,1}} P_{\Yt{1,1}}
            =
            b_{\Yt{2}} P_{\Yt{2}} + b_{\Yt{1,1}} P_{\Yt{1,1}}.
        \end{equation}
        Therefore, using \Cref{the_proposition}, the set $G^{U} \cdot G$ forms a unitary $2$-design.
    \end{proof}

    \subsection{Monomial reflection group}
        The monomial reflection group $G(m, 1, d)$ is the group of $d\times d$ matrices with a single non-zero entry in each row and column, where the only possible values for the non-zero entries are the $m$-th root of unity. 
        More generally, given $p$ such that $p\,|\,m$, $G(m,p,d)$ is the subgroup of $G(m,1,d)$ in which the product of the non-zero entries in each matrix is an $\frac{m}{p}$-th root of unity.
        However, for the purposes of this paper, the group $G(3, 1, d)$ is sufficient.
        The group has also been utilized for randomized benchmarking in Ref.~\cite{amaro2024randomised}.

        We know that $G(3, 1, d) \cong S_d \ltimes \mathbb{Z}_3^d$, since
        each element in $G(3, 1, d)$ can be completely characterized by
        \begin{itemize}
            \item an element in $S_d$ corresponding to the places of the nonzero entries in the $d\times d$ matrix
        through its natural representation, and
            \item an element in $\mathbb{Z}_3^d$ (corresponding to the $3$rd roots of unity in these entries).
        \end{itemize}
        The group product in $S_d \ltimes \mathbb{Z}_3^d$ is given by
        \begin{equation}\label{eq:monomial_semidirect}
            (\omega_1, \bm{\alpha}_1) \cdot (\omega_2, \bm{\alpha}_2) = (\omega_1 \omega_2, \bm{\alpha}_2 + \Gamma_{\text{nat}}(\omega_2) \bm{\alpha_1}) \qquad ( \omega_1, \omega_2 \in S_d, \; \bm{\alpha}_1, \bm{\alpha}_2 \in \mathbb{Z}^d_3),
        \end{equation}
        where $\Gamma_{\text{nat}}$ is the natural representation of the symmetric group $\S_d$ defined by $\Gamma_{\text{nat}}(\omega) \ket{j} \coloneqq \ket{\omega(j)}$.
        
        In the following, we will consider the \textit{natural representation} $\Gamma : G(3, 1, d) \cong  S_d \ltimes \mathbb{Z}_3^d \to \text{GL}(d, \C)$, defined as
        \begin{equation}
            \Gamma(\omega, \bm{\alpha}) \coloneqq \Gamma_{\text{nat}}(\omega) D(\bm{\alpha}) \qquad ( \omega \in S_d, \bm{\alpha} \in \mathbb{Z}^d_3),
        \end{equation}
        where, as before, $\Gamma_{\text{nat}}$ is the natural representation of the symmetric group $\S_d$  and \begin{equation}
            D(\bm{\alpha}) \coloneqq \overset{d-1}{\underset{j=0}{\bigoplus}}\;\zeta^{\alpha_j}
        \end{equation}
        is the representation of $\Z_3^d$ with $\zeta$ being the principal 3rd root of unity.
        In this representation, the product of two elements is
        \begin{align}
            \Gamma(\omega_1, \bm{\alpha}_1)
            \Gamma(\omega_2, \bm{\alpha}_2)
            = \Gamma_{\text{nat}}(\omega_1 \omega_2)
            D(\omega_2(\bm{\alpha}_1))
            D(\bm{\alpha}_2)
            = 
            \Gamma(\omega_1\omega_2, \bm{\alpha}_2 + \omega_2(\bm{\alpha}_1)),
        \end{align}
        analogously to Eq.~\eqref{eq:monomial_semidirect}.
        The natural representation of the symmetric group decomposes into two irreducible representations; one of them is the trivial representation acting on the subspace spanned by the vector $\sum_{j=0}^{d-1}\ket{j}$. However, this subspace is not left invariant by $D(\bm{\alpha})$, making $\Gamma$ irreducible.
        \begin{proposition}\label{prop:3irreps}
            Let $d \geq 2$. The $2$-fold tensor power of the representation $\Gamma$ has $3$ nontrivial invariant subspaces, which we will denote by $V_1, V_2, V_{\Yt{1,1}}$. Among these, $V_{\Yt{1,1}}$ is the antisymmetric subspace, $V_1$ is spanned by the vectors $\left\{ \ket{j}\otimes\ket{j} \right\}_{j=0}^{d-1}$ and $V_2$ is the complementary subspace of $V_1$ in the symmetric subspace $V_{\Yt{2}}$.
        \end{proposition}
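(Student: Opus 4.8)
The plan is to exploit the abelian normal subgroup $\mathbb{Z}_3^d \triangleleft G(3,1,d)$: I would first split $\C^d \otimes \C^d$ into joint eigenspaces (weight spaces) of $\mathbb{Z}_3^d$, and then use the transitive action of $S_d$ to glue these back into irreducible $G(3,1,d)$-modules. The starting point is the explicit action on the product basis,
\begin{equation}
    \Gamma^{\otimes 2}(\omega, \bm{\alpha})\,\ket{j}\otimes\ket{k} = \zeta^{\alpha_j + \alpha_k}\,\ket{\omega(j)}\otimes\ket{\omega(k)},
\end{equation}
from which two facts are immediate. First, since $\Gamma^{\otimes 2}(g) = \Gamma(g) \otimes \Gamma(g)$ commutes with the flip operator, the symmetric subspace $V_{\Yt{2}}$ and the antisymmetric subspace $V_{\Yt{1,1}}$ are invariant. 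Second, $V_1 = \mathrm{span}\{\ket{j}\otimes\ket{j}\}$ is invariant and lies inside $V_{\Yt{2}}$, so its orthogonal complement $V_2 \subset V_{\Yt{2}}$, spanned by the symmetric combinations $\ket{j}\otimes\ket{k} + \ket{k}\otimes\ket{j}$ with $j<k$, is invariant as well (directly from the action formula, or by unitarity of $\Gamma$).

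The heart of the argument is irreducibility of each piece, which I would derive from a weight-space count. Restricting to $\mathbb{Z}_3^d$, every listed basis vector is a simultaneous eigenvector: $\ket{j}\otimes\ket{j}$ carries the character $\bm{\alpha} \mapsto \zeta^{2\alpha_j}$, while $\ket{j}\otimes\ket{k} \pm \ket{k}\otimes\ket{j}$ carry $\bm{\alpha} \mapsto \zeta^{\alpha_j + \alpha_k}$. The point is that these characters are pairwise distinct within each of $V_1$, $V_2$ and $V_{\Yt{1,1}}$, so each weight space is one-dimensional. Any invariant subspace, being in particular $\mathbb{Z}_3^d$-invariant, is therefore a sum of these one-dimensional weight spaces; and since $S_d$ permutes the weight spaces transitively — transitively on indices $j$ inside $V_1$, and transitively on $2$-subsets $\{j,k\}$ inside $V_2$ and $V_{\Yt{1,1}}$ — no nonzero proper invariant subspace can exist. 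This yields irreducibility of all three.

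It then remains to confirm that these exhaust the space and are genuinely distinct constituents. A dimension count gives $\dim V_1 + \dim V_2 + \dim V_{\Yt{1,1}} = d + \binom{d}{2} + \binom{d}{2} = d^2$, so $\C^d \otimes \C^d = V_1 \oplus V_2 \oplus V_{\Yt{1,1}}$ is a complete decomposition into three irreducibles. To see that these are the only nontrivial invariant subspaces I would check pairwise inequivalence: $V_1$ is distinguished from the other two already by its $\mathbb{Z}_3^d$-restricted character (its weights are ``single-index'' characters $\zeta^{2\alpha_j}$, disjoint from the ``pair'' characters $\zeta^{\alpha_j + \alpha_k}$), whereas $V_2$ and $V_{\Yt{1,1}}$, which share the same $\mathbb{Z}_3^d$-weights, are separated by evaluating the full group character on a transposition $(\omega, \bm{0})$, where the symmetric and antisymmetric combinations contribute with opposite signs.

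The step I expect to be the main obstacle is exactly the distinctness of the pair characters, since it must be argued over $\mathbb{Z}_3^d$ rather than over the integers: I need that $\bm{e}_j + \bm{e}_k = \bm{e}_{j'} + \bm{e}_{k'}$ in $\mathbb{Z}_3^d$ forces $\{j,k\} = \{j',k'\}$. This holds because, for $j \neq k$, the vector $\bm{e}_j + \bm{e}_k$ has precisely two coordinates equal to $1$ and the rest $0$ (no reduction modulo $3$ occurs), so it determines the unordered pair. The edge case $d=2$, where $V_2$ and $V_{\Yt{1,1}}$ become one-dimensional, is handled by the same computations without modification.
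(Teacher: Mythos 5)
Your proof is correct, but it takes a genuinely different route from the paper's. The paper never argues irreducibility of the individual subspaces directly: it observes that $V_1$ (and the symmetric/antisymmetric split) is invariant, and then computes the character inner product $\langle \chi_{\Gamma^{\otimes 2}}, \chi_{\Gamma^{\otimes 2}} \rangle = 3$, reducing the sum over $\mathbb{Z}_3^d$ to the set-equality condition $\{j,k\}=\{\ell,m\}$ and then evaluating the first two moments of the fixed-point count of a uniformly random permutation (which equal $1$ and $2$); since $3$ can only be written as $1^2+1^2+1^2$, the three invariant subspaces in hand must be irreducible and pairwise inequivalent. You instead run a Clifford-theory-style argument adapted to the semidirect product structure: restrict to the abelian normal subgroup $\mathbb{Z}_3^d$, note that each of $V_1$, $V_2$, $V_{\Yt{1,1}}$ is multiplicity-free as a $\mathbb{Z}_3^d$-module (one-dimensional weight lines with pairwise distinct characters), and use the transitivity of $S_d$ on indices, respectively on $2$-subsets, to force any nonzero invariant subspace of each piece to be the whole piece; inequivalence then follows by comparing $\mathbb{Z}_3^d$-weights ($V_1$ against the other two) and the sign on a transposition ($V_2$ against $V_{\Yt{1,1}}$). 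What the paper's computation buys is brevity: a single inner-product calculation settles irreducibility and inequivalence simultaneously. What your argument buys is structural insight: it is averaging-free, it exhibits the irreducible subspaces constructively, and it generalizes verbatim to $G(m,1,d)$ for every $m \geq 3$. One remark on emphasis: the step you flag as the main obstacle (distinctness of the pair weights $\bm{e}_j + \bm{e}_k$) actually holds for every $m \geq 2$; the genuinely $m$-dependent point is the distinctness and nontriviality of the diagonal weights $2\bm{e}_j$, which collapse to $\bm{0}$ when $m=2$ — and indeed for $G(2,1,d)$ the proposition fails, as $V_1$ then splits further into the trivial and standard representations of $S_d$.
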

        \begin{proof}
            Trivally, $V_1$ is invariant under the action of $\Gamma\otimes\Gamma$.
            Hence, if we prove that the irreducible decomposition of the $2$-fold tensor power of $\Gamma$ consists of $3$ inequivalent irreps, the rest of the statement follows trivially.
            To prove that $\Gamma \otimes \Gamma$ decomposes into three components, consider the character $\chi_{\Gamma\otimes\Gamma}$. We aim to show that $\left\langle \chi_{\Gamma\otimes\Gamma}, \chi_{\Gamma\otimes\Gamma} \right\rangle = 3$, i.e., $\Gamma\otimes\Gamma$ consists of three different irreducible components, resulting in $3$ different irreducible characters. The character of $\Gamma$ is
            \begin{align}
                \chi_{\Gamma}(\Gamma_{\text{nat}}(\omega)D(\bm{\alpha}))
                &=
                \sum_{j \in J(\omega)} \zeta^{\alpha_{j}},
            \end{align}
            where $J(\omega)$ is the set of indices representing the fixed points of $\omega$.
            With this equation, the inner product can be calculated as follows:
            \begin{equation}\begin{split}
                \left\langle \chi_{\Gamma^{\otimes 2}}, \chi_{\Gamma^{\otimes 2}}\right\rangle
                &=
                \left\langle \chi_{\Gamma}^2, \chi_{\Gamma}^2\right\rangle
                =
                \frac{1}{\abs{G(3,1,d)}}
                \sum_{g\in G(3,1,d)}\Tr(\Gamma(g))^2 \overline{\Tr(\Gamma(g))}^2
                \\&=
                \frac{1}{d! 3^d}
                \sum_{\omega \in S_d}
                \sum_{\bm{\alpha}\in \Z_3^d}
                \sum_{j,k \in J(\omega)}
                \sum_{\ell, m \in J(\omega)}
                \zeta^{\alpha_j + \alpha_k - \alpha_\ell - \alpha_m}
                \\&=\label{eq:triple_sums_for_inner_product}
                \frac{1}{3^d}
                \underset{\omega\in\S_d}{\E}\left[
                \sum_{j,k , \ell, m\in J(\omega)}
                \sum_{\bm{\alpha}\in \Z_3^d}
                \zeta^{\alpha_j + \alpha_k - \alpha_\ell - \alpha_m} \right].
            \end{split}\end{equation}
            It is clear that the innermost sum for fixed indices $j,k,\ell,m$ and fixed $\omega$ only depends on the equality of the indices $\left\{j,k\right\}=\left\{\ell,m\right\}$ as sets. We can write
            \begin{align}\label{eq:inner_sum_zeta_jklm}
                \sum_{\bm{\alpha}\in \Z_3^d}
                \zeta^{\alpha_j + \alpha_k - \alpha_\ell - \alpha_m}
                =
                \begin{cases}
                    3^d
                    &\text{if }
                    \left\{j,k\right\}=\left\{\ell,m\right\},
                    \\
                    0 &\text{otherwise.}
                \end{cases}
            \end{align}
            Let $f(\omega) \coloneqq \abs{J(\omega)}$.
            Hence, for fix $\omega$, we get the following:
            \begin{align}\label{eq:inner2sums}
                \sum_{j,k , \ell, m\in J(\omega)}
                \sum_{\bm{\alpha}\in \Z_3^d}
                \zeta^{\alpha_j + \alpha_k - \alpha_\ell - \alpha_m}
                &=
                \sum_{\substack{j,k , \ell, m\in J(\omega)\\\left\{j,k\right\}=\left\{\ell,m\right\}}}
                3^d
                =
                \left(
                    2f(\omega)^2 - f(\omega)
                \right)
                3^d
            \end{align}
            Proceeding with Eq.~\eqref{eq:triple_sums_for_inner_product}, in order to evaluate the average over $S_d$, we define $\iota_j(\omega)$ to be the characteristic function of $j$ being a fixed point of $\omega$.
            Then, the number of fixed points can be written as $f(\omega) = \sum_{j=0}^{d-1}\iota_j(\omega)$.
            The expectation values of $f(\omega)$ and $f^2(\omega)$ are the following:
            \begin{subequations}
            \begin{align}
                \underset{\omega\in\S_d}{\E} \iota_j(\omega) &=
                \frac{1}{d!}\sum_{\omega\in\S_d}
                \iota_j(\omega) = 
                \frac{(d-1)!}{d!}
                =
                \frac{1}{d},
                \\
                \underset{\omega\in\S_d}{\E} \iota_j(\omega)\iota_k(\omega) &= \frac{(d-2)!}{d!} =
                \frac{1}{d(d-1)}
                \qquad \text{for } j\neq k,
                \\
                \underset{\omega\in\S_d}{\E}f(\omega) &= 
                \sum_{j=0}^{d-1}\underset{\omega\in\S_d}{\E}\iota_j(\omega)
                = 1,
                \\
                \underset{\omega\in\S_d}{\E}f^2(\omega) &=
                \sum_{j,k=0}^{d-1}\underset{\omega\in\S_d}{\E}\iota_j(\omega)\iota_k(\omega)
                =
                \sum_{j}^{d-1}\underset{\omega\in\S_d}{\E}\iota_j(\omega)
                +
                \sum_{\substack{j,k=0\\j \neq k}}^{d-1}\underset{\omega\in\S_d}{\E}\iota_j(\omega)\iota_k(\omega) = 2.
            \end{align}
            \end{subequations}
            Finally, the inner product of the character follows from Eq.~\eqref{eq:triple_sums_for_inner_product} and Eq.~\eqref{eq:inner2sums}, and can be calculated as
            \begin{align}
                \left\langle \chi_{\Gamma^{\otimes 2}}, \chi_{\Gamma^{\otimes 2}}\right\rangle
                &=
                \frac{1}{d!}
                \sum_{\omega \in S_d}
                \left(
                    2f(\omega)^2 - f(\omega)
                \right)
                = 2\cdot 2 - 1 = 3.
            \end{align}
            The only possible way to get 3 as the sum of positive integer squares is by three $1$s, and hence, $\Gamma \otimes \Gamma$ decomposes into three inequivalent irreducible representations.
        \end{proof}
        In the following sections, we will denote the irreducible representations appearing in $\Gamma^{\otimes 2}$ by $\Gamma_1, \Gamma_2, \Gamma_{\Yt{1,1}}$, i.e.,
        \begin{equation}
            \Gamma^{\otimes 2} \cong \Gamma_1 \oplus \Gamma_2 \oplus \Gamma_{\Yt{1,1}}.
        \end{equation}

    \subsection{Constructing generalized unitary \texorpdfstring{$2$}{2}-design}
        As a consequence of Schur's lemma, twirling over a finite unitary subgroup yields a quantity that depends on the irreducible subspaces of the subgroup.
        In this subsection, two consecutive twirlings are used to get the same result as twirling by the entire unitary group, resulting in an exact unitary $2$-design by Proposition~\ref{the_proposition}.
        \begin{theorem}\label{thm:2design_in_arbitrary_dim}
            Let $d \geq 2$ and let $\mathcal{G}$ be the image of the representation $\Gamma$ of the monomial reflection group $G(3,1,d)$ and suppose that $Q$ is a unitary matrix with the following property:
            \begin{equation}\label{eq:desired_arbitrary_2design}
                \Tr\left(
                    P_1^{Q^{\otimes 2}} P_1
                \right)
                = \frac{2d}{d+1}.
            \end{equation}
            Then $\mathcal{G}^{Q} \cdot \mathcal{G}$ is a generalized unitary $2$-design, where $\mathcal{G}^{Q} = \left\{ Q g Q^\dagger \;|\; g \in \mathcal{G}\right\}$.
        \end{theorem}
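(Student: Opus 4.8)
The plan is to deduce the theorem directly from Lemma~\ref{lemma:2design_symmetric_subspace} by taking $G = \mathcal{G}$ (the image of $\Gamma$, a finite subgroup of $\U(d)$ since every element of $G(3,1,d)$ acts as a monomial matrix with root-of-unity entries) and $U = Q$. First I would check that $\mathcal{G}$ meets the structural hypotheses of the lemma. This is exactly the content of Proposition~\ref{prop:3irreps}: the second tensor power $\Gamma^{\otimes 2}$ splits into three pairwise inequivalent irreducible representations $\Gamma_1, \Gamma_2, \Gamma_{\Yt{1,1}}$ supported on $V_1, V_2, V_{\Yt{1,1}}$. Since $V_{\Yt{1,1}}$ is the antisymmetric subspace, $\varPi_{\Yt{1,1}}|_{\mathcal{G}} = \Gamma_{\Yt{1,1}}$ is irreducible, while the symmetric subspace decomposes as $\varPi_{\Yt{2}}|_{\mathcal{G}} \cong \Gamma_1 \oplus \Gamma_2$ with the two summands inequivalent to each other and to $\Gamma_{\Yt{1,1}}$. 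This matches the hypotheses of Lemma~\ref{lemma:2design_symmetric_subspace} verbatim, with $\varPi_1 = \Gamma_1$ and $\varPi_2 = \Gamma_2$.

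Next I would compute the two dimensions that enter the lemma's overlap condition, Eq.~\eqref{eq:desired_general}. Because $V_1$ is spanned by $\{\ket{j}\otimes\ket{j}\}_{j=0}^{d-1}$, we have $\Tr P_1 = \dim V_1 = d$; and because $V_1 \oplus V_2$ is the full symmetric subspace $V_{\Yt{2}}$, we have $\Tr(P_1 + P_2) = \dim V_{\Yt{2}} = \binom{d+1}{2} = \tfrac{d(d+1)}{2}$. Substituting these values into the abstract right-hand side of Eq.~\eqref{eq:desired_general} gives
\[
    \frac{(\Tr P_1)^2}{\Tr(P_1 + P_2)} = \frac{d^2}{d(d+1)/2} = \frac{2d}{d+1}.
\]
Thus the lemma's hypothesis $\Tr\!\left(P_1^{U^{\otimes 2}} P_1\right) = (\Tr P_1)^2 / \Tr(P_1+P_2)$ is satisfied precisely when $\Tr\!\left(P_1^{Q^{\otimes 2}} P_1\right) = \tfrac{2d}{d+1}$, which is exactly the assumed property Eq.~\eqref{eq:desired_arbitrary_2design} of $Q$.

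Having verified both the structural decomposition and the numerical overlap condition, I would conclude by invoking Lemma~\ref{lemma:2design_symmetric_subspace}, which immediately yields that $\mathcal{G}^{Q} \cdot \mathcal{G}$ forms a unitary $2$-design; as a product of two finite subgroups equipped with the weight function of Eq.~\eqref{eq:weights_for_multiple_sets}, it is a generalized group $2$-design in the sense of Definition~\ref{def:generalized_group_design}. There is no real obstacle in the argument: the only substantive point is recognizing that the representation-theoretic identity of Lemma~\ref{lemma:2design_symmetric_subspace} collapses to the clean value $\tfrac{2d}{d+1}$ once the multiplicity-free decomposition of Proposition~\ref{prop:3irreps} and the dimension counts $\Tr P_1 = d$, $\Tr P_{\Yt{2}} = \tfrac{d(d+1)}{2}$ are inserted. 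The existence of a unitary $Q$ realizing Eq.~\eqref{eq:desired_arbitrary_2design} is a hypothesis of the theorem and so need not be established here.
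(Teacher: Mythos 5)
Your proof is correct and follows exactly the paper's own route: the paper likewise deduces the theorem as an immediate consequence of Lemma~\ref{lemma:2design_symmetric_subspace} using $\Tr P_1 = d$ and $\Tr P_{\Yt{2}} = d(d+1)/2$, relying on Proposition~\ref{prop:3irreps} for the hypotheses on the decomposition of $\Gamma^{\otimes 2}$. Your write-up is simply a more explicit version of the paper's one-line argument, spelling out the verification of the lemma's structural conditions.
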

        \begin{proof}
            Using that $\Tr P_1 = d$ and $\Tr P_{\Yt{2}} = d (d+1)/2$, the statement is a trivial consequence of \cref{lemma:2design_symmetric_subspace}.
        \end{proof}

        \Cref{thm:2design_in_arbitrary_dim} asserts the existence of a matrix $Q$ which fulfills the relation given by Eq.~\eqref{eq:desired_arbitrary_2design}.
        In the next part of this section, we construct a matrix that satisfies the required condition.
        The search for such a matrix follows the idea of defining the function
        \begin{align}
            f : \U(d) &\to \mathbb{R} \\
            U &\mapsto \Tr\left(
                P_1^{U^{\otimes 2}} P_1
            \right),
        \end{align}
        whose maximum value is 
        $d$, attained at the identity. The main idea is to find a matrix $B$ such that $f(B) \leq 2d/(d+1)$, and then to search for the optimal matrix $Q$ along a continuous path in $\U(d)$ between the identity and $B$.

        Let $F$ denote the matrix of the (normalized) discrete Fourier transform (DFT), with matrix elements given by
        \begin{equation}
            F_{jk} = 
            \frac{1}{\sqrt{d}}
            \exp\left(
                -\frac{2\pi i }{d}jk
            \right).
        \end{equation}
        This is a well-known transformation, but it is difficult to connect with a continuous path to the identity which is also analytically tractable.
        In order to acquire a simpler matrix, we can transform the DFT matrix as follows:
        \begin{equation}
            B \coloneqq \frac{1+i}{2} F + \frac{1-i}{2} F^\dagger,
        \end{equation}
        which is real valued and its eigenvalues are $\pm 1$.
        Importantly, since $B^2 = \mathbbm{1}$, we can give a parameterized path of unitaries that connects $B$ to the identity in $\U(d)$:
        \begin{align}\label{eq:Q_parametrized}
            Q(t) &\coloneqq \exp\left(
                i t B
            \right)
            =
            \cos(t) \mathbbm{1} + i \sin(t)B.
        \end{align}
        According to the next proposition, the path $Q(t)$ will contain at least one specific matrix $Q(t^\ast)$ that satisfies Eq.~\eqref{eq:desired_arbitrary_2design}.
        To keep this proposition compact, we define
        \begin{equation}\label{def:q_d}
            q(d) \coloneqq
            \begin{cases}
                0,& d\equiv 1,2,4,5 \pmod{8},\\[4pt]
                \sqrt{d},& d\equiv 7 \pmod{8},\\[4pt]
                -\sqrt{d},& d\equiv 3 \pmod{8},\\[4pt]
                \sqrt{2d},& d\equiv 0,6 \pmod{8}.
            \end{cases}
        \end{equation}

        \begin{proposition} \label{prop:trace_of_P1_rotatedP1}
            The overlap of $P_1$ and $P_1^{Q(t)^{\otimes 2}}$ is given by the following formula:
            \begin{align}\label{eq:trPPW_quadratic_in_sin2}
                \Tr\left(
                    P_1 P_1^{Q(t)^{\otimes 2}}
                \right)
                =
                a(d) \sin^4(t) + b(d) \sin^2(t) + d,
            \end{align}
            where
            \begin{subequations} 
            \begin{align}
                a(d) &\coloneqq \frac{2d-1}{2}
                -
                \frac{2 q(d)}{d}
                -
                \frac{
                    \operatorname{gcd}\left(
                        4, d
                    \right)
                }{2 d},
                \\
                b(d) &\coloneqq
                -2 (d-1)
                +
                \frac{2 q(d)}{d}.
            \end{align}
            \end{subequations}
            The desired value $\frac{2d}{d+1}$ can be reached by a specific $t$ given by 
            \begin{subequations}
            \begin{align}
                t &= \pm\arcsin(\sqrt{x^\ast}) + k \pi, \: k \in\mathbb{Z}, \\
                x^\ast &=\frac{-b(d)-\sqrt{\Delta(d)}}{2a(d)},
            \end{align}
            \end{subequations}
            with determinant $\Delta(d) = b(d)^2 - 4 a(d) c(d)$ with $c(d) \coloneqq d - \frac{2d}{d+1}$ the constant term in the $0$-reduced quadratic polynomial.
        \end{proposition}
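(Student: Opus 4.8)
The plan is to compute the overlap directly from the definition of $P_1$ and the explicit one-parameter family $Q(t) = \cos(t)\mathbbm{1} + i\sin(t)B$. Since $P_1 = \sum_{j}\ketbra{jj}{jj}$, conjugating by $Q(t)^{\otimes 2}$ and tracing against $P_1$ collapses everything to a sum over matrix entries, namely
\[
    \Tr\!\left(P_1 P_1^{Q(t)^{\otimes 2}}\right) = \sum_{j,k=0}^{d-1}\abs{Q(t)_{jk}}^4 .
\]
Because $B$ is real, $Q(t)_{jk} = \cos(t)\delta_{jk} + i\sin(t)B_{jk}$, so $\abs{Q(t)_{kk}}^2 = \cos^2(t) + \sin^2(t)B_{kk}^2$ on the diagonal and $\abs{Q(t)_{jk}}^2 = \sin^2(t)B_{jk}^2$ off the diagonal. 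Expanding the fourth powers and writing $c=\cos^2 t$, $s=\sin^2 t$, the diagonal fourth powers recombine with the off-diagonal ones into a single full double sum, leaving
\[
    \Tr\!\left(P_1 P_1^{Q(t)^{\otimes 2}}\right) = d\,c^2 + 2cs\sum_{k}B_{kk}^2 + s^2\sum_{j,k}B_{jk}^4 .
\]
The whole computation therefore reduces to evaluating the two sums $\sum_k B_{kk}^2$ and $\sum_{j,k}B_{jk}^4$; substituting $c=1-s$ and collecting powers of $s=\sin^2 t$ then produces a quartic in $\sin t$ with constant term $d$.

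To evaluate the sums I would use the explicit entries $B_{jk} = \tfrac{1}{\sqrt d}\bigl(\cos\tfrac{2\pi jk}{d} + \sin\tfrac{2\pi jk}{d}\bigr)$, obtained by expanding $\tfrac{1+i}{2}F + \tfrac{1-i}{2}F^\dagger$. For the quartic sum, the identity $(\cos\theta+\sin\theta)^4 = \tfrac{3}{2} + 2\sin 2\theta - \tfrac{1}{2}\cos 4\theta$ turns $\sum_{j,k}B_{jk}^4$ into elementary \emph{linear} exponential sums: $\sum_{j,k}\sin(4\pi jk/d)$ vanishes since every inner sum over $k$ is real, while $\sum_{j,k}\cos(8\pi jk/d) = d\gcd(4,d)$ by counting, via orthogonality of characters, the indices $j$ with $d\mid 4j$. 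This yields $\sum_{j,k}B_{jk}^4 = \tfrac{3}{2} - \tfrac{\gcd(4,d)}{2d}$. For the diagonal sum, $B_{kk}^2 = \tfrac{1}{d}\bigl(1+\sin\tfrac{4\pi k^2}{d}\bigr)$, hence $\sum_k B_{kk}^2 = 1 + \tfrac{1}{d}\,\mathrm{Im}\,g(2;d)$, where $g(2;d) = \sum_{k=0}^{d-1}e^{2\pi i\,2k^2/d}$ is a quadratic Gauss sum.

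The main obstacle is exactly this quadratic Gauss sum: I would invoke its classical evaluation (the Jacobi-symbol formula for odd $d$, and reciprocity or a direct computation for even $d$), whose imaginary part splits along the residue of $d$ modulo $8$ — a consequence of the doubled argument $a=2$ — and identify $\mathrm{Im}\,g(2;d) = q(d)$. Substituting both sums back and expanding in $s=\sin^2 t$ gives the coefficients $a(d)$ and $b(d)$ as stated. For the solvability claim I would set $x=\sin^2 t$ and study $a(d)x^2 + b(d)x + c(d)=0$ with $c(d)=d-\tfrac{2d}{d+1}$: the overlap equals $d$ at $x=0$ and drops to $\tfrac{3}{2}-\tfrac{\gcd(4,d)}{2d}\le\tfrac{2d}{d+1}$ at $x=1$, while $a(d)>0$ for $d\ge 2$, so by the intermediate value theorem a root lies in $(0,1]$. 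Since the upward parabola is positive at $x=0$ and non-positive at $x=1$, this is the smaller root $x^\ast = \tfrac{-b(d)-\sqrt{\Delta(d)}}{2a(d)}$, giving $t = \pm\arcsin\sqrt{x^\ast} + k\pi$; reality of $x^\ast$ simultaneously certifies $\Delta(d)\ge 0$.
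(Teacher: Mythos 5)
Your proposal is correct and takes essentially the same route as the paper: reducing the overlap to $\sum_{j,k}|Q(t)_{jk}|^4$, evaluating the same two exponential sums (the quadratic Gauss sum $\sum_j \sin(4\pi j^2/d) = q(d)$ and the counting sum $\sum_{j,k}\cos(8\pi jk/d) = d\cdot\gcd(4,d)$), and then locating the root of the resulting quadratic in $\sin^2(t)$ from its signs at the endpoints of $[0,1]$. The only loose end is your bare assertion that $a(d)>0$ for all $d\ge 2$, on which your identification of the smaller root rests --- it is true, but needs a short check (e.g., $q(d)>0$ forces $d\ge 6$, where $2q(d)/d \le 2\sqrt{2/d}$ is far below $(2d-1)/2$), whereas the paper sidesteps it by verifying that $x^\ast=\frac{-b(d)-\sqrt{\Delta(d)}}{2a(d)}$ selects the in-interval root for either sign of $a(d)$.
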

        Proof in Appendix~\ref{app:rotating_P1}.

\section{Constructing unitary designs from orthogonal designs}\label{sec:constructing_unitary_design_from_orthogonal}
    \subsection{Orthogonal designs}
        The concept of a unitary $t$-design, which is based on the unitary group, may be naturally defined for other groups as well, see, e.g., Ref.~\cite{Chiribella_2013}.
        For our purposes, the orthogonal group $\Ort(d)$ will be of main concern.
        \begin{definition}[Orthogonal $t$-design] \label{def:orthogonal_t-design}
            A finite set $\mathcal{V} \subset \Ort(d)$ with weight function $w: \mathcal{V} \to [0, 1]$ is called a weighted orthogonal $t$-design if the following equation holds for any linear transformation $M$ on $(\C^{d})^{\otimes t}$:
            \begin{equation}
                 \sum_{V \in \mathcal{V} } w(V) \,
                 V^{\otimes t}M\left(V^{\otimes t}\right)^{\dagger}
                 =
                 \integral{O\in \Ort(d)}{} O^{\otimes t}M\left(O^{\otimes t}\right)^{\dagger} \mathrm{d} O,
            \end{equation}
            where the integral on the right-hand side is taken over all elements in $\Ort(d)$ with respect to the Haar measure. The number $t$ is called the degree of the design.
        \end{definition}
        
        In this section, let us use the notation
        \begin{align}
        \begin{split}
            \mathcal{T}_{\Ort(d)}^{(t)}(M) &\coloneqq \int_{\Ort(d)} O^{\otimes t} M (O^{\otimes t})^{\dagger} \mathrm{d}O,
            \\
            \mathcal{T}_{\U(d)}^{(t)}(M) &\coloneqq \int_{\U(d)} U^{\otimes t} M (U^{\otimes t})^{\dagger} \mathrm{d}U.
        \end{split}
        \end{align}
        Moreover, for a finite subset $\mathcal{V} \subset \U(d)$ with weight function $w$, we define 
        \begin{equation}
            \mathcal{F}_{\mathcal{V}}^{(t)}(M) \coloneqq \sum_{U \in \mathcal{V}} w(U) U^{\otimes t} M (U^{\otimes t})^{\dagger}.
        \end{equation}
        Here, we are considering approximate $t$-designs, since the main statement of this section holds in this general setting for $t \leq 3$. 
        In this paper, we are considering approximate designs using the relative error as follows~\cite{brandao2016local,schuster2025random}.
        \begin{definition}[$\epsilon$-approximate unitary $t$-design]
            \label{def:multiplicative_error_approximate}
            We call a finite subset $\mathcal{V} \subset \U(d)$ an $\epsilon$-approximate unitary $t$-design for some $\epsilon > 0$ when
            \begin{equation}
                \left(
                    1 - \epsilon
                \right)
                \mathcal{F}_{\mathcal{V}}^{(t)}
                \preceq
                \mathcal{T}_{\U(d)}^{(t)}
                \preceq
                \left(
                    1 + \epsilon
                \right)
                \mathcal{F}_{\mathcal{V}}^{(t)}
                ,
            \end{equation}
            where $\mathcal{A} \preceq \mathcal{A}^\prime$ means that the $\mathcal{A}^\prime - \mathcal{A}$ is a completely positive map.
        \end{definition}
        \begin{remark}
            The definition of $\epsilon$-approximate unitary design can be generalized to $\epsilon$-approximate orthogonal design naturally.
        \end{remark}

    \subsection{Representation theory of the orthogonal group}\label{representation_theory_of_the_orthogonal_groups}
        In this section, representation theoretic results are summerized, which are based on Appendix A of Ref.~\cite{solymos2024extendibilitybrauerstates}.
        We denote by $\varPi_{\Yt{1}}$ and $\Phi_{\Yt{1}}$ the defining representations of the unitary group $\U(d)$ and the orthogonal group $\Ort(d)$, respectively, with respect to the standard basis ${|j\rangle}_{j=0}^{d-1}$.
        The representation $\Phi_{\Yt{1}}$ can be embedded into $\varPi_{\Yt{1}}$.
        The irreducible decomposition of $\Phi_{\Yt{1}}^{\otimes 2}$ is the following~\cite{Hashagen_2018}:
        \begin{equation}
            \Phi_{\Yt{1}}^{\otimes 2} = \Phi_{\text{t}} \oplus \Phi_{\text{c}} \oplus \Phi_{\Yt{1,1}},
        \end{equation}
        where $\Phi_{\text{t}}$ is the trivial representation acting on the subspace $\mathcal{W}_{\text{t}}$ spanned by
        \begin{equation}\label{eq:psi_for_ortho_2designs}
            \left|\psi\right> =
            \frac{1}{\sqrt{d}}\sum_{j=0}^{d-1}
            \left| j \right> \otimes \left| j \right>.
        \end{equation}
        The representations $\Phi_{\text{t}},\Phi_{\text{c}}$ and $\Phi_{\Yt{1,1}}$ act on the subspaces $\mathcal{W}_{\text{t}}, \mathcal{W}_{\text{c}}$ and $\mathcal{W}_{\Yt{1,1}}$, respectively, where $\mathcal{W}_{\text{c}}$ is the complementary subspace of $\mathcal{W}_{\text{t}}$ in $V_{\Yt{2}}$.
        The projections to their subspaces are $P_{\text{t}} = \left| \psi \right>\!\left< \psi \right|, P_{\text{c}} = P_{\Yt{2}} - P_{\text{t}}$ and $P_{\Yt{1,1}}$, respectively.

        Let $\varPi_{\Yt{1}}$ denote the defining representation of the unitary group $\U(d)$.
        The irreducible decomposition of the third tensor power is
        \begin{equation}
            \varPi_{\Yt{1}}^{\otimes 3}
            \cong
            \varPi_{\Yt{3}}
            \oplus
            \varPi_{\Yt{2,1}}^{\oplus 2}
            \oplus
            \varPi_{\Yt{1,1,1}},
        \end{equation}
        where the underlying subspaces are denoted by $V_{\Yt{3}}, V_{\Yt{2,1}}, V_{\Yt{1,1,1}}$.
        It is known that the subspace $V_{\Yt{2,1}}$ supports two equivalent representations; hence, it can be written as $\widetilde{V}_{\Yt{2,1}} \otimes \C^2$ where the subspace $\widetilde{V}_{\Yt{2,1}}$ is the representation subspace and $\C^2$ is the multiplicity space.
        If we take the restriction of $\varPi_{\Yt{1}}^{\otimes 3}$ to the orthogonal group $\Ort(d)$, the prior two subspaces decompose into the following invariant subspaces:
        \begin{align}
            V_{\Yt{3}} &= \mathcal{W}_{\Yt{3}(0)} \oplus \mathcal{W}_{\Yt{3}(1)},
            \\
            \widetilde{V}_{\Yt{2,1}} &= \widetilde{\mathcal{W}}_{\Yt{2,1}(0)} \oplus \widetilde{\mathcal{W}}_{\Yt{2,1}(1)},
        \end{align}
        where the traceless components are labeled by $(0)$ and the traceful components are labeled with $(1)$. The trace components correspond to $d$-dimensional representations.
        The irreducible decomposition of $\Phi_{\Yt{1}}^{\otimes 3}$ is
        \begin{equation}
            \Phi_{\Yt{1}}^{\otimes 3}
            \cong
            \Phi_{\Yt{3}}
            \oplus
            \Phi_{\Yt{2,1}}^{\oplus 2}
            \oplus
            \Phi_{\Yt{1,1,1}}
            \oplus
            \Phi_{\Yt{1}}^{\oplus 3},
        \end{equation}
        where $\Phi_{\Yt{1}}^{\oplus 3}$ belongs to the subspace $\mathcal{W}_{\Yt{3}(1)} \oplus  \widetilde{\mathcal{W}}_{\Yt{2,1}(1)}^{\oplus 2}$.
        It is straightforward to see that the trace-component subspaces can be given by
        \begin{align}
            \mathcal{W}_{\Yt{3}(1)}
            &=
            \Span\left\{\sum_{j=0}^{d-1}\left(\ket{kjj} + \ket{jkj} + \ket{jjk}\right)\right\}_{k=0}^{d-1},
            \\
            \mathcal{W}_{\Yt{3}(1)} \oplus  \widetilde{\mathcal{W}}_{\Yt{2,1}(1)}^{\oplus 2}
            &=
            \Span\left\{
                \sum_{j=0}^{d-1}\ket{kjj},
                \sum_{j=0}^{d-1}\ket{jkj},
                \sum_{j=0}^{d-1}\ket{jjk}
            \right\}.
        \end{align}
        These facts are used to prove the statements of the following part of this paper.

    \subsection{Construction}
        In this section, a method is presented to construct a unitary $2$- or $3$-design from an orthogonal $2$- or $3$-design.
        
        For the $t=2$ and $t=3$ cases, we give a relation between orthogonal $t$-designs and unitary $t$-designs. For technical reasons, we start with the $t=2$ case:
        \begin{proposition} \label{prop:2twirling_ortho_uni}
            Suppose that $W$ is a unitary with the following property:
            \begin{equation}\label{eq:W_orthogonal_2design}
                \Tr\left(
                    P_{\text{t}}^{W^{\otimes 2}}
                    P_{\text{t}}
                \right)
                =
                \frac{2}{d(d+1)}.
            \end{equation}
            Twirling with the unitary group can be decomposed into two consecutive twirlings with the orthogonal group, i.e.,
            \begin{equation}\ytableausetup{aligntableaux=top}
                \mathcal{T}_{\U(d)}^{(2)} = \mathcal{T}_{\Ort(d)^W}^{(2)} \circ \mathcal{T}_{\Ort(d)}^{(2)},
            \end{equation}
            where $\Ort(d)^W \coloneqq \{W O W^\dagger : O \in \Ort(d) \}$.
        \end{proposition}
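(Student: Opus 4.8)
The plan is to treat this as the exact-twirl analogue of \Cref{lemma:2design_symmetric_subspace}, with the continuous group $\Ort(d)$ playing the role of the finite subgroup $G$. The decomposition $\Phi_{\Yt{1}}^{\otimes 2} \cong \Phi_{\text{t}} \oplus \Phi_{\text{c}} \oplus \Phi_{\Yt{1,1}}$ into three pairwise inequivalent $\Ort(d)$-irreps mirrors the hypothesis of that lemma: the trivial and complementary pieces $\Phi_{\text{t}}, \Phi_{\text{c}}$ split the symmetric subspace $V_{\Yt{2}} = \mathcal{W}_{\text{t}} \oplus \mathcal{W}_{\text{c}}$ (taking the roles of $\varPi_1, \varPi_2$), while the antisymmetric subspace carries the remaining irreducible piece. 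I would fix an arbitrary $M$ and show that $\mathcal{T}_{\Ort(d)^W}^{(2)} \circ \mathcal{T}_{\Ort(d)}^{(2)}$ sends it to $\frac{\Tr(P_{\Yt{2}} M)}{\Tr P_{\Yt{2}}} P_{\Yt{2}} + \frac{\Tr(P_{\Yt{1,1}} M)}{\Tr P_{\Yt{1,1}}} P_{\Yt{1,1}}$, which by \Cref{the_proposition} is exactly $\mathcal{T}_{\U(d)}^{(2)}(M)$.

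For the first twirl I would apply the same Schur-lemma argument underlying \Cref{the_proposition}, now for $\Ort(d)$: since the three irreps are inequivalent, $M' \coloneqq \mathcal{T}_{\Ort(d)}^{(2)}(M) = a_{\text{t}} P_{\text{t}} + a_{\text{c}} P_{\text{c}} + a_{\Yt{1,1}} P_{\Yt{1,1}}$ with $a_j = \Tr(P_j M)/\Tr P_j$. For the second twirl I would use the conjugation identity $\mathcal{T}_{\Ort(d)^W}^{(2)}(M') = W^{\otimes 2}\, \mathcal{T}_{\Ort(d)}^{(2)}\!\left((W^{\otimes 2})^\dagger M' W^{\otimes 2}\right) (W^{\otimes 2})^\dagger$, which yields $b_{\text{t}} P_{\text{t}}^{W^{\otimes 2}} + b_{\text{c}} P_{\text{c}}^{W^{\otimes 2}} + b_{\Yt{1,1}} P_{\Yt{1,1}}^{W^{\otimes 2}}$ with $b_j = \Tr(P_j^{W^{\otimes 2}} M')/\Tr P_j$. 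The key structural input is that $W \in \U(d)$, so $W^{\otimes 2}$ commutes with the $\U(d)$-irrep projectors $P_{\Yt{2}}$ and $P_{\Yt{1,1}}$; this gives $P_{\Yt{1,1}}^{W^{\otimes 2}} = P_{\Yt{1,1}}$, keeps $P_{\text{t}}^{W^{\otimes 2}}, P_{\text{c}}^{W^{\otimes 2}}$ supported in the symmetric subspace (hence orthogonal to $P_{\Yt{1,1}}$), and enforces $P_{\text{t}}^{W^{\otimes 2}} + P_{\text{c}}^{W^{\otimes 2}} = P_{\Yt{2}}$. The antisymmetric coefficient is then immediately unchanged, $b_{\Yt{1,1}} = a_{\Yt{1,1}} = \Tr(P_{\Yt{1,1}} M)/\Tr P_{\Yt{1,1}}$.

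The decisive step is to show $b_{\text{t}} = b_{\text{c}}$. Using $\Tr P_{\text{t}} = 1$ and $\Tr P_{\Yt{2}} = d(d+1)/2$, the hypothesis \eqref{eq:W_orthogonal_2design} is precisely $\Tr(P_{\text{t}}^{W^{\otimes 2}} P_{\text{t}}) = (\Tr P_{\text{t}})^2/\Tr P_{\Yt{2}}$, the orthogonal-group counterpart of the condition in \Cref{lemma:2design_symmetric_subspace}. From this single scalar together with $\Tr(P_{\text{t}}^{W^{\otimes 2}} P_{\Yt{2}}) = \Tr(P_{\text{t}}^{W^{\otimes 2}}) = \Tr P_{\text{t}}$, I would derive the complete set of overlaps $\Tr(P_i^{W^{\otimes 2}} P_j) = \Tr P_i \Tr P_j/\Tr P_{\Yt{2}}$ for $i,j \in \{\text{t},\text{c}\}$, exactly replaying the manipulation around \eqref{eq:overlap_symmetry}. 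Substituting these and the expansion of $M'$ (discarding the $P_{\Yt{1,1}}$ term by orthogonality) collapses both coefficients to the common value $b_{\text{t}} = b_{\text{c}} = a_{\text{t}} \Tr P_{\text{t}}/\Tr P_{\Yt{2}} + a_{\text{c}} \Tr P_{\text{c}}/\Tr P_{\Yt{2}} = \Tr(P_{\Yt{2}} M)/\Tr P_{\Yt{2}}$, where the last equality uses $a_j \Tr P_j = \Tr(P_j M)$ and $\Tr(P_{\text{t}} M) + \Tr(P_{\text{c}} M) = \Tr(P_{\Yt{2}} M)$.

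Assembling the two twirls gives the claimed block form and hence $\mathcal{T}_{\U(d)}^{(2)}(M)$ by \Cref{the_proposition}; since $M$ was arbitrary, the operator identity follows. The genuinely new ingredient relative to \Cref{lemma:2design_symmetric_subspace} is only that the averaged group is continuous, so I must first confirm that $\mathcal{T}_{\Ort(d)}^{(2)}$ really is the Schur projection onto these three blocks — this relies entirely on the pairwise inequivalence recorded in \Cref{representation_theory_of_the_orthogonal_groups}. I expect the main bookkeeping obstacle to be propagating the single hypothesis \eqref{eq:W_orthogonal_2design} to all cross-overlaps, but this becomes routine once the symmetric-subspace constraint $P_{\text{t}}^{W^{\otimes 2}} + P_{\text{c}}^{W^{\otimes 2}} = P_{\Yt{2}}$ is established.
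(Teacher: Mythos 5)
Your proposal is correct and takes essentially the same route as the paper: the paper's own proof consists of noting $\Tr P_{\text{t}} = 1$ and $\Tr P_{\Yt{2}} = d(d+1)/2$ and invoking Lemma~\ref{lemma:2design_symmetric_subspace} with $\Ort(d)$ in place of the finite group $G$, which is precisely the argument you replay step by step (Schur decomposition into the three inequivalent blocks $\Phi_{\text{t}}, \Phi_{\text{c}}, \Phi_{\Yt{1,1}}$, propagation of the single overlap hypothesis to all cross-overlaps, and the collapse $b_{\text{t}} = b_{\text{c}}$). Your explicit observation that the lemma's finite-group proof carries over verbatim to Haar averaging over the compact group $\Ort(d)$ is a point the paper leaves implicit, but it introduces no new mathematical content.
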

        
        \begin{proof}
            Using the notation from \Cref{representation_theory_of_the_orthogonal_groups}, the dimension of $\mathcal{W}_{\text{t}}$ is $1$ and the dimension of $V_{\Yt{2}}$ is $\frac{d(d+1)}{2}$.
            Based on the assumption Eq.~\eqref{eq:W_orthogonal_2design} and \Cref{lemma:2design_symmetric_subspace}, the proposition follows.
        \end{proof}
        To show the existence of unitary $W$ as assumed in Eq.~\eqref{eq:W_orthogonal_2design}, take the following diagonal matrix $W$:
        \begin{equation}\label{eq:W_alpha}
            W \ket{j} = \omega_{2d}^{j\alpha} \ket{j} \qquad j = 0, \dots, d-1,
        \end{equation}
        where $\omega_{2d} = \exp\left(\frac{2\pi i}{2d}\right)$ and $\alpha$ is the solution of the real equation
        \begin{equation}\label{eq:q_alpha}
            q(\alpha) = \left| \sum_{j=0}^{d-1} \omega_{2d}^{2\alpha j} \right|^2 = \frac{2d}{d+1}.
        \end{equation}
        Although $\alpha$ is not explicit, we can use the fact that $q(0) = d^2$, $q(1) = 0$ and the continuity of $q$ to efficiently find a solution of Eq.~\eqref{eq:q_alpha} for any dimension.
        To verify Eq.~\eqref{eq:W_orthogonal_2design}, with projection $P_{\text{t}} = \ketbra{\psi}{\psi},$ where $\ket{\psi}$ is defined as in Eq.~\eqref{eq:psi_for_ortho_2designs},
        we show that
        \begin{equation}\begin{split}
            \Tr\left(
                P_{\text{t}}^{W^{\otimes 2}}
                P_{\text{t}}
            \right)
            &= \Tr \left(
                W^{\otimes 2} \ketbra{\psi}{\psi} \left(W^\dagger\right)^{\otimes 2} \ketbra{\psi}{\psi}
            \right)^2
            =
            \left|
                \bra{\psi}W^{\otimes 2} \ket{\psi}
            \right|^2 \\
            &=
            \frac{1}{d^2}
            \left|
                \sum_{j,k=0}^{d-1}
                \bra{j} W
                \ket{k}^2
            \right|^2
            =
            \frac{1}{d^2}
            \left|
                \sum_{j,k=0}^{d-1}
                \delta_{jk}
                \omega_{2d}^{2k\alpha}
            \right|^2
            =
            \frac{1}{d^2}
            \left|
                \sum_{j=0}^{d-1}
                \omega_{2d}^{2\alpha j}
            \right|^2
            =
            \frac{2}{d(d+1)}.
        \end{split}\end{equation}

        Using the same unitary $W$, there is a similar connection between orthogonal and unitary $3$-designs as well:
        \begin{proposition}\label{prop:3twirling_ortho_uni}
            Let $d \geq 5$. For $t=3$, twirling with the unitary group can be decomposed into two consecutive twirlings with the orthogonal group, i.e.,
            \begin{equation}
                \mathcal{T}_{\U(d)}^{(3)}
                =
                \mathcal{T}_{\Ort(d)^W}^{(3)} \circ \mathcal{T}_{\Ort(d)}^{(3)},
            \end{equation}
            where $W$ is defined in Eq.~\eqref{eq:W_alpha}.
        \end{proposition}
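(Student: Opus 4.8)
The plan is to follow the template of Proposition~\ref{prop:2twirling_ortho_uni}, replacing the single scalar overlap by a block analysis over the richer $t=3$ decomposition recorded in Section~\ref{representation_theory_of_the_orthogonal_groups}. By Proposition~\ref{the_proposition} it suffices to show that the composite map $\mathcal{E}\coloneqq\mathcal{T}_{\Ort(d)^W}^{(3)}\circ\mathcal{T}_{\Ort(d)}^{(3)}$ agrees with $\mathcal{T}_{\U(d)}^{(3)}$ on every $M$. First I would isolate the antisymmetric block: since $P_{\Yt{1,1,1}}$ is $\U(d)$-invariant, it is fixed by conjugation with $W^{\otimes 3}$, so both orthogonal twirls act on $V_{\Yt{1,1,1}}$ by the same scalar as $\mathcal{T}_{\U(d)}^{(3)}$ and this block is immediate. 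The working hypothesis $d\ge 5$ enters here: it guarantees that $\Phi_{\Yt{3}},\Phi_{\Yt{2,1}},\Phi_{\Yt{1,1,1}},\Phi_{\Yt{1}}$ are pairwise inequivalent $\Ort(d)$-irreps and that the relevant dimension counts hold, so Schur's lemma cleanly separates the isotypic components with multiplicity spaces of dimensions $1,2,1,3$.

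Second, I would apply the first twirl. By Schur's lemma $M'\coloneqq\mathcal{T}_{\Ort(d)}^{(3)}(M)$ is block constant on each $\Ort(d)$-isotypic component: a scalar on $\Phi_{\Yt{3}}$ (supported on $\mathcal{W}_{\Yt{3}(0)}$), a scalar on $\Phi_{\Yt{1,1,1}}$, a $2\times 2$ multiplicity operator on $\Phi_{\Yt{2,1}}^{\oplus2}$ (on $\widetilde{\mathcal{W}}_{\Yt{2,1}(0)}\otimes\C^2$), and a $3\times 3$ multiplicity operator on $\Phi_{\Yt{1}}^{\oplus 3}$. The essential new feature relative to $t=2$ is that this last block is supported on $\mathcal{W}_{\Yt{3}(1)}\oplus\widetilde{\mathcal{W}}_{\Yt{2,1}(1)}^{\oplus2}$, so it straddles both $V_{\Yt{3}}$ (one copy of $\Phi_{\Yt{1}}$) and $V_{\Yt{2,1}}$ (two copies); consequently the single orthogonal twirl already produces couplings between the two $\U(d)$-blocks that are absent from $\mathcal{T}_{\U(d)}^{(3)}$. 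The target $\mathcal{T}_{\U(d)}^{(3)}(M)$, written in this fine basis, is by contrast uniform across the traceless and traceful parts inside each of $V_{\Yt{3}}$ and $V_{\Yt{2,1}}$, and carries no cross-block coupling.

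Third, I would apply $\mathcal{T}_{\Ort(d)^W}^{(3)}$ to $M'$ and show it reassembles exactly $\mathcal{T}_{\U(d)}^{(3)}(M)$. The structural fact I would exploit is that conjugation by $W^{\otimes 3}$ touches all these extra components only through the contraction vector $\ket{\psi}$ of Eq.~\eqref{eq:psi_for_ortho_2designs}: every trace component in Section~\ref{representation_theory_of_the_orthogonal_groups} is built by inserting $\sum_j\ket{jj}$ into two of the three slots, and $W^{\otimes2}$ moves $\ket{\psi}$ in a way whose only invariant is the overlap $\bra{\psi}W^{\otimes 2}\ket{\psi}$. I would then verify that the quantities governing (i) the equalization of the $\mathcal{W}_{\Yt{3}(0)}$ and $\mathcal{W}_{\Yt{3}(1)}$ coefficients into a single scalar on $V_{\Yt{3}}$, (ii) the merging of the traceless and traceful $2\times2$ multiplicity operators into the single partial trace of $\mathcal{T}_{\U(d)}^{(3)}$ over $\widetilde{V}_{\Yt{2,1}}$, and (iii) the annihilation of every cross-$\U$-block term in the $\Phi_{\Yt{1}}$ sector, all reduce to the single number $\Tr\!\big(P_{\text{t}}^{W^{\otimes 2}}P_{\text{t}}\big)=\abs{\bra{\psi}W^{\otimes 2}\ket{\psi}}^2$. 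Since $W$ of Eq.~\eqref{eq:W_alpha} is tuned by Eq.~\eqref{eq:W_orthogonal_2design} to make this overlap equal $\tfrac{2}{d(d+1)}$ — the same value already driving Proposition~\ref{prop:2twirling_ortho_uni} through Lemma~\ref{lemma:2design_symmetric_subspace} — the same $W$ forces all three conditions at once, and assembling the blocks yields $\mathcal{E}=\mathcal{T}_{\U(d)}^{(3)}$. Conceptually this is the statement that, writing each twirl as the Hilbert--Schmidt projection onto its commutant, the part of the $\Ort(d)$-commutant (the Brauer algebra $B_3(d)$) lying beyond the $\U(d)$-commutant $\mathbb{C}[S_3]$ becomes orthogonal to its $W^{\otimes 3}$-conjugate precisely at this overlap value.

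The main obstacle is this third step: controlling the $3$-dimensional $\Phi_{\Yt{1}}$ multiplicity space, spread across two $\U(d)$-blocks, simultaneously with the $2$-dimensional $\Phi_{\Yt{2,1}}$ multiplicity space. One must fix explicit bases for the trace components, evaluate the full matrix of Hilbert--Schmidt overlaps between these components and their $W^{\otimes 3}$-conjugates, and check that each entry either reduces to a polynomial in $\abs{\bra{\psi}W^{\otimes 2}\ket{\psi}}^2$ and $d$ that vanishes at the prescribed value, or lies on the block-diagonal where it automatically coincides with $\mathcal{T}_{\U(d)}^{(3)}$. The delicate point is that not every overlap is of the form $\bra{\psi}W^{\otimes 2}\ket{\psi}$: single-strand traces such as $\Tr W=\sum_j\omega_{2d}^{\alpha j}$ arise from the through-strands of the contraction diagrams. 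I expect these auxiliary terms to drop out by the $S_3$-symmetry among the three tensor slots together with the $d\ge 5$ dimension counts, leaving the single scalar condition as the only genuine requirement — and this is exactly the computation I would carry out in the appendix.
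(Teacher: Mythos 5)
Your high-level plan points in the right direction — the entire content of the proposition does indeed live on the traceful components $\mathcal{W}_{\Yt{3}(1)}$ and $\widetilde{\mathcal{W}}_{\Yt{2,1}(1)}^{\oplus 2}$ (the antisymmetric and traceless blocks are handled by Schur's lemma exactly as you say), and the conclusion you conjecture — that the only genuine requirement is the single $t=2$ overlap condition $\Tr\bigl(P_{\text{t}}^{W^{\otimes 2}}P_{\text{t}}\bigr)=\tfrac{2}{d(d+1)}$ already secured by Proposition~\ref{prop:2twirling_ortho_uni} — is exactly what the paper establishes. But your proposal does not prove it. The decisive step three is presented as a program ("evaluate the full matrix of Hilbert--Schmidt overlaps\dots{} I expect these auxiliary terms to drop out\dots{} this is exactly the computation I would carry out in the appendix"), and you yourself flag the unresolved obstruction: single-strand traces such as $\Tr W=\sum_j\omega_{2d}^{\alpha j}$, which are \emph{not} controlled by Eq.~\eqref{eq:W_orthogonal_2design} and which you only "expect" to cancel. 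A proof that ends where the hard part begins has a genuine gap, and brute-forcing the $3\times 3$ multiplicity block of $\Phi_{\Yt{1}}^{\oplus 3}$ spread across $V_{\Yt{3}}$ and $V_{\Yt{2,1}}$ is precisely where such a computation could founder.

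The paper closes this gap with a structural trick that makes the worrying terms impossible rather than cancelling them. A general matrix element supported on $\mathcal{W}_{\Yt{3}(1)}\oplus\widetilde{\mathcal{W}}_{\Yt{2,1}(1)}^{\oplus 2}$, say $\sum_{j,\ell}\ketbra{jjk}{m\ell\ell}$, is rewritten as $\bigl(\sum_{j,\ell}\ketbra{jjk}{\ell\ell m}\bigr)\mathbb{F}_{13}$, where the flip $\mathbb{F}_{13}$ commutes with every $V^{\otimes 3}$. Since $\sum_j\ket{jj}$ is $(O\otimes O)$-invariant, the first orthogonal twirl collapses the third slot to $\tfrac{\delta_{km}}{d}\mathbbm{1}$, leaving $\bigl(d\,\ketbra{\psi}{\psi}\otimes\tfrac{\delta_{km}}{d}\mathbbm{1}\bigr)\mathbb{F}_{13}$. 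The second twirl then factors as a two-slot twirl tensored with a trivial one-slot twirl, and because $\ketbra{\psi}{\psi}$ is already $\Ort(d)$-invariant, $\mathcal{T}_{\Ort(d)^W}^{(2)}$ acting on it equals $\mathcal{T}_{\Ort(d)^W}^{(2)}\circ\mathcal{T}_{\Ort(d)}^{(2)}=\mathcal{T}_{\U(d)}^{(2)}$ by Proposition~\ref{prop:2twirling_ortho_uni}; a final exchange of Haar integrals gives $\mathcal{T}_{\U(d)}^{(3)}$. In this route no quantity other than $\bra{\psi}W^{\otimes 2}\ket{\psi}$ ever appears — the third slot never meets $W$ except through an identity operator — so the $\Tr W$ terms you were worried about simply never arise. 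If you want to salvage your commutant/overlap approach, you should adopt this factorization as the mechanism for your step three; without it, the claim that everything reduces to one scalar remains an unverified assertion.
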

        \begin{proof}
            In this proof, the notation as in \Cref{representation_theory_of_the_orthogonal_groups} is used.
            The statement of the proposition follows by showing that the result of the twirlings with the defining and rotated representations of the orthogonal group coincides with twirling with the unitary group.
            To establish this, we verify that on the subspaces $\mathcal{W}_{\Yt{3}(1)}$ and $\widetilde{\mathcal{W}}_{\Yt{2,1}(1)}^{\oplus 2}$, the maps $\mathcal{T}_{\U(d)}^{(3)}$ and $\mathcal{T}_{\Ort(d)^W}^{(3)} \circ \mathcal{T}_{\Ort(d)}^{(3)}$ act identically. This follows from the fact that both actions can be expressed solely in terms of the Young projectors $P{\Yt{3}}, P_{\Yt{2,1}}, P_{\Yt{1,1,1}}$ together with the projectors onto the subspaces $\mathcal{W}_{\Yt{3}(1)}$ and $\widetilde{\mathcal{W}}_{\Yt{2,1}(1)}^{\oplus 2}$.
            We will show the effect of the twirlings with the different orthogonal group representations on one general element of the matrix acting on $\mathcal{W}_{\Yt{3}(1)} \oplus  \widetilde{\mathcal{W}}_{\Yt{2,1}(1)}^{\oplus 2}$.
            Without loss of generality, let us consider the operator $\sum_{j,\ell=0}^{d-1}\ketbra{jjk}{m\ell\ell} \in \mathcal{W}_{\Yt{3}(1)} \oplus  \widetilde{\mathcal{W}}_{\Yt{2,1}(1)}^{\oplus 2}$,
            and 
            let $\mathbb{F}_{13}$ denote the flip between the first and third tensor components.
            It is known that the Bell state is invariant under the action of $(O\otimes O)$ for any $O \in \Ort(d)$, i.e.,
            \begin{equation}
                (O \otimes O) \sum_{j = 0}^{d-1}\ket{jj} 
                =
                \sum_{j = 0}^{d-1}\ket{jj}.
            \end{equation}
            Using the statement of Proposition~\ref{prop:2twirling_ortho_uni}, one can show that the two consecutive orthogonal $3$-twirlings give the same result as the unitary $3$-twirling.
            The first $3$-twirling gives
            \begin{equation}\begin{split}
                \mathcal{T}_{\Ort(d)}^{(3)}&
                \left(
                    \sum_{j,\ell=0}^{d-1}\ketbra{jjk}{m\ell\ell}
                \right)
                =
                \mathcal{T}_{\Ort(d)}^{(3)}
                \left(
                    \sum_{j,\ell=0}^{d-1}\ketbra{jjk}{\ell\ell m}
                    \mathbb{F}_{13}
                \right)
                \\&=
                \int_{O \in \Ort(d)}
                (\Id{} \otimes \Id{} \otimes O)
                \sum_{j,\ell=0}^{d-1}
                \left(
                    \ketbra{jjk}{\ell\ell m}
                \right)
                (\Id{} \otimes \Id{} \otimes O)^\dagger
                dO
                \mathbb{F}_{13}
                \\&=
                \left(
                    \sum_{j,\ell=0}^{d-1}
                    \left(
                        \ketbra{jj}{\ell\ell}
                    \right)
                    \otimes \frac{\delta_{km}}{d}\Id{}
                \right)
                \mathbb{F}_{13}.
            \end{split}\end{equation}
            A subsequent twirling with $\Ort(d)^W$ yields
            \begin{equation}\label{eq:3twirl_O3_example}\begin{split}
                \mathcal{T}_{\Ort(d)^W}^{(3)}
                &
                \left(
                    \left(
                        \sum_{j,\ell=0}^{d-1}
                        \left(
                            \ketbra{jj}{\ell\ell}
                        \right)
                        \otimes \frac{\delta_{km}}{d}\Id{}
                    \right)
                    \mathbb{F}_{13}
                \right) =
                \\&=
                \left(
                    \mathcal{T}_{\Ort(d)^W}^{(2)}\left(
                        \sum_{j,\ell=0}^{d-1}
                        \left(
                            \ketbra{jj}{\ell\ell}
                        \right)
                    \right)
                \otimes
                \frac{\delta_{km}}{d}\Id{}
                \right)
                \mathbb{F}_{13}
                \\&=
                \left(
                    \left(
                    \left(\mathcal{T}_{\Ort(d)^W}^{(2)}
                    \circ \mathcal{T}_{\Ort(d)}^{(2)}\right)
                        \sum_{j,\ell=0}^{d-1}
                        \left(
                            \ketbra{jj}{\ell\ell}
                        \right)
                    \right)
                \otimes
                \frac{\delta_{km}}{d}\Id{}
                \right)
                \mathbb{F}_{13}
                \\&=
                \left(
                    \mathcal{T}_{\U(d)}^{(2)}\left(
                        \sum_{j,\ell=0}^{d-1}
                        \left(
                            \ketbra{jj}{\ell\ell}
                        \right)
                    \right)
                \otimes
                \frac{\delta_{km}}{d}\Id{}
                \right)
                \mathbb{F}_{13}
                \\&=
                \left(\mathcal{T}_{\U(d)}^{(3)}
                \circ
                \mathcal{T}_{\Ort(d)}^{(3)}\right)
                \left(
                     \sum_{j,\ell=0}^{d-1}\ketbra{jjk}{m\ell\ell}
                \right)
                \\&=
                \mathcal{T}_{\U(d)}^{(3)}
                \left(
                     \sum_{j,\ell=0}^{d-1}\ketbra{jjk}{m\ell\ell}
                \right)
            \end{split}\end{equation}
            In the last equality, we have exchanged the $\U(d)$- and $\Ort(d)$-integrals and used the Haar invariance to obtain a single integral over $\U(d)$.
            Given that the action $\mathcal{T}_{\Ort(d)^W}^{(3)} \circ \mathcal{T}_{\Ort(d)}^{(3)}$ gives the same result as $\mathcal{T}_{\U(d)}^{(3)}$ in the subspaces $V_{\Yt{3}}, V_{\Yt{2,1}}, V_{\Yt{1,1,1}}$ and in the subspaces $\mathcal{W}_{\Yt{3}(1)}, \widetilde{\mathcal{W}}_{\Yt{2,1}(1)}^{\oplus 2}$, the statement is proven.
        \end{proof}
        
        \begin{lemma} \label{lemma:appr_ortho23_relative_error}
            Let $t \leq 3$.
            Consider an $\epsilon$-approximate orthogonal $t$-design  
            $\mathcal{V} \subset \Ort(d)$. Then $\mathcal{V}^{W}\cdot\mathcal{V}$ is a $(2\epsilon+\epsilon^2)$-approximate unitary $t$-design where $W \in \U(d)$ is a concrete unitary described in Proposition~\ref{prop:2twirling_ortho_uni}.
        \end{lemma}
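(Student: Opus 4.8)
The plan is to promote the exact decompositions $\mathcal{T}_{\U(d)}^{(t)} = \mathcal{T}_{\Ort(d)^W}^{(t)} \circ \mathcal{T}_{\Ort(d)}^{(t)}$ of Propositions~\ref{prop:2twirling_ortho_uni} and \ref{prop:3twirling_ortho_uni} to the approximate setting, using throughout that the ordering $\preceq$ of Definition~\ref{def:multiplicative_error_approximate} is monotone under composition with completely positive (CP) maps. Two preliminary observations are needed. First, the product design factorizes through composition: since $(g_1 g_2)^{\otimes t} = g_1^{\otimes t} g_2^{\otimes t}$ and the weight of $\mathcal{V}^{W}\cdot\mathcal{V}$ is the convolution of the individual weights, one has
\begin{equation}
    \mathcal{F}_{\mathcal{V}^{W}\cdot\mathcal{V}}^{(t)} = \mathcal{F}_{\mathcal{V}^{W}}^{(t)} \circ \mathcal{F}_{\mathcal{V}}^{(t)},
\end{equation}
and both factors are CP, being convex combinations of unitary conjugations. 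Second, $\mathcal{V}^{W}$ is itself an $\epsilon$-approximate design, but relative to the rotated group $\Ort(d)^{W}$: writing $\mathcal{C}_{W}(X) \coloneqq W^{\otimes t} X (W^{\otimes t})^{\dagger}$ for the CP, invertible conjugation superoperator, a direct computation gives $\mathcal{C}_{W} \circ \mathcal{T}_{\Ort(d)}^{(t)} \circ \mathcal{C}_{W}^{-1} = \mathcal{T}_{\Ort(d)^{W}}^{(t)}$ and $\mathcal{C}_{W} \circ \mathcal{F}_{\mathcal{V}}^{(t)} \circ \mathcal{C}_{W}^{-1} = \mathcal{F}_{\mathcal{V}^{W}}^{(t)}$.

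I would then conjugate the defining chain $(1-\epsilon)\mathcal{F}_{\mathcal{V}}^{(t)} \preceq \mathcal{T}_{\Ort(d)}^{(t)} \preceq (1+\epsilon)\mathcal{F}_{\mathcal{V}}^{(t)}$ by $\mathcal{C}_{W}$ on the left and $\mathcal{C}_{W}^{-1}$ on the right; since $\mathcal{C}_{W}$ and $\mathcal{C}_{W}^{-1} = \mathcal{C}_{W^{\dagger}}$ are CP, the map $\mathcal{A}\mapsto\mathcal{C}_{W}\circ\mathcal{A}\circ\mathcal{C}_{W}^{-1}$ preserves complete positivity, so the ordering survives and yields $(1-\epsilon)\mathcal{F}_{\mathcal{V}^{W}}^{(t)} \preceq \mathcal{T}_{\Ort(d)^{W}}^{(t)} \preceq (1+\epsilon)\mathcal{F}_{\mathcal{V}^{W}}^{(t)}$. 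With both approximations in hand, I would chain them across the composition using the elementary monotonicity $\mathcal{A}\preceq\mathcal{B} \Rightarrow \mathcal{C}\circ\mathcal{A}\preceq\mathcal{C}\circ\mathcal{B}$ and $\mathcal{A}\circ\mathcal{C}\preceq\mathcal{B}\circ\mathcal{C}$ valid for any CP map $\mathcal{C}$: composing $\mathcal{T}_{\Ort(d)}^{(t)} \preceq (1+\epsilon)\mathcal{F}_{\mathcal{V}}^{(t)}$ on the left by the CP map $\mathcal{T}_{\Ort(d)^{W}}^{(t)}$, then $\mathcal{T}_{\Ort(d)^{W}}^{(t)} \preceq (1+\epsilon)\mathcal{F}_{\mathcal{V}^{W}}^{(t)}$ on the right by the CP map $\mathcal{F}_{\mathcal{V}}^{(t)}$, gives
\begin{equation}
    \mathcal{T}_{\U(d)}^{(t)} = \mathcal{T}_{\Ort(d)^{W}}^{(t)} \circ \mathcal{T}_{\Ort(d)}^{(t)} \preceq (1+\epsilon)^2\, \mathcal{F}_{\mathcal{V}^{W}}^{(t)} \circ \mathcal{F}_{\mathcal{V}}^{(t)} = (1+2\epsilon+\epsilon^2)\,\mathcal{F}_{\mathcal{V}^{W}\cdot\mathcal{V}}^{(t)}.
\end{equation}
The mirror-image lower chain gives $\mathcal{T}_{\U(d)}^{(t)} \succeq (1-\epsilon)^2\,\mathcal{F}_{\mathcal{V}^{W}\cdot\mathcal{V}}^{(t)}$; since $(1-\epsilon)^2 = 1-2\epsilon+\epsilon^2 \geq 1-2\epsilon-\epsilon^2$ and $\mathcal{F}_{\mathcal{V}^{W}\cdot\mathcal{V}}^{(t)}$ is CP, this already implies $(1-2\epsilon-\epsilon^2)\,\mathcal{F}_{\mathcal{V}^{W}\cdot\mathcal{V}}^{(t)} \preceq \mathcal{T}_{\U(d)}^{(t)}$, which together with the upper bound is exactly the statement that $\mathcal{V}^{W}\cdot\mathcal{V}$ is a $(2\epsilon+\epsilon^2)$-approximate unitary $t$-design.

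The bookkeeping of constants and the transitivity and scaling properties of $\preceq$ are routine. The step requiring the most care is the transport of the approximation property under $\mathcal{C}_{W}$: one must verify both that $\mathcal{C}_{W}$ intertwines the orthogonal twirl with its rotated counterpart (and likewise for $\mathcal{F}$), and that similarity by a CP, CP-invertible superoperator preserves the CP ordering. Once this is secured, the result for $t\in\{2,3\}$ follows directly from the exact decompositions, with the hypothesis $d\geq 5$ inherited from Proposition~\ref{prop:3twirling_ortho_uni} in the $t=3$ case.
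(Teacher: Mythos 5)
Your proof is correct and follows essentially the same route as the paper's: decompose $\mathcal{T}_{\U(d)}^{(t)}$ via Propositions~\ref{prop:2twirling_ortho_uni} and~\ref{prop:3twirling_ortho_uni}, factorize $\mathcal{F}_{\mathcal{V}^W\cdot\mathcal{V}}^{(t)} = \mathcal{F}_{\mathcal{V}^W}^{(t)} \circ \mathcal{F}_{\mathcal{V}}^{(t)}$, transport the $\epsilon$-approximation to the rotated copy by conjugation, and chain the two operator inequalities using CP-monotonicity to get the $(1\pm\epsilon)^2$ bounds. The only difference is that you spell out the conjugation step $\mathcal{C}_W \circ \mathcal{T}_{\Ort(d)}^{(t)} \circ \mathcal{C}_W^{-1} = \mathcal{T}_{\Ort(d)^W}^{(t)}$ explicitly, which the paper compresses into a single sentence.
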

        \begin{proof}
            Note that throughout the proof we restrict to $t \leq 3$, although we keep the notation simply as $t$ for readability.
            Since $\mathcal{V}$ is an $\epsilon$-approximate orthogonal $t$-design, by definition we know that
            \begin{equation} \label{eq:assumption_ortho2design}
                (1 - \epsilon) \mathcal{F}_{\mathcal{V}}^{(t)}
                \preceq
                \mathcal{T}_{\Ort(d)}^{(t)}
                \preceq
                (1 + \epsilon) \mathcal{F}_{\mathcal{V}}^{(t)} \,\text{.}
            \end{equation}
            Our goal is to establish
            \begin{equation}\label{eq:approximate_ortho_23design_goal}
                (1 - 2\epsilon - \epsilon^2) \mathcal{F}_{\mathcal{V}^W \cdot \mathcal{V}}^{(t)}
                \preceq
                \mathcal{T}_{\U(d)}^{(t)}
                \preceq
                (1 + 2\epsilon + \epsilon^2) \mathcal{F}_{\mathcal{V}^W \cdot \mathcal{V}}^{(t)}
                \,\text{.}
            \end{equation}
            By \Cref{prop:2twirling_ortho_uni} and \Cref{prop:3twirling_ortho_uni}, the unitary $t$-twirl decomposes as $\mathcal{T}_{\U(d)}^{(t)} = \mathcal{T}_{\Ort(d)^W}^{(t)} \circ \mathcal{T}_{\Ort(d)}^{(t)}$.
            Moreover, $\mathcal{F}_{\mathcal{V}^W \cdot \mathcal{V}}^{(t)} = \mathcal{F}_{\mathcal{V}^W}^{(t)} \circ \mathcal{F}_{\mathcal{V}}^{(t)}$.
            The inequalities in Eq.~\eqref{eq:assumption_ortho2design} also hold for $\mathcal{V}^W$ because conjugation by $W$ preserves complete positivity and operator order.
            Since operator inequalities are preserved under left- and right-composition with completely positive maps, we obtain
            \begin{subequations}\begin{align}
                \mathcal{T}_{\U(d)}^{(t)}
                &=
                \mathcal{T}_{\Ort(d)^W}^{(t)}
                \circ
                \mathcal{T}_{\Ort(d)}^{(t)}
                \succeq
                (1 - \epsilon)
                \mathcal{T}_{\Ort(d)^W}^{(t)}
                \circ
                \mathcal{F}_{\mathcal{V}}^{(t)}
                \succeq
                (1 - \epsilon)^2
                \mathcal{F}_{\mathcal{V}^{W}}^{(t)}
                \circ
                \mathcal{F}_{\mathcal{V}}^{(t)}
                \text{,}
                \\
                \mathcal{T}_{\U(d)}^{(t)}
                &=
                \mathcal{T}_{\Ort(d)^W}^{(t)}
                \circ
                \mathcal{T}_{\Ort(d)}^{(t)}
                \preceq
                (1 + \epsilon)
                \mathcal{T}_{\Ort(d)^W}^{(t)}
                \circ
                \mathcal{F}_{\mathcal{V}}^{(t)}
                \preceq
                (1 + \epsilon)^2
                \mathcal{F}_{\mathcal{V}^{W}}^{(t)}
                \circ
                \mathcal{F}_{\mathcal{V}}^{(t)}
                \text{.}
            \end{align}\end{subequations}
            The above inequalities give a strictly stronger lower bound than the one stated on the LHS of Eq.~\eqref{eq:approximate_ortho_23design_goal}, while the corresponding upper bound coincides exactly with the RHS of Eq.~\eqref{eq:approximate_ortho_23design_goal}.
        \end{proof}
        
        As a special case, an exact orthogonal $2$-design or $3$-design yields an exact unitary $2$-design or $3$-design, respectively.
        \begin{theorem} \label{thm:23design_from_orthogonal23}
            Let $t \leq 3$.
            Consider an orthogonal $t$-design $\mathcal{V} \subset \Ort(d)$. Then there exists a unitary $W \in \U(d)$ such that the set $\mathcal{V}^{W} \cdot \mathcal{V}$ forms a unitary generalized group $t$-design.
        \end{theorem}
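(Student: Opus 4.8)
The plan is to recognize that this theorem is exactly the $\epsilon = 0$ specialization of Lemma~\ref{lemma:appr_ortho23_relative_error}, and then to make explicit why the resulting ensemble inherits the required generalized group structure. The existence of the unitary $W$ is already guaranteed upstream: for $t \le 2$ any dimension works via Proposition~\ref{prop:2twirling_ortho_uni}, while for $t = 3$ one invokes Proposition~\ref{prop:3twirling_ortho_uni}, which supplies the explicit diagonal $W$ of Eq.~\eqref{eq:W_alpha} (valid for $d \ge 5$). So the first step is simply to fix this $W$.

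Next I would convert the exactness hypothesis into an operator identity. Since $\mathcal{V}$ is an exact orthogonal $t$-design, its twirl reproduces the orthogonal Haar twirl, i.e.\ $\mathcal{F}_{\mathcal{V}}^{(t)} = \mathcal{T}_{\Ort(d)}^{(t)}$. Conjugating each design element by $W$ and pulling the $W^{\otimes t}$ factors outside the weighted sum shows that $\mathcal{V}^W$ is again an exact orthogonal design, now for the conjugated group, so that $\mathcal{F}_{\mathcal{V}^W}^{(t)} = \mathcal{T}_{\Ort(d)^W}^{(t)}$. Using the factorization $\mathcal{F}_{\mathcal{V}^W \cdot \mathcal{V}}^{(t)} = \mathcal{F}_{\mathcal{V}^W}^{(t)} \circ \mathcal{F}_{\mathcal{V}}^{(t)}$ already employed in the proof of Lemma~\ref{lemma:appr_ortho23_relative_error}, I would chain these equalities to obtain
\begin{equation}
    \mathcal{F}_{\mathcal{V}^W \cdot \mathcal{V}}^{(t)}
    = \mathcal{T}_{\Ort(d)^W}^{(t)} \circ \mathcal{T}_{\Ort(d)}^{(t)}
    = \mathcal{T}_{\U(d)}^{(t)},
\end{equation}
where the last equality is precisely Proposition~\ref{prop:2twirling_ortho_uni} (for $t = 2$) or Proposition~\ref{prop:3twirling_ortho_uni} (for $t = 3$); the $t = 1$ case is immediate, since the complexified defining representation of $\Ort(d)$ is irreducible and the one-copy orthogonal twirl already coincides with the unitary one. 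By Proposition~\ref{the_proposition}, reproducing the full unitary twirl is equivalent to being a unitary $t$-design, so $\mathcal{V}^W \cdot \mathcal{V}$ is a unitary $t$-design.

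Finally I would confirm the \emph{generalized group} structure demanded by Definition~\ref{def:generalized_group_design}. Taking the orthogonal design $\mathcal{V}$ to be an orthogonal group design, i.e.\ a finite subgroup of $\Ort(d) < \U(d)$, the conjugate $\mathcal{V}^W = W \mathcal{V} W^\dagger$ is again a finite subgroup of $\U(d)$. Hence $\mathcal{V}^W \cdot \mathcal{V}$ is a product of two finite subgroups, and equipping it with the multiplicity weight~\eqref{eq:weights_for_multiple_sets} is exactly what makes the factorization $\mathcal{F}_{\mathcal{V}^W \cdot \mathcal{V}}^{(t)} = \mathcal{F}_{\mathcal{V}^W}^{(t)} \circ \mathcal{F}_{\mathcal{V}}^{(t)}$ hold with the correct normalization, so the ensemble meets Definition~\ref{def:generalized_group_design}.

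The conceptual content — and the only genuinely hard part — lies entirely upstream, in the construction of $W$ carried out in Propositions~\ref{prop:2twirling_ortho_uni} and~\ref{prop:3twirling_ortho_uni}; granting those, the present theorem is a short bookkeeping argument about composing exact twirls. The one point to treat carefully is the dimensional range: the three-copy decomposition is only established for $d \ge 5$, so for $t = 3$ the statement must be read with that restriction, and genuinely small dimensions would require a separate check of the overlap condition Eq.~\eqref{eq:W_orthogonal_2design} together with the three-copy subspace analysis.
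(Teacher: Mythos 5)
Your proposal is correct and takes essentially the same route as the paper: the paper's own proof is a one-line reduction to \Cref{lemma:appr_ortho23_relative_error} by taking $\epsilon \to 0$, which is precisely the specialization you carry out, with \Cref{prop:2twirling_ortho_uni} and \Cref{prop:3twirling_ortho_uni} supplying the unitary $W$. Your additional care---chaining the exact operator identities $\mathcal{F}_{\mathcal{V}^W \cdot \mathcal{V}}^{(t)} = \mathcal{T}_{\Ort(d)^W}^{(t)} \circ \mathcal{T}_{\Ort(d)}^{(t)} = \mathcal{T}_{\U(d)}^{(t)}$ directly rather than via a limit, noting that $\mathcal{V}$ must itself be a finite subgroup for \Cref{def:generalized_group_design} to apply, and flagging the $d \ge 5$ restriction inherited from \Cref{prop:3twirling_ortho_uni} for $t = 3$---goes slightly beyond the paper's one-liner but does not change the argument.
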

        \begin{proof}
            Trivial based on \Cref{lemma:appr_ortho23_relative_error} by taking $\epsilon \to 0$.
        \end{proof}

        To showcase the use of Theorem \ref{thm:23design_from_orthogonal23}, we present a number of examples.
        With GAP systems, it can be shown that several groups have orthogonal irreducible representations whose image constitutes a $t$-design with $t=2$ or $t=3$(from several possible irreps, just one is selected).
        From these irreducible representations, exact unitary $2$- or $3$-designs can be constructed using Theorem~\ref{thm:23design_from_orthogonal23}.
        \begin{table}[h]
            \centering
            \begin{tabular}{lccc}
            \hline
            \textbf{Name of group in GAP} & \textbf{Index of irrep} & \textbf{Dimension} & t \\
            \hline
            \texttt{(3\^{}2:2xG2(3)).2} & 31 & 14 & 3 \\
            \texttt{(2x2\^{}(1+8)\_+).O8+(2)} & 156 & 16 & 3 \\
            \texttt{Co2} & 2 & 23 & 3 \\
            \texttt{2.Co1} & 102 & 24 & 3 \\
            \texttt{Isoclinic(2.F4(2).2)} & 92 & 52 & 3 \\
            \hdashline
            \texttt{2.U6(2)} & 2 & 22 & 2 \\
            \texttt{S6(5)} & 2 & 63 & 2 \\
            \texttt{Fi22} & 2 & 78 & 2 \\
            \texttt{HN} & 2 & 133 & 2 \\
            \texttt{Th} & 2 & 248 & 2 \\
            \hline
            \end{tabular}
            \caption{Data extracted from GAP (group name, irrep index, dimension, and design degree $t$) for irreducible representations whose image yields orthogonal $t$-designs. (Wherever several possible irreps are available for our use, just one is selected.)
            These representations provide explicit examples that can be used in \Cref{thm:23design_from_orthogonal23} to construct exact unitary $2$- and $3$-designs.
            All computations were carried out using GAP 4.14.0~\cite{GAP4}.
            }
            \label{table:orthogonal_3designs_examples_GAP}
        \end{table}
        Since the Clifford group on four qubits forms a unitary 3-design in dimension 16, the example in the paper merely illustrates a direct application of the theorem.

\section{Summary and Outlook}
    The current paper establishes the concept of a generalized group $t$-design and provides several construction methods.
    One such construction is enabled by \Cref{thm:creating_t_design_from_groups}, establishing a procedure for constructing a generalized group $t$-design using finite groups whose representations admit an easily verifiable property.
    Moreover, we mention some examples of novel $3$- and $4$-design constructions using this procedure. Another notable construction is given in \Cref{thm:2design_in_arbitrary_dim}, which allows one to construct an explicit generalized group $2$-design in arbitrary dimensions using the natural representation of the monomial reflection group. We have also described a method for converting orthogonal $2$- or $3$-designs into unitary designs in \Cref{thm:23design_from_orthogonal23}, together with several examples.

    There is an abundance of possible research directions based on group representations and the ideas presented here.
    As a start, we plan to carry out a thorough search through the finite groups using the GAP system to identify cases where $5$- or higher-degree-designs could be constructed using \Cref{thm:creating_t_design_from_groups}.
    On the more ambitious side, one could envision obtaining a construction for $3$-designs for general dimensions in a similar way to \Cref{thm:2design_in_arbitrary_dim}.
    A further research direction is to examine the consequences of our $2$- and higher-degree-design constructions in applications such as quantum process tomography, randomized benchmarking, and (thrifty) shadow estimation.

\section*{Acknowledgement}
    We thank David Amaro Alcalá for fruitful discussions.
    This research was supported by the Ministry of Culture and Innovation and the National Research, Development and Innovation Office (NKFIH) through the Quantum Information National Laboratory of Hungary (Grant No. 2022-2.1.1-NL-2022-00004), the AI4QT project Nr. 2020-1.2.3-EUREKA-2022-00029 and the Grant No. FK135220.
    The authors were also supported by the QuantERA project HQCC-101017733 and ZZ was partially supported by the Horizon Europe programme HORIZON-CL4-2022-QUANTUM-01-SGA via the project 101113946 OpenSuperQPlus100.

\newpage

\renewcommand{\thesection}{Appendix \Alph{section}}

\appendix

\section{Rotation of the \texorpdfstring{$d$}{d}-dimensional irreducible representation of the monomial reflection group \texorpdfstring{$G(3,1,d)$}{G(3,1,d)}} \label{app:rotating_P1}
    In this section, the proof of Proposition~\ref{prop:trace_of_P1_rotatedP1} is presented in detail.
    The first step describes the overlap of the projection $P_1$ and its rotation $P_1^{Q^{\otimes 2}}$ in terms of the matrix elements of $Q$:
    \begin{proposition}
        Let $P_1$ be the projection onto the subspace spanned by $\{\ket{j}\otimes\ket{j}\}_{j=0}^{d-1}$.
        The trace of the product of $P_1$ and $P_1$ rotated by $Q$ is given by the equation:
        \begin{equation}\label{eq:tr_P1_rotatedP1_general}
            \Tr(P_1^{Q^{\otimes 2}} P_1) =
            \sum_{j,k = 0}^{d-1}
            \left|\left\langle
                j
                \middle|
                Q
                \middle|
                k
            \right\rangle\right|^4
        \end{equation}
    \end{proposition}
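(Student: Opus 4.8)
The plan is to expand everything in the computational basis and reduce the trace to a sum of squared matrix elements. First I would write the projector explicitly as $P_1 = \sum_{j=0}^{d-1} \ket{jj}\bra{jj}$, using the shorthand $\ket{jj} \coloneqq \ket{j}\otimes\ket{j}$, and record that its rotation is $P_1^{Q^{\otimes 2}} = (Q\otimes Q)\,P_1\,(Q\otimes Q)^\dagger = \sum_{j} \bigl(Q\ket{j}\otimes Q\ket{j}\bigr)\bigl(\bra{j}Q^\dagger \otimes \bra{j}Q^\dagger\bigr)$, where I have used $(Q\otimes Q)\ket{jj} = Q\ket{j}\otimes Q\ket{j}$.

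Next I would evaluate the trace by exploiting the second copy of $P_1$: since $P_1 = \sum_k \ket{kk}\bra{kk}$, cyclicity of the trace gives $\Tr(P_1^{Q^{\otimes 2}} P_1) = \sum_{k} \bra{kk} P_1^{Q^{\otimes 2}} \ket{kk}$, so it suffices to compute the diagonal matrix elements $\bra{kk} P_1^{Q^{\otimes 2}} \ket{kk}$. Substituting the expression above, each such diagonal element becomes $\sum_j \bra{kk}\bigl(Q\ket{j}\otimes Q\ket{j}\bigr)\bigl(\bra{j}Q^\dagger \otimes \bra{j}Q^\dagger\bigr)\ket{kk} = \sum_j \bigl|\bra{kk}\bigl(Q\ket{j}\otimes Q\ket{j}\bigr)\bigr|^2$, where I used that the bra-factor is the complex conjugate of the ket-factor.

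The key simplification is the factorization of the tensor-product overlap: $\bra{kk}\bigl(Q\ket{j}\otimes Q\ket{j}\bigr) = \langle k|Q|j\rangle\,\langle k|Q|j\rangle = \langle k|Q|j\rangle^2$, so that $\bigl|\bra{kk}\bigl(Q\ket{j}\otimes Q\ket{j}\bigr)\bigr|^2 = |\langle k|Q|j\rangle|^4$. Summing over both $k$ and $j$ then yields $\Tr(P_1^{Q^{\otimes 2}} P_1) = \sum_{j,k} |\langle k|Q|j\rangle|^4$, which coincides with the claimed formula after relabeling the dummy indices $j \leftrightarrow k$ in the symmetric double sum.

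This is a direct basis computation with no genuine obstacle; the only point requiring care is the tensor bookkeeping — specifically, recognizing that the doubled overlap factorizes into the \emph{square} of a single matrix element, which is precisely what produces a fourth power rather than a square and accounts for the form of the right-hand side.
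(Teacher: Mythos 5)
Your proof is correct and follows essentially the same route as the paper's: both are direct computational-basis calculations that expand $P_1 = \sum_j \ket{jj}\bra{jj}$, reduce the trace to diagonal matrix elements on the span of $\{\ket{kk}\}$, and use the factorization $\bra{kk}\bigl(Q\ket{j}\otimes Q\ket{j}\bigr) = \langle k|Q|j\rangle^2$ to produce the fourth power. The only cosmetic difference is that the paper carries an extra summation index with a Kronecker delta where you invoke $\Tr(A P_1) = \sum_k \bra{kk}A\ket{kk}$ directly.
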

    \begin{proof}
        By direct calculation, we get
        \begin{align}\begin{split}
            \Tr\left(
                P_1^{Q^{\otimes 2}} P_1
            \right)
            &=
                \sum_{\ell=0}^{d-1}
                \sum_{j = 0}^{d-1}
                \sum_{k = 0}^{d-1}
                \overbrace{\braket{\ell\ell}{jj}}^{\delta_{\ell j}}
                \bra{jj}
                Q^{\otimes 2} 
                \ketbra{kk}{kk}
                Q^{\dagger\otimes 2}
                \ket{\ell \ell}
            \\&=
                \sum_{j = 0}^{d-1}
                \sum_{k = 0}^{d-1}
                \bra{jj}
                Q^{\otimes 2} 
                \ketbra{kk}{kk}
                Q^{\dagger\otimes 2}
                \ket{jj}
            \\&=
                \sum_{j,k = 0}^{d-1}
                \left|\left\langle
                    j
                    \middle|
                    Q
                    \middle|
                    k
                \right\rangle\right|^4.
        \end{split}\end{align}
    \end{proof}
    To proceed, let us introduce the notation
    \begin{equation}
        \tau_{jk} \coloneqq \frac{2\pi j k }{d},
    \end{equation}
    for brevity.
    To obtain explicit formulas when we use Eq.~\eqref{eq:Q_parametrized} in Eq.~\eqref{eq:tr_P1_rotatedP1_general}, two quadratic Gauss sums have to be considered:
    \begin{proposition}
        Using the previous notation, the following identities hold:
        \begin{align}
            \label{eq:sin-sum}
            \sum_{j=0}^{d-1}\sin(2\tau_{jj})
            &=
            \sum_{j=0}^{d-1}
            \sin\!\left(\frac{4\pi j^2}{d}\right)
            =
            q(d)
            \coloneqq 
            \begin{cases}
                0,& d\equiv 1,2,4,5 \pmod{8},\\[4pt]
                \sqrt{d},& d\equiv 7 \pmod{8},\\[4pt]
                -\sqrt{d},& d\equiv 3 \pmod{8},\\[4pt]
                \sqrt{2d},& d\equiv 0,6 \pmod{8},
            \end{cases}
            \\ \label{eq:cos-sum}
            \sum_{j,k=0}^{d-1}\cos(4\tau_{jk})
            &=\sum_{j,k=0}^{d-1}\cos\!\left(\frac{8\pi j k}{d}\right)
            = d\cdot\gcd(4,d),
        \end{align}
        where the function $q(d)$ is defined by Eq.~\eqref{def:q_d}.
    \end{proposition}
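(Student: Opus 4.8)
The plan is to handle the two identities by entirely different mechanisms, since Eq.~\eqref{eq:sin-sum} is genuinely arithmetic while Eq.~\eqref{eq:cos-sum} is elementary. I would start with the sine sum. Writing $\sin(4\pi j^2/d) = \operatorname{Im} e^{4\pi i j^2/d}$, the quantity to evaluate is $\operatorname{Im} G$, where
\begin{equation}
    G \coloneqq \sum_{j=0}^{d-1} e^{4\pi i j^2/d}
\end{equation}
is a generalized quadratic Gauss sum. Rather than splitting into the $d$-odd and $d$-even cases and invoking Jacobi-symbol evaluations, I would apply the Landsberg--Schaar reciprocity relation, which holds uniformly for all positive integers; specializing
\begin{equation}
    \frac{1}{\sqrt{d}}\sum_{n=0}^{d-1} e^{2\pi i a n^2/d}
    = \frac{e^{\pi i/4}}{\sqrt{2a}}\sum_{n=0}^{2a-1} e^{-\pi i d n^2/(2a)}
\end{equation}
to $a=2$ collapses the right-hand side to a sum of only four terms, so that $G = \tfrac{\sqrt{d}}{2}\,e^{\pi i/4}\sum_{n=0}^{3} e^{-\pi i d n^2/4}$.

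Next I would simplify this length-four sum. Using $e^{-9\pi i d/4} = e^{-\pi i d/4}$ (the exponents differ by $2\pi i d$) and $e^{-\pi i d} = (-1)^d$, the four terms collapse to $1 + (-1)^d + 2e^{-\pi i d/4}$. Multiplying by $e^{\pi i/4} = (1+i)/\sqrt{2}$ and extracting the imaginary part yields
\begin{equation}
    \operatorname{Im} G = \frac{\sqrt{d}}{2\sqrt{2}}\Big(1 + (-1)^d + 2\cos(\tfrac{\pi d}{4}) - 2\sin(\tfrac{\pi d}{4})\Big),
\end{equation}
a closed expression whose ingredients $(-1)^d$, $\cos(\pi d/4)$ and $\sin(\pi d/4)$ are all periodic with period $8$ in $d$. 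I would then simply tabulate the eight residues of $d$ modulo $8$ and read off the value $0$, $\pm\sqrt{d}$, or $\sqrt{2d}$; this reproduces exactly the four cases defining $q(d)$ in Eq.~\eqref{def:q_d}.

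The second identity I would dispatch by orthogonality of characters. Writing $\cos(8\pi jk/d) = \operatorname{Re} e^{2\pi i(4jk)/d}$ (the double sum being manifestly real, taking the real part is free), the sum factors as $\sum_{j}\bigl(\sum_{k} e^{2\pi i(4j)k/d}\bigr)$, and the inner geometric sum equals $d$ precisely when $d \mid 4j$ and vanishes otherwise. Setting $g \coloneqq \gcd(4,d)$, the divisibility $d \mid 4j$ is equivalent to $(d/g)\mid j$, which holds for exactly $g$ values of $j$ in $\{0,\dots,d-1\}$; hence the total is $d\cdot g = d\cdot\gcd(4,d)$, as claimed.

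The only real obstacle is the quadratic Gauss sum: one must fix the correct normalization in the reciprocity relation, carry the phase $e^{\pi i/4}$ without error, and execute the case split modulo $8$ cleanly. The cosine identity, by contrast, is a routine geometric-series computation requiring no Gauss-sum machinery at all.
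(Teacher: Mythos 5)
Your proof is correct, and for the sine sum it takes a genuinely different route from the paper's. The paper evaluates $q(d)=\Im\bigl(\sum_{j=0}^{d-1}e^{4\pi i j^2/d}\bigr)$ by splitting into odd and even $d$ and citing standard quadratic-Gauss-sum evaluations from the literature: for odd $d$ the closed form $g(2,d)=\epsilon_d\sqrt{d}\,(-1)^{(d^2-1)/8}$, and for even $d$ a gcd-reduction $\Im(g(2,d))=2\,\Im(g(1,d/2))$ combined with the classical formula for $\Im(g(1,d))$. You instead invoke the Landsberg--Schaar reciprocity relation, which holds uniformly for all positive integers (the parity hypothesis $a'c+b$ even in the underlying reciprocity theorem is automatic here since your modulus $2a=4$ is even), collapse the Gauss sum to the four-term expression
\begin{equation}
    \Im G = \frac{\sqrt{d}}{2\sqrt{2}}\Bigl(1+(-1)^d+2\cos\bigl(\tfrac{\pi d}{4}\bigr)-2\sin\bigl(\tfrac{\pi d}{4}\bigr)\Bigr),
\end{equation}
and read off the eight residues modulo $8$; I have checked that this table reproduces all four cases of $q(d)$ exactly. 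What your approach buys is uniformity and self-containment: no odd/even dichotomy, a single cited theorem, and a finite trigonometric verification in place of assembling several separate evaluations (including the $\epsilon_d$ and $(-1)^{(d^2-1)/8}$ bookkeeping). What the paper's approach buys is that each ingredient is an off-the-shelf stated result, whereas Landsberg--Schaar, though classical, is itself a deeper theorem (typically proved by Poisson summation or theta-function asymptotics), so the total logical depth is comparable. For the cosine sum your argument --- geometric series plus counting the $\gcd(4,d)$ solutions of $d\mid 4j$ --- is essentially identical to the paper's.
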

    
    \begin{proof}
        We prove the two identities separately.
        
        \textbf{The sine diagonal-sum (Eq.~\eqref{eq:sin-sum}).}
        Introducing the quadratic Gauss sum for any integer $a$ and positive integer $d$:
        \begin{equation}
            g(a,d) \coloneqq
            \sum_{j=0}^{d-1} e^{2\pi i \frac{aj^2}{d}}.
        \end{equation}
        Then the function $q$ can be written as
        \begin{equation}
            q(d) = \Im\left(
                g(2, d)
            \right).
        \end{equation}
        The following statements regarding Gauss sums follow from results in Ref.~\cite{berndt1998gauss}, in particular Equations~\labelcref{eq:im_g_1_d,eq:g_2_d,eq:im_g_2_d}.
        The quadratic Gauss sum can be evaluated for the case of $a=1$:
        \begin{equation}\label{eq:im_g_1_d}
            \Im \left(g(1, d)\right) = \begin{cases}
                0 & \text{if } a \equiv 1,2 \pmod{4}, \\
                \sqrt{d} & \text{if } a \equiv 0,3 \pmod{4}.
            \end{cases}
        \end{equation}
        For odd $d$, the Gauss sum can be written as
        \begin{equation}\label{eq:g_2_d}
            g(2,d) = \epsilon_{d} \sqrt{d} (-1)^{\frac{d^2-1}{8}},
        \end{equation}
        where $\epsilon_d = 1$ if $d \equiv 1 \pmod{4}$ and $\epsilon_d = i$ if $m \equiv 3 \pmod{4}$.
        For odd $d$ this results in the following
        \begin{equation}\label{eq:im_g_2_d}
            q(d) =
            \Im \left(g(2, d) \right) =
            \begin{cases}
                0 & \text{if } d \equiv 1, 5 \pmod{8} ,\\
                \sqrt{d} & \text{if } d \equiv 7 \pmod{8} ,\\
                -\sqrt{d} & \text{if } d \equiv 3 \pmod{8},
            \end{cases}
            \qquad (\text{for odd $d$}).
        \end{equation}
        For even $d$, the greatest common divisor is $\gcd(2, d) = 2$. 
        Using $\gcd$ reduction for even $d$ we get
        \begin{equation}
            q(d) = \Im \left(
                g(2, d)
            \right)
            = 2 \Im \left(
                g(1, d/2)
            \right).            
        \end{equation}
        Using Eq.~\eqref{eq:im_g_1_d}, for even $d$ we acquire
        \begin{equation}
            q(d) = \begin{cases}
                0 & \text{if } d \equiv 2, 4 \pmod{8}, \\
                \sqrt{2d} & \text{if } d \equiv 0, 6 \pmod{8},
            \end{cases}
            \qquad (\text{for even $d$}).
        \end{equation}
        Putting together the even and odd cases, we get Eq.~\eqref{eq:sin-sum}.

        \smallskip

        \textbf{The cosine double-sum (Eq.~\eqref{eq:cos-sum}).}
        The double-sum can be written as the real part of an exponential sum:
        \begin{equation}
            \sum_{j,k=0}^{d-1}\cos\!\left(\frac{8\pi j k}{d}\right)
            =\Re\!\left(\sum_{j,k=0}^{d-1} e^{2\pi i \frac{4 j k}{d}}\right)
            =\Re\!\left(\sum_{j=0}^{d-1}\sum_{k=0}^{d-1} e^{2\pi i \frac{(4j)k}{d}}\right).
        \end{equation}
        For fixed $j$, the inner sum $\sum_{k=0}^{d-1} e^{2\pi i \frac{(4j)k}{d}} = d$ if $4j \equiv 0\pmod{d}$ and equals $0$ otherwise (geometric series). Thus the whole double sum equals
        \begin{equation}
            d\cdot\#\{j\in\{0,\dots,d-1\}: d\mid 4j\}.
        \end{equation}
        The congruence $d\mid 4j$ has exactly $\gcd(4,d)$ solutions modulo $d$ (indeed $4j\equiv 0\pmod{d}$ if and only if $j$ is a multiple of $d/\gcd(4,d)$, and there are $\gcd(4,d)$ residue classes of that form). This yields Eq.~\eqref{eq:cos-sum}.
    \end{proof}

    The preceding two propositions serve as a preliminary introduction to this Appendix. Their purpose is to prepare the ground for the proof of Proposition~\ref{prop:trace_of_P1_rotatedP1}, which is the main objective of this section. We now present the proof of Proposition~\ref{prop:trace_of_P1_rotatedP1}:
    \begin{proof}[Proof of Proposition~\ref{prop:trace_of_P1_rotatedP1}]
        The sum can be split into the diagonal values and the off-diagonal ones.
        \begin{align}
            \sum_{j,k=0}^{d-1}
            \left|\left\langle
                j
                \middle|
                Q(t)
                \middle|
                k
            \right\rangle\right|^4
            &=
            \sum_{j,k=0}^{d-1}
                (Q(t)_{jk})^4
            \\&=
            \sum_{j}^{d-1}
                (Q(t)_{jj})^4
            +
            \sum_{\substack{j,k=0\\j\neq k}}^{d-1}
                (Q(t)_{jk})^4.
        \end{align}
        The elements of the matrix $Q(t)$ are
        \begin{equation}
            Q(t)_{jk}
            =
            \delta_{jk}
            \cos(t)
            +
            \frac{i}{\sqrt{d}}
            \sin(t)
            \left(
                \cos(\tau_{jk})+\sin(\tau_{jk})
            \right)
            ,
        \end{equation}
        where $\tau_{jk} = 2\pi jk/d$.
        The sum of all elements can be split into two different sums corresponding to the diagonal elements and the others.
        \begin{align}\label{eq:sum_qt_jk}
            \sum_{j,k=0}^{d-1}
            \left|
                Q(t)_{jk}
            \right|^4
            &=
            \sum_{j=0}^{d-1}
            \left|
                Q(t)_{jj}
            \right|^4
            +
            \sum_{\substack{j,k=0\\j \neq k}}^{d-1}
            \left|
                Q(t)_{jk}
            \right|^4.
        \end{align}
        First, the first (diagonal) sum is expressed as
        \begin{align}
            \sum_{j=0}^{d-1}
            \left|
                Q(t)_{jj}
            \right|^4
            &=
            \sum_{j=0}^{d-1}
            \left(
                \cos(t)^2
                +
                \frac{1}{d}
                \sin(t)^2
                \left(
                    \cos(\tau_{jj})+\sin(\tau_{jj})
                \right)^2
            \right)^2
            \\&=
            \sum_{j=0}^{d-1}
            \left(
                1
                +
                \sin(t)^2
                \overbrace{
                \left(
                    \frac{1}{d}
                    \left(
                        \cos(\tau_{jj})+\sin(\tau_{jj})
                    \right)^2
                    -1
                \right)
                }^{\Theta_j}
            \right)^2
            \\&=
            d
            +
            2
            \sin(t)^2
            \sum_{j=0}^{d-1}
            \Theta_j
            +
            \sin(t)^4
            \sum_{j=0}^{d-1}
            \Theta_j^2.
        \end{align}
        The second sum can be expressed as
        \begin{align}
            \sum_{\substack{j,k=0\\j \neq k}}^{d-1}
            \left|
                Q(t)_{jk}
            \right|^4
            &=
            \frac{1}{d^2}
            \sin(t)^4
            \sum_{\substack{j,k=0\\j \neq k}}^{d-1}
            \left|
                \left(
                    \cos(\tau_{jk})+\sin(\tau_{jk})
                \right)
            \right|^4.
        \end{align}
        The Eq~\eqref{eq:sum_qt_jk} is a $2$-degree polynomial in $\sin(t)^2$ with coefficients $a(d), b(d)$ and $d$.
        The following equations are used in the calculations:
        \begin{subequations}
        \begin{align}
            (\cos(x) + \sin(x))^2 &=
            1 + \sin(2x),
            \\
            (\cos(x) + \sin(x))^4 &=
            \frac{3}{2}
            +
            2 \sin(2x)
            -
            \frac{1}{2}
            \cos(4x).
        \end{align}
        \end{subequations}
        The linear coefficient is
        \begin{equation}\begin{split}
            b(d) &=
            2
            \frac{1}{d}
            \sum_{j=0}^{d-1}
            \left(
                \cos(\tau_{jj})+\sin(\tau_{jj})
            \right)^2
            -
            2
            \sum_{j=0}^{d-1}
                1
            \\&=
            \frac{2}{d}
            \sum_{j=0}^{d-1}
            \left(
                1 + \sin(2\tau_{jj})
            \right)
            -
            2
            d
            \\&=
            2(1-d)
            +
            \frac{2}{d}
            \sum_{j=0}^{d-1}
            \sin(2\tau_{jj}).
        \end{split}\end{equation}
        
        The coefficient of $\sin(t)^4$ is
        \begin{equation}\begin{split}
            a(d) &=
            \frac{1}{d^2}
            \sum_{j=0}^{d-1}
            \left(
                \left(
                    \cos(\tau_{jj})
                    +
                    \sin(\tau_{jj})
                \right)^2
                -d
            \right)^2
            +
            \frac{1}{d^2}
            \sum_{\substack{j,k=0\\j \neq k}}^{d-1}
            \left(
                    \cos(\tau_{jk})+\sin(\tau_{jk})
            \right)^4
            \\&=
            \frac{1}{d^2}
            \sum_{\substack{j,k=0}}^{d-1}
            \left(
                    \cos(\tau_{jk})+\sin(\tau_{jk})
            \right)^4
            -
            \frac{2}{d}
            \sum_{j=0}^{d-1}
            \left(
                \cos(\tau_{jj})
                +
                \sin(\tau_{jj})
            \right)^2
            +
            d
            \\&=
            \frac{1}{d^2}
            \sum_{ \substack{j,k=0}}^{d-1}
            \left(
                \frac{3}{2}
                +
                2
                \sin(2\tau_{jk})
                -
                \frac{1}{2}
                \cos(4\tau_{jk})
            \right)
            -
            \frac{2}{d}
            \sum_{j=0}^{d-1}
            \left(
                1 + \sin(2\tau_{jj})
            \right)
            +
            d
            \\&=
            \frac{3}{2}
            +
            \frac{2}{d^2}
            \sum_{j,k=0}^{d-1}
            \sin(2\tau_{jk})
            -
            \frac{1}{2}
            \frac{1}{d^2}
            \sum_{j,k=0}^{d-1}
            \cos(4\tau_{jk})
            -
            2
            -
            \frac{2}{d}
            \sum_{j=0}^{d-1}
            \sin(2\tau_{jj})
            +
            d
            \\&=
            \frac{2d-1}{2}
            +
            \frac{2}{d^2}
            \overbrace{
                \sum_{j,k=0}^{d-1}
                \sin(2\tau_{jk})
            }^{0}
            -
            \frac{1}{2d^2}
            \sum_{j,k=0}^{d-1}
            \cos(4\tau_{jk})
            -
            \frac{2}{d}
            \sum_{j=0}^{d-1}
            \sin(2\tau_{jj}).
        \end{split}\end{equation}
    \end{proof}

    \begin{proposition}
        Let $d\ge 2$ and let $a(d), b(d)$ be the coefficients in the displayed
        equation from Proposition~\ref{prop:trace_of_P1_rotatedP1}:
        \begin{equation}
            \frac{2d}{d+1}=a(d)\sin^4 t + b(d)\sin^2 t + d,
        \end{equation}
        so that with $x=\sin^2(t)$ we obtain the quadratic equation
        \begin{equation}\label{eq:quad}
            f(x) = a(d)x^2+b(d)x+c(d)=0,
        \end{equation}
        where $c(d) \coloneqq d-\frac{2d}{d+1}$.
        Let $\Delta(d) \coloneqq b(d)^2-4 \: a(d) \: c(d)$.
        Then $\Delta(d) \ge 0$ and exactly one root of Eq.~\eqref{eq:quad} lies in the interval $(0,1)$.
        That root can be written in the form
        \begin{equation}\label{eq:root_xast}
            x^\ast=\frac{-b(d)-\sqrt{\Delta(d)}}{2a(d)},
        \end{equation}
        and it satisfies $0<x^\ast<1$; conversely, any admissible solution for $\sin^2(t)$ is $x^\ast$.
    \end{proposition}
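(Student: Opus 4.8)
The plan is to treat Eq.~\eqref{eq:quad} as a quadratic in $x=\sin^2 t$ and to locate its roots purely by a sign analysis at the two endpoints $x=0$ and $x=1$, which correspond to the two distinguished points $Q(0)=\mathbbm{1}$ and $Q(\pi/2)=iB$ of the path in Eq.~\eqref{eq:Q_parametrized}. First I would record the two boundary values. At $x=0$ one has $f(0)=c(d)=d-\tfrac{2d}{d+1}=\tfrac{d(d-1)}{d+1}$, which is strictly positive for every $d\ge 2$; geometrically this is just the statement that $\Tr(P_1P_1)=d$ exceeds the target $\tfrac{2d}{d+1}$. The value at $x=1$ is $f(1)=a(d)+b(d)+c(d)$, and the crucial observation is that the two $q(d)$-dependent terms in $a(d)$ and $b(d)$ cancel, so that $a(d)+b(d)+d=\tfrac{3}{2}-\tfrac{\gcd(4,d)}{2d}$ is completely explicit and independent of the Gauss sum $q(d)$. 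Hence $f(1)=\tfrac{3}{2}-\tfrac{\gcd(4,d)}{2d}-\tfrac{2d}{d+1}$, and a short check shows $f(1)<0$ for all $d\ge 2$: for $d\ge 4$ one already has $\tfrac{3}{2}<\tfrac{8}{5}\le\tfrac{2d}{d+1}$, while $d=2,3$ are verified by hand. Equivalently, this reproduces the design requirement that the endpoint matrix $B$ satisfy $\Tr(P_1^{B^{\otimes 2}}P_1)\le\tfrac{2d}{d+1}$.

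The second ingredient is the sign of the leading coefficient $a(d)$, which fixes the orientation of the parabola. Using $|q(d)|\le\sqrt{2d}$ from Eq.~\eqref{def:q_d} together with $\gcd(4,d)\le 4$, one bounds $\tfrac{2|q(d)|}{d}\le 2\sqrt{2/d}$ and $\tfrac{\gcd(4,d)}{2d}\le\tfrac12$, so that for $d\ge 4$ one gets $a(d)\ge\tfrac{2d-1}{2}-\sqrt{2}-\tfrac12>0$; the remaining cases $d=2,3$ follow from direct evaluation (e.g.\ $a(2)=1$, using $q(2)=0$). Thus $a(d)>0$ and the parabola opens upward for all admissible $d$.

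With these facts the conclusion is immediate. Since $f$ is continuous with $f(0)>0>f(1)$, the polynomial has a real root in $(0,1)$, so $\Delta(d)\ge 0$ (indeed $\Delta(d)>0$, since a strict sign change forbids a double root); an upward parabola with $f(0)>0$ and $f(1)<0$ places exactly one root $r_1$ in the open interval and the other root $r_2>1$, so the root in $(0,1)$ is the smaller one, namely $x^\ast=\tfrac{-b(d)-\sqrt{\Delta(d)}}{2a(d)}$, and $f(0)>0$ rules out $x^\ast=0$, giving $0<x^\ast<1$. For the converse, any $t$ solving the original equation has $\sin^2 t\in[0,1]$ and must satisfy the quadratic; since the only root in $[0,1]$ is $x^\ast$ (the other lies strictly beyond $1$ and neither root is $\le 0$), every admissible solution equals $x^\ast$.

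The main obstacle is genuinely bookkeeping rather than conceptual: the quantities $q(d)$ and $\gcd(4,d)$ are residue-class data that are not monotone in $d$, so one must resist arguing only ``for large $d$'' and instead secure uniform estimates. The decisive simplification that tames this is the cancellation of $q(d)$ in $f(1)$, which removes the oscillatory Gauss-sum contribution from the endpoint value entirely; after that, only the positivity of $a(d)$ still involves $q(d)$, and there the crude bound $|q(d)|\le\sqrt{2d}$ suffices for $d\ge 4$, with the two small cases checked explicitly.
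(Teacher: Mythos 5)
Your proof is correct, and its skeleton matches the paper's: both arguments evaluate the quadratic at the endpoints, showing $f(0)=c(d)=\frac{d(d-1)}{d+1}>0$ and, via the cancellation of the $\pm\frac{2q(d)}{d}$ terms, that $f(1)$ is explicit and strictly negative, then invoke the intermediate value theorem to obtain a root in $(0,1)$ and hence $\Delta(d)>0$. The one genuine difference is how that root is matched to the minus-sign formula $x^\ast=\frac{-b(d)-\sqrt{\Delta(d)}}{2a(d)}$. You settle the sign of the leading coefficient once and for all, proving $a(d)>0$ from $|q(d)|\le\sqrt{2d}$ and $\gcd(4,d)\le 4$ for $d\ge 4$ together with direct evaluation at $d=2,3$; the parabola then opens upward, its smaller root lies in $(0,1)$, the larger root exceeds $1$, and uniqueness plus the converse follow immediately. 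The paper never determines the sign of $a(d)$: it splits into the cases $a(d)>0$ and $a(d)<0$ and checks that in either case the root trapped in $(0,1)$ by the endpoint signs is the expression written with the minus sign (when $a(d)<0$, division by the negative $2a(d)$ makes that expression the \emph{larger} root, which is the one in the interval). The paper's case split is shorter and needs no estimate on the Gauss sum; your route costs a crude bound but yields the sharper structural fact $a(d)>0$, which renders the second case vacuous and makes the location of both roots transparent.

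A small point in your favor: your endpoint value $f(1)=\frac{3}{2}-\frac{\gcd(4,d)}{2d}-\frac{2d}{d+1}$ is the one that actually follows from the displayed coefficients $a(d)$ and $b(d)$; the paper's own computation of $a(d)+b(d)$ contains a slip, writing $-d\cdot\gcd(4,d)$ where $-\frac{\gcd(4,d)}{2d}$ is meant. The conclusion $f(1)<0$ survives either way, as your separate check of $d=2,3$ and $d\ge 4$ confirms.
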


    \begin{proof}
        We will prove three facts: (i) $f(0) = c(d)>0$, (ii) $f(1)=a(d)+b(d)+c(d)<0$, and (iii)
        $\Delta\ge0$. The facts (i)--(iii) together imply there is exactly one real root
        in $(0,1)$ and it is given by Eq.~\eqref{eq:root_xast}.
        
        (i) \emph{Positivity at $0$.}
        Since the constant term is $d$, we have
        \begin{equation}
        c(d) = d - \frac{2d}{d+1} = \frac{d(d-1)}{d+1} > 0\qquad(\text{for } d\ge2),
        \end{equation}
        hence $f(0)=c(d)>0$.
        
        (ii) \emph{Negativity at $1$.}
        Compute $f(1)=a(d)+b(d)+c(d)$. Using the given explicit forms of $a(d)$ and $b(d)$ we get that
        \begin{equation}
        a(d)+b(d) = -\frac{2d-3}{2} - d\cdot\gcd(4,d) + \frac{2}{d}q(d) - \frac{2}{d}q(d)
        = -\frac{2d-3}{2} - d\cdot\gcd(4,d),
        \end{equation}
        Hence, $f(1)$ is given by
        \begin{equation}\begin{split}
        f(1)&=a(d)+b(d)+c(d) = -\frac{2d-3}{2} - d\cdot\gcd(4,d) + \frac{d(d-1)}{d+1}.
        \\&=
        \frac{
            3-d -2d(d+1)\cdot \gcd(4,d)
        }{2(d+1)} < 0\qquad \text{if } d \ge 2,
        \end{split}\end{equation}
        since the numerator is less than $0$ for any $d \ge 1$.
        As $f(x)$ is a quadratic polynomial, it can only contain one or two roots in the open interval $(0, 1)$.
        
        \emph{(iii) Discriminant and uniqueness.}
        We have already shown that there is at least one real root in $(0,1)$; therefore
        $\Delta(d) > 0$ and there are real roots. Since $c(d)>0$, the product of the two
        roots is $\frac{c(d)}{a(d)}$. Two cases occur depending on the sign of $a(d)$:
        
        \begin{itemize}
        \item If $a(d)>0$, the parabola opens upwards. Then, the conditions $f(0)>0$ and $f(1)<0$
        imply that the smaller root (the one with the minus sign in the standard formula)
        lies in the open interval $(0,1)$. Concretely, the root in $(0, 1)$ has to be the one given by Eq.~\eqref{eq:root_xast}.
        
        \item If $a(d) < 0$, the parabola opens downward. In that case, the greater root has to be in the interval $(0, 1)$: the expression with the minus sign represents the greater root since $a(d) < 0$.
        Thus, the in-interval root can be written uniformly as
        \begin{equation}
            x^\ast=\frac{-b(d)-\sqrt{\Delta(d)}}{2a(d)},
        \end{equation}
        which coincides with Eq.~\eqref{eq:root_xast}.
        \end{itemize}
        
        Therefore, $t=\pm\arcsin(\sqrt{x^\ast})+\pi k$ with some $k\in \mathbb{Z}$.
    \end{proof}

\section{Recap of representation theory}\label{app:representation_theory}

The representation theory of finite and compact groups is a vast topic, and the interested reader is directed to \cite{FH,kirillov1976elements, goodman2009symmetry, christandl2006structurebipartitequantumstates} for more thorough introductions. In this section, only the basic notions are presented.

In quantum theory, we are usually interested in operations that can be described by unitary matrices. Thus, every representation we encounter in this paper is a unitary representation, i.e., a representation $\varPi:G\to \U(d)$.
The dimension of the representation is the dimension of the underlying vector space $\mathbb{C}^d$ on which these matrices act, i.e., $\dim \varPi=d$.

Two unitary representations are considered the same (i.e., unitarily equivalent) if there is a unitary basis transformation that links them. In general, a group can have many different representations. Sometimes, one talks about \textit{natural, standard} or \textit{defining} representations when the representation is based on some native properties of the group itself. Thus, in the case of matrix groups such as $\U(d)$, the defining representation is the representation when every unitary matrix in $\U(d)$ is represented by itself.

To get a handle on representations, we usually turn to representations that are, in some sense, indivisible into smaller ones:
\begin{definition}[Irreducible representation]
    A representation $\varPi:G\to \U(d)$ is called \textit{irreducible} (or irrep for short) if there is no non-trivial subspace in $\mathbb{C}^d$ such that every matrix $\varPi(g)$ leaves it invariant. That is, a non-trivial subspace $V\subset \mathbb{C}^{d}$ such that $\varPi(g)v\in V$ for all $g\in G$ and $v\in V$.
\end{definition}

A nice property of unitary representations is the following:
\begin{proposition}
Every unitary representation of any group $G$ is completely reducible. That is, every unitary representation $\varPi:G\to \U(d)$ can be written as the direct sum of irreducible representations. In such a direct sum decomposition we call the isotypic component of an irrep $\varPi_\gamma$ the sum of all subrepresentations isomorphic to that given irrep.
\end{proposition}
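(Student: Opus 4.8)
The plan is to prove complete reducibility by induction on the dimension $d = \dim \varPi$, with the crucial ingredient being that unitarity of the representation forces every invariant subspace to admit an invariant orthogonal complement. I would emphasize at the outset that $G$ is allowed to be an arbitrary group here, so we cannot average a projector over $G$ (there may be no Haar measure at our disposal); instead the whole argument rests on the single fact that each $\varPi(g)$ is a unitary operator on $\mathbb{C}^d$.

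First I would establish the key lemma: if $V \subseteq \mathbb{C}^d$ is a $\varPi$-invariant subspace, then so is its orthogonal complement $V^\perp$. To see this, take $w \in V^\perp$, $v \in V$, and any $g \in G$; using unitarity, $\langle \varPi(g) w, v \rangle = \langle w, \varPi(g)^\dagger v \rangle = \langle w, \varPi(g^{-1}) v \rangle$, which vanishes because $\varPi(g^{-1}) v \in V$ by invariance while $w \perp V$. Hence $\varPi(g) w \in V^\perp$ for every $g$, so $V^\perp$ is invariant.

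Then I would run the induction. The base case $d = 1$ is immediate, since a one-dimensional representation has no nontrivial invariant subspace and is therefore irreducible. For the inductive step, assume the claim holds for all unitary representations of dimension strictly less than $d$. If $\varPi$ is already irreducible, it is its own (trivial) direct sum of irreducibles. Otherwise there is a proper nonzero invariant subspace $V$; by the lemma, $\mathbb{C}^d = V \oplus V^\perp$ is an orthogonal decomposition into two invariant subspaces, each of strictly smaller dimension. Restricting $\varPi$ to each summand gives unitary representations $\varPi|_V$ and $\varPi|_{V^\perp}$, to which the induction hypothesis applies; concatenating their irreducible decompositions yields the desired decomposition of $\varPi$. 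The final sentence of the statement is simply the definition of the isotypic component as the sum of all subrepresentations isomorphic to a fixed irrep, so it requires no separate argument.

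There is no genuine obstacle in this proof; the only point deserving care — and the thing I would flag explicitly — is the temptation to argue via the usual averaging trick (symmetrizing a projector over the group, in the style of Maschke's theorem or Weyl's unitarian trick), which is unavailable for a general, possibly infinite, group $G$. Sidestepping that route and exploiting the invariant-complement property of unitary operators directly is exactly what lets the statement hold in the generality in which it is posed.
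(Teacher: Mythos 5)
Your proof is correct and complete. Note, however, that the paper itself gives no proof of this proposition: it appears in the background appendix as a standard fact, with the reader referred to the cited textbooks, so there is no in-paper argument to compare against. Your argument --- the lemma that unitarity makes the orthogonal complement of an invariant subspace invariant, followed by induction on dimension --- is the standard textbook proof and is exactly the right one for the generality claimed (an arbitrary, possibly infinite, group but a finite-dimensional unitary representation). Your explicit remark that the Maschke-style averaging trick is unavailable here, and that the invariant-complement property of unitary operators is what replaces it, is precisely the point that justifies stating the proposition for \emph{any} group $G$.
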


For example, if a unitary representation $\varPi:G\to\U(d)$ splits up into four irreps, out of which two are isomorphic, then we have three isotypic components:
\begin{equation}
\varPi\cong \varPi_1 \oplus \left( \varPi_2 \oplus \varPi_2\right) \oplus \varPi_3\text{.}
\end{equation}
This means, that the underlying vector space $\mathbb{C}^d$ splits into direct summands under the action of $\varPi$ as:
\begin{equation}
\mathbb{C}^d \cong \mathbb{C}^{d_1} \oplus \left(\mathbb{C}^{d_2} \oplus \mathbb{C}^{d_2}\right) \oplus \mathbb{C}^{d_3}\cong \mathbb{C}^{d_1} \oplus \left(\mathbb{C}^{2}\otimes \mathbb{C}^{d_2}\right) \oplus \mathbb{C}^{d_3}\text{,}
\end{equation}
where on the right-hand side of the above equation we have used a simple transformation into tensor products.

Given two representations, one may ask about matrices that link them in a structure preserving way similar to how a matrix between two vector spaces preserves linearity.

\begin{definition}[Intertwiner]
Let $\varPi_1:G\to \U(d_1)$ and $\varPi_2:G\to\U(d_2)$ be two representations of the group $G$. A matrix $A\in\mathbb{C}^{d_1\times d_2}$ is called an \textit{intertwiner} or \textit{$G$-homomorphism} if it has the following property for all group elements $g$:
\begin{equation}
    \varPi_1(g)A=A\varPi_2(g)\text{.}
\end{equation}
\end{definition}

There is a powerful lemma that gives the exact form of all such matrices in the case of irreducible representations:
\begin{lemma}[Schur's lemma]
Given two irreducible representations $\varPi_1:G\to\U(d_1)$ and $\varPi_2:G\to\U(d_2)$ of the group $G$, all the possible intertwiners between them are:
\begin{enumerate}
\item $\mathbb{0}$ if the two irreps are not equivalent,
\item $c\mathbb{1}$ for some $c\in\mathbb{C}$ if they are equivalent.
\end{enumerate}
\end{lemma}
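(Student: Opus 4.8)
The plan is to prove Schur's lemma by the classical kernel--image argument, extracting from the intertwining relation the invariant-subspace structure that irreducibility then collapses. First I would fix an intertwiner $A \in \mathbb{C}^{d_1 \times d_2}$ satisfying $\varPi_1(g) A = A \varPi_2(g)$ for all $g \in G$ and study its kernel and image. The kernel $\ker A \subseteq \mathbb{C}^{d_2}$ is $\varPi_2$-invariant: if $Av = 0$, then $A \varPi_2(g) v = \varPi_1(g) A v = 0$, so $\varPi_2(g)v \in \ker A$. Dually, $\operatorname{im} A \subseteq \mathbb{C}^{d_1}$ is $\varPi_1$-invariant, since $\varPi_1(g)(Av) = A(\varPi_2(g)v) \in \operatorname{im} A$. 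Because $\varPi_1$ and $\varPi_2$ are irreducible, each of these subspaces must be either trivial or the whole space.

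For the inequivalent case I would show that $A$ is forced to vanish. If $\ker A = \mathbb{C}^{d_2}$ then $A = \mathbb{0}$ immediately. Otherwise $\ker A = \{0\}$, so $A$ is injective and $\operatorname{im} A$ has dimension $d_2$; irreducibility of $\varPi_1$ then forces $\operatorname{im} A = \mathbb{C}^{d_1}$, whence $d_1 = d_2$ and $A$ is a bijection intertwining $\varPi_2$ with $\varPi_1$. This exhibits an equivalence of the two representations, contradicting the hypothesis. Hence $A = \mathbb{0}$ whenever the irreps are inequivalent.

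For the equivalent case I would first reduce to $\varPi_1 = \varPi_2 \eqqcolon \varPi$ by absorbing the equivalence into a unitary change of basis, so that an intertwiner is exactly an operator commuting with every $\varPi(g)$. This is where complex scalars enter: since $\mathbb{C}$ is algebraically closed, $A$ has an eigenvalue $\lambda \in \mathbb{C}$. The operator $A - \lambda\mathbb{1}$ still commutes with every $\varPi(g)$, so it is again an intertwiner of $\varPi$ with itself, but now it has nontrivial kernel, namely the $\lambda$-eigenspace. Applying the kernel-invariance argument to $A - \lambda\mathbb{1}$, irreducibility of $\varPi$ forces this kernel to be all of $\mathbb{C}^{d_1}$, so $A - \lambda\mathbb{1} = \mathbb{0}$ and $A = \lambda\mathbb{1}$ is a scalar, as claimed.

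The computations here are entirely routine, so rather than a genuine obstacle I would flag the single load-bearing point: the use of algebraic closure of $\mathbb{C}$ in the equivalent case. This is precisely what guarantees the eigenvalue $\lambda$ and hence the scalar form of the intertwiner; over a non-algebraically-closed field the conclusion can fail, so it is worth isolating this as the essential ingredient of the argument.
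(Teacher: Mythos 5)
Your proof is correct: it is the classical kernel--image argument (invariance of $\ker A$ and $\operatorname{im} A$, collapse by irreducibility, and the eigenvalue trick over $\mathbb{C}$ for the self-intertwiner case). Note, however, that the paper does not prove this lemma at all---it is stated without proof in the representation-theory recap of Appendix~\ref{app:representation_theory} as standard background---so there is no in-paper argument to compare against; yours is the standard textbook proof that the citation to \cite{FH} implicitly points to. One small refinement worth making: the paper defines equivalence of unitary representations as \emph{unitary} equivalence, whereas your invertible intertwiner $A$ in the inequivalent case only directly exhibits a linear equivalence. This is easily repaired and not circular: $A^\dagger A$ commutes with every $\varPi_2(g)$, so by your already-established scalar case (applied with $\varPi_1 = \varPi_2$) one has $A^\dagger A = c\mathbbm{1}$ with $c>0$, and $A/\sqrt{c}$ is then a unitary intertwiner, giving the contradiction in the paper's stronger sense of equivalence.
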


One way to construct new representations from existing ones is by taking the tensor product of two representations:
\begin{align}
\begin{split}
    \varPi_1\otimes \varPi_2:G&\to \U(d_1) \otimes \U(d_2)\text{,}\\
    g&\mapsto \varPi_1(g)\otimes \varPi_2(g)\text{.}
\end{split}
\end{align}
These representations are usually not irreducible even if $\varPi_1$ and $\varPi_2$ were irreps. The decomposition of such representations into irreps is governed by so-called fusion rules, which are connected to the group structure.

Let us continue our discussion with the representation theory of the unitary group $\U(d)$:
\begin{proposition}
Every irreducible representation of the unitary group $\U(d)$ that appears in the irrep decomposition of the $t$-fold tensor product of the defining representation can be labeled by a Young diagram of $t$ boxes with at most $d$ rows. These are known as tensor irreducible representations (tensor irreps).
\end{proposition}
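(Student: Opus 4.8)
The plan is to prove the statement via \emph{Schur--Weyl duality}. Write $V = \mathbb{C}^d$ and $W = V^{\otimes t}$, and observe that $W$ carries two commuting actions: the diagonal action of $\U(d)$ given by $U \mapsto U^{\otimes t}$, and the action of the symmetric group $S_t$ permuting the tensor factors, $\sigma \cdot (v_1 \otimes \cdots \otimes v_t) = v_{\sigma^{-1}(1)} \otimes \cdots \otimes v_{\sigma^{-1}(t)}$. That the two actions commute is immediate. I would let $\mathcal{A} \subseteq \operatorname{End}(W)$ denote the algebra spanned by $\{U^{\otimes t} : U \in \U(d)\}$ and let $\mathcal{B} \subseteq \operatorname{End}(W)$ denote the image of the group algebra $\mathbb{C}[S_t]$. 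Both are $*$-closed, hence semisimple, and the whole argument reduces to establishing that they are \emph{mutual commutants}, $\mathcal{A} = \mathcal{B}'$ and $\mathcal{B} = \mathcal{A}'$.

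The heart of the argument is the computation $\mathcal{B}' = \mathcal{A}$. Under the identification $\operatorname{End}(W) \cong \operatorname{End}(V)^{\otimes t}$, an operator commutes with the $S_t$-action precisely when it lies in the symmetric subspace of $\operatorname{End}(V)^{\otimes t}$. By the polarization identity, this symmetric subspace is linearly spanned by the pure powers $\{A^{\otimes t} : A \in \operatorname{End}(V)\}$. Since every $d \times d$ matrix is a complex linear combination of unitaries, the span of $\{A^{\otimes t}\}$ coincides with $\mathcal{A}$, giving $\mathcal{B}' = \mathcal{A}$. Taking commutants once more and invoking the double commutant theorem for finite-dimensional semisimple algebras then yields $\mathcal{A}' = \mathcal{B}$ together with the associated canonical decomposition
\begin{equation*}
    W \;\cong\; \bigoplus_{\lambda} \mathcal{H}_\lambda \otimes \mathcal{K}_\lambda,
\end{equation*}
in which $\mathcal{H}_\lambda$ ranges over the pairwise inequivalent irreducible $\U(d)$-modules occurring in $W$ and $\mathcal{K}_\lambda$ over the pairwise inequivalent irreducible $S_t$-modules, the two families being in canonical bijection.

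It then remains to attach the Young-diagram labels and to pin down the row constraint. The classical theory of Specht modules classifies the irreducible representations of $S_t$ by the partitions of $t$, equivalently by Young diagrams with $t$ boxes; through the bijection above this transfers the same labeling to the $\U(d)$-irreps $\mathcal{H}_\lambda$. To decide which diagrams actually occur, I would realize $\mathcal{H}_\lambda$ as the image of a Young symmetrizer $c_\lambda$ acting on $W$: if $\lambda$ has a column of length exceeding $d = \dim V$, then the antisymmetrization over that column forces $c_\lambda W = 0$, whereas for $\lambda$ with at most $d$ rows the image is nonzero (as confirmed, e.g., by the Weyl dimension formula). Hence $\mathcal{H}_\lambda \neq 0$ exactly for Young diagrams with at most $d$ rows, which is the claim.

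The main obstacle is the commutant computation of the second paragraph, namely showing that the commutant of the permutation action is \emph{exactly} the algebra generated by the diagonal unitaries. The polarization step (symmetric tensors are spanned by pure powers) and the remark that unitaries linearly span all of $\operatorname{End}(V)$ are the two ingredients that make this precise; everything downstream --- the double commutant theorem, the Specht-module classification, and the Young-symmetrizer vanishing criterion --- is standard once this identification is in hand.
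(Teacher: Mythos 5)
Your overall route---Schur--Weyl duality via commutants and the double commutant theorem, then Specht-module labels and the Young-symmetrizer criterion for the row bound---is the standard one; the paper itself does not prove this proposition (it appears as background in its representation-theory appendix, with citations to the literature), so the relevant comparison is against that standard argument. Most of your steps are correct: the commutant of the $S_t$-action on $\operatorname{End}(V)^{\otimes t}$ is exactly the symmetric subspace $\operatorname{Sym}^t(\operatorname{End}(V))$, polarization shows that this subspace is spanned by $\{A^{\otimes t} : A \in \operatorname{End}(V)\}$, and everything downstream of the mutual-commutant statement (the canonical pairing of irreps, the partition labels, and the vanishing of $c_\lambda$ on columns longer than $d$) is standard and correctly deployed.

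However, the step you yourself single out as the heart of the argument contains a genuine gap. You conclude $\operatorname{span}\{A^{\otimes t} : A \in \operatorname{End}(V)\} = \mathcal{A} \coloneqq \operatorname{span}\{U^{\otimes t} : U \in \U(d)\}$ ``since every $d\times d$ matrix is a complex linear combination of unitaries.'' That implication does not follow, because $A \mapsto A^{\otimes t}$ is not linear: writing $A = \sum_i c_i U_i$ gives
\begin{equation*}
    A^{\otimes t} \;=\; \sum_{i_1,\dots,i_t} c_{i_1}\cdots c_{i_t}\, U_{i_1} \otimes \cdots \otimes U_{i_t},
\end{equation*}
whose individual terms are mixed tensor products lying outside $\mathcal{A}$ (e.g.\ for $t=2$, the term $U \otimes \mathbbm{1}$ is not invariant under swapping the two $\operatorname{End}(V)$ slots, whereas every element of $\mathcal{A}$ is). Knowing that the \emph{sum} is a symmetric tensor does not place it in $\mathcal{A}$ unless one already knows that $\mathcal{A}$ equals the full symmetric subspace---which is precisely what is being proved, so the reasoning is circular. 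The inclusion $\operatorname{Sym}^t(\operatorname{End}(V)) \subseteq \mathcal{A}$ is true, but it needs a different ingredient: for instance, Zariski density of $\U(d)$ in $\operatorname{End}(V)$ (Weyl's unitarian trick)---any linear functional $\varphi$ vanishing on $\mathcal{A}$ yields a polynomial $A \mapsto \varphi(A^{\otimes t})$ vanishing on $\U(d)$, hence vanishing identically, so $\varphi$ kills every $A^{\otimes t}$---or, equivalently, a one-parameter-group argument differentiating $s \mapsto (Ue^{isH})^{\otimes t} \in \mathcal{A}$ with respect to Hermitian directions $H$ to generate symmetrized mixed products inside $\mathcal{A}$. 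With such a lemma supplied, the rest of your proof goes through as written.
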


A Young diagram of $t$ boxes and $k$ rows is a collection of $t$ boxes arranged in rows and columns that are left-justified such that the number of boxes in each row is non-increasing. For example, the following a Young diagram of $13$ boxes in $5$ rows:
\begin{equation}\ytableausetup{boxsize=1em}
\ydiagram{4,3,3,2,1}
\end{equation}
It turns out that the label of the trivial representation is a Young diagram with zero boxes, whereas the label of the defining representation is a Young diagram with one box.

The fusion rules of the unitary group are governed by the following theorem:
\begin{theorem}
Let $\varPi_{\mu}$ and $\varPi_{\nu}$ be tensor irreps of $\U(d)$ labeled by Young diagrams $\mu$ and $\nu$ with at most $d$ rows (their depth is at most $d$). Their tensor product decomposes as
\begin{equation}\label{eq:lr-formula}
\varPi_{\mu}\otimes\varPi_{\nu}\cong \bigoplus_{\substack{\lambda \\\mathrm{depth}(\lambda)\leq d}} c^{\lambda}_{\mu,\nu} \varPi_{\lambda}\text{,}
\end{equation}
where $c^{\lambda}_{\mu,\nu}$ are nonnegative integers called Littlewood-Richardson coefficients.
\end{theorem}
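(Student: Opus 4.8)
The plan is to reduce the statement to the classical Littlewood--Richardson rule for Schur polynomials by way of the Weyl character formula. First I would recall that the character of the tensor irrep $\varPi_\lambda$ of $\U(d)$, evaluated on a unitary $U$ whose eigenvalues are $x_1,\dots,x_d$, is the Schur polynomial $s_\lambda(x_1,\dots,x_d)$; this is the Weyl character formula specialized to $\U(d)$. Since the character of a tensor product is the pointwise product of characters, the character of $\varPi_\mu\otimes\varPi_\nu$ is simply $s_\mu(x)\,s_\nu(x)$. The whole problem is thus transported into the ring of symmetric polynomials in $d$ variables.

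Next I would invoke the Littlewood--Richardson rule for symmetric functions, $s_\mu s_\nu=\sum_\lambda c^\lambda_{\mu\nu}\,s_\lambda$, where the sum runs over partitions $\lambda$ with $|\lambda|=|\mu|+|\nu|$ and the $c^\lambda_{\mu\nu}$ are nonnegative integers. The subtlety introduced by working with only $d$ variables is that $s_\lambda(x_1,\dots,x_d)=0$ whenever $\lambda$ has more than $d$ rows, so all terms with $\mathrm{depth}(\lambda)>d$ drop out. This leaves $s_\mu s_\nu=\sum_{\mathrm{depth}(\lambda)\le d}c^\lambda_{\mu\nu}\,s_\lambda$ as an identity in $d$ variables, matching exactly the index set claimed in Eq.~\eqref{eq:lr-formula}.

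Finally I would transfer this character identity back to the level of representations. The Schur polynomials $\{\,s_\lambda : \mathrm{depth}(\lambda)\le d\,\}$ are linearly independent, and by the orthogonality of characters (Peter--Weyl) a finite-dimensional representation of the compact group $\U(d)$ is determined up to unitary equivalence by its character; hence the equality of characters forces the isomorphism $\varPi_\mu\otimes\varPi_\nu\cong\bigoplus_{\mathrm{depth}(\lambda)\le d}c^\lambda_{\mu\nu}\,\varPi_\lambda$, with multiplicities given precisely by the Littlewood--Richardson coefficients. That every summand is itself a tensor irrep requires no extra work: since $\varPi_\mu$ occurs in $\varPi_{\Yt{1}}^{\otimes|\mu|}$ and $\varPi_\nu$ in $\varPi_{\Yt{1}}^{\otimes|\nu|}$, the tensor product embeds in $\varPi_{\Yt{1}}^{\otimes(|\mu|+|\nu|)}$, so each irreducible constituent is labeled by a Young diagram by the preceding proposition, and nonnegativity and integrality of the multiplicities are then automatic from complete reducibility.

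The genuinely hard content is the Littlewood--Richardson rule itself --- in particular the combinatorial interpretation and nonnegativity of the $c^\lambda_{\mu\nu}$ --- which I would cite as classical rather than reprove; the standard routes are an induction built on the Pieri rule or a bijective argument through the Robinson--Schensted--Knuth correspondence. Everything else in the argument is the routine dictionary between characters and Schur polynomials together with the bookkeeping of the depth constraint, so in a representation-theory recap I would state the rule and reference it, spending my effort only on the specialization to $d$ variables.
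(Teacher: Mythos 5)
Your proposal is correct, but note that the paper itself offers no proof of this statement at all: it appears in the representation-theory recap appendix as classical background, with the remark immediately afterward that the coefficients ``are well-known in the literature or can be calculated using the Littlewood-Richardson rule.'' So the comparison here is between a citation and an actual argument. Your reduction is the standard and correct one: identify the character of $\varPi_\lambda$ with the Schur polynomial $s_\lambda(x_1,\dots,x_d)$ via the Weyl character formula, multiply characters to get $s_\mu s_\nu$, apply the symmetric-function Littlewood--Richardson rule, observe that $s_\lambda$ vanishes in $d$ variables when $\mathrm{depth}(\lambda)>d$ (which is exactly where the depth constraint in the statement comes from), and lift back to representations using linear independence of the surviving Schur polynomials together with the fact that characters determine finite-dimensional representations of a compact group up to equivalence. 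Your observation that every constituent is a tensor irrep because $\varPi_\mu\otimes\varPi_\nu$ embeds in $\varPi_{\Yt{1}}^{\otimes(|\mu|+|\nu|)}$ is also correct and closes the one loose end a purely character-level argument would leave. In effect, both you and the paper defer the genuinely hard combinatorial content --- the LR rule itself --- to the classical literature; what your write-up adds over the paper is the explicit dictionary showing why the representation-theoretic statement for $\U(d)$ follows from the symmetric-function identity, including the truncation to diagrams of depth at most $d$, which the paper's bare statement leaves implicit.
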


These coefficients are well-known in the literature or can be calculated using the Littlewood-Richardson rule. The two basic examples are the following:
\begin{align}
    \varPi_{\Yt{1}}^{\otimes 2}&\cong \varPi_{\Yt{1,1}}\oplus \varPi_{\Yt{2}}\text{,}\\
     \varPi_{\Yt{1}}^{\otimes 3}&\cong \varPi_{\Yt{1,1,1}} \oplus \varPi_{\Yt{2,1}}^{\oplus 2} \oplus \varPi_{\Yt{3}}\text{.}
\end{align}

The main topic of this article is unitary $t$-designs, that is, finite collection of unitaries that reproduce a $t$-fold average (or twirl) of the entire unitary group. The twirl operation, in general, means the following:
\begin{definition}
    Let $G$ be a compact group, and $\varPi:G\to\U(d)$ be a unitary representation. Then, we define the twirl operation on a matrix $M\in\mathbb{C}^{d\times d}$:
    \begin{equation}
        \mathcal{T}(M)\coloneqq \int_{G} \varPi(g) M\varPi(g)^{\dagger} \mathrm{d}g\text{,}
    \end{equation}
    where the integration is over the Haar measure of the compact group $G$.
\end{definition}

For example in Eqs.~\eqref{eq:t_design_with_weights} and~\eqref{eq:twirl_by_G}, the twirling happens with the $t$-fold tensor product of the defining representation, that is, $U^{\otimes t}$.

A nice property of the twirl operation is that for any input matrix $M$ the output $\mathcal{T}(M)$ is a matrix that commutes with every representant operator $\varPi(g)$, or, put differently, a special kind of intertwiner. In the cases of unitary $t$-designs, the output commutes with every unitary of the form $U^{\otimes t}$.

If the representation $\varPi$ is irreducible, the output matrix is proportional to the identity due to Schur's lemma. The proportionality constant is $\Tr(M)$. That is, we have a 1-design.

\subsection{Recap of character theory}\label{app:character_theory}
    For a given finite group $G$, the representation theory is well explicable: There are exactly as many inequivalent irreps of $G$ as the number of conjugacy classes in $G$.
    Based on this notion, any representation of $G$ can be described by the multiplicities of the specific irreps.
    Usually, it is hard to compute the irreducible decomposition of a specific representation of $G$, to handle this problem, character theory is used.
    The character can be defined for any representation of a group in the following way.
    In the following, we shall use $\varPi$ and $\rho$, respectively, to differentiate between representations of the unitary group $\U(d)$ and the finite group $G$.
    
    \begin{definition}
        The \textit{character} of a representation $\rho$, denoted by $\chi_\rho$, is defined as:
        \begin{equation}
            \chi_\rho(g) \coloneqq \operatorname{Tr}(\rho(g)), \quad g \in G.
        \end{equation}
    \end{definition}
    The character $\chi_\rho$ is a \textit{class function} on $G$, because it is constant in each conjugacy class due to the cyclicity of the operator trace.
    A character $\chi_\rho$ is \textit{irreducible} if the representation $\rho$ is irreducible. 

    Characters encode significant information about the representations of any group (even infinite) and are useful in distinguishing inequivalent representations or checking the irreducibility of a representation.
    The irreducible characters of a finite group $G$ form a \textit{character table} which encodes useful information about $G$ in a compact form.
    The character table consists of $n$ rows and $n$ columns where $n$ is the number of conjugacy classes of $G$.
    Each row corresponds to an irreducible character and each column to a conjugacy class. The elements are the values of a specific irreducible character on a specific conjugacy class.
    The character table also carries orthogonality properties.

    The \textit{inner product} of two class functions $\alpha, \beta$ of $G$ is defined as
    \begin{equation}
        \left\langle \alpha, \beta \right\rangle \coloneqq \frac{1}{\left| G \right|} \sum_{g \in G} \alpha(g) \overline{\beta(g)}.
    \end{equation}

    \begin{proposition}
        [Orthogonality relation 1.]
        For any finite group $G$ and its irreducible characters $\chi_i,\chi_j$ the following holds:
        \begin{equation}
            \left\langle \chi_i, \chi_j \right\rangle = 
                \left\{\begin{matrix}
                    0 \;&\text{if}\enspace i \neq j,
                \\
                    1 \;&\text{if}\enspace i = j.
                \end{matrix}\right.
        \end{equation}
    \end{proposition}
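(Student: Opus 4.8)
The plan is to obtain the character orthogonality from Schur's lemma applied to a suitable averaging (twirl) operator, passing through the Schur orthogonality relations for matrix elements. Let $\rho_i : G \to \U(d_i)$ and $\rho_j : G \to \U(d_j)$ be the unitary irreducible representations whose characters are $\chi_i$ and $\chi_j$. For an arbitrary linear map $M$ from $\mathbb{C}^{d_j}$ to $\mathbb{C}^{d_i}$, I would form the averaged operator
\[
A \coloneqq \frac{1}{|G|}\sum_{g\in G} \rho_i(g)\, M\, \rho_j(g)^{\dagger},
\]
and observe that it intertwines $\rho_i$ and $\rho_j$, i.e.\ $\rho_i(h)\,A = A\,\rho_j(h)$ for every $h \in G$. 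This is precisely the finite-group analogue of the twirl operation recalled earlier, now taken between two possibly different representations.

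Next I would invoke Schur's lemma. If $\rho_i \not\cong \rho_j$, then $A = \mathbbm{0}$ for every choice of $M$. If $\rho_i \cong \rho_j$ (so that $d_i = d_j$ and we may take the two representations equal to a common $\rho$ of dimension $d_i$), then $A = c\,\mathbbm{1}$ for a scalar $c$. To pin down $c$ I would take the trace: using cyclicity together with unitarity $\rho(g)^{\dagger}\rho(g) = \mathbbm{1}$, each summand satisfies $\Tr\!\big(\rho(g) M \rho(g)^{\dagger}\big) = \Tr(M)$, whence $\Tr(A) = \Tr(M)$; comparing with $\Tr(c\,\mathbbm{1}) = c\,d_i$ gives $c = \Tr(M)/d_i$.

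Then I would specialize $M$ to the matrix units $E_{ab}$ and read off the statement in components. Writing out the entries and using unitarity to replace $\overline{(\rho_j(g))_{qb}}$ by $(\rho_j(g)^{\dagger})_{bq}$ yields the Schur orthogonality relations for matrix elements,
\[
\frac{1}{|G|}\sum_{g} (\rho_i(g))_{pa}\,\overline{(\rho_j(g))_{qb}} = \begin{cases} 0 & \rho_i \not\cong \rho_j,\\[4pt] \dfrac{1}{d_i}\,\delta_{ab}\,\delta_{pq} & i = j.\end{cases}
\]
The character relation then follows by setting $a = p$ and $b = q$ and summing over $p$ and $q$, which turns the matrix entries into traces since $\sum_{p}(\rho_i(g))_{pp} = \chi_i(g)$. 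In the inequivalent case every summand vanishes, giving $\langle \chi_i, \chi_j\rangle = 0$; in the equivalent case the right-hand side collapses to $\sum_{p,q} \frac{1}{d_i}\,\delta_{pq} = \frac{d_i}{d_i} = 1$.

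The only genuinely delicate point is the equivalent case: determining the scalar $c$ correctly and identifying the equality ``$i = j$'' of characters with unitary equivalence of the underlying irreps. This identification is legitimate because equivalent irreps share the same character and, as recalled above, a finite group has exactly as many inequivalent irreducible characters as inequivalent irreps. Everything else is routine index bookkeeping, with unitarity of the representations doing the work of matching the averaging operator to the class-function inner product.
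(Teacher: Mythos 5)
Your proof is correct and complete: the averaging operator you build is exactly the finite-group twirl, the intertwining property and Schur's lemma give the dichotomy, the trace normalization $c = \Tr(M)/d_i$ is right, and specializing to matrix units followed by the index contraction $\sum_{p,q}\frac{1}{d_i}\delta_{pq}=1$ correctly recovers the character inner product; you also correctly note the (non-circular) fact that distinct characters force inequivalent irreps, since equivalence preserves traces. For comparison: the paper states this proposition without proof, as a standard fact in its character-theory recap appendix, so there is no in-paper argument to measure yours against — but your route via Schur's lemma and the matrix-element orthogonality relations is the canonical textbook proof, and it uses precisely the tools (Schur's lemma and the twirl/averaging operation) that the paper itself recaps in its representation-theory appendix.
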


    In our research, we investigate finite subgroups of the unitary group, with particular focus on their behavior under tensor powers. Given any finite group $G$ and a unitary representation $\rho$, the image of $\rho$ can be thought of as a subgroup of $\U(d)$. Therefore, in the following, we shall consider $G$ a subgroup of $\U(d)$ by $\rho$. Given the unitary group $\U(d)$, we consider the $t$-fold tensor power of its defining representation, and analyze how its irreducible components decompose when restricted to a finite subgroup $G < \U(d)$.
    For each irreducible component appearing in the tensor power representation (which are labeled by Young diagrams), we study the restriction to $G$ and determine its decomposition into irreps of $G$.
    To compute this decomposition, we utilize the character table of $G$ as well as its power map, which encodes how the powers of conjugacy classes map to other conjugacy classes. These tools allow us to determine precisely how a given representation of the full unitary group decomposes into irreducible components upon restriction to the finite subgroup.

    Let $G$ be a finite group with a representation $\rho$ in $d$ dimensions and $\gamma$ be a Young diagram with $t$ boxes in at most $d$ rows.
    The irreducible decomposition of the representation $\rho_\gamma$ of $G$ which is the restriction of $\varPi_\gamma$ to $\rho(G)$ can be calculated by taking the inner product of its character $\chi_\gamma(\cdot) = \operatorname{Tr}(\rho_\gamma(\cdot))$ with each irreducible character $\chi_j$.
    The inner product $\left\langle \chi_\gamma, \chi_j \right\rangle$ gives the multiplicity of the representation corresponding to $\chi_j$ in the irreducible decomposition of $\rho_\gamma$.
    The character $\chi_\gamma$ can be calculated as
    \begin{equation}
        \chi_\gamma (g) =\frac{1}{t!} \sum_{\pi \in \mathrm{S}_t} \gamma(\pi) \prod_{k=1}^{t} \chi_\rho(g^k)^{a_k(\pi)},
    \end{equation}
    where $\gamma(\pi)$ is the character of the irrep indexed by the Young diagram $\gamma$ of the $t$-degree symmetric group $\mathrm{S}_t$ evaluated at permutation $\pi$ and $a_k(\pi)$ is the number of cycles of length $k$ in $\pi$.
    This result is derived from Theorem 2.7.9 in Ref.~\cite{Lux_Pahlings_2010}.
    Using this equation, characters can be calculated using the GAP system for any specific group $G$ and the irreducibility of the representation $\rho_\gamma$ can be calculated.
    Thus, the conditions of Theorem~\ref{thm:creating_t_design_from_groups} can be checked and compatible groups $G_1,\ldots,G_\ell$ can be found.

\newpage

\bibliographystyle{quantum}
\bibliography{refs}

\end{document}